\newcommand\version{0}
\theoremstyle{plain}
\newtheorem{theorem}{Theorem}[section]
\newtheorem{proposition}[theorem]{Proposition}
\newtheorem{corollary}[theorem]{Corollary}
\newtheorem{lemma}[theorem]{Lemma}
\theoremstyle{definition}
\newtheorem{procedure}[theorem]{Procedure}
\newtheorem{construction}[theorem]{Construction}
\newtheorem{definition}[theorem]{Definition}
\newtheorem{remark}[theorem]{Remark}
\newtheorem{example}[theorem]{Example}
\def\Im{\ensuremath{\mathrm{Im}}}
\def\diag{\ensuremath{\mathrm{diag}}}
\def\End{\ensuremath{\mathrm{End}}}
\def\Tr{\ensuremath{\mathrm{Tr}}}
\def\Re{\ensuremath{\mathrm{Re}}}
\def\T{\ensuremath{\mathrm{T}}}
\newcounter{jamiecomment}
\newcommand\JVcomm[1]{\ensuremath{{}^{\color{red}\thejamiecomment}}\marginpar{\color{red}\tiny\raggedright \thejamiecomment: #1}\stepcounter{jamiecomment}}
\newcounter{dominiccomment}
\newcommand\DVcomm[1]{\ensuremath{{}^{\color{blue}\thedominiccomment}}\marginpar{\color{blue}\tiny\raggedright \thedominiccomment: #1}\stepcounter{dominiccomment}}
\newcommand\ignore[1]{}
\def\Z{\mathbb{Z}}
\def\R{\mathbb{R}}
\def\C{\mathbb{C}}
\def\SU{\ensuremath{\mathrm{SU}}}
\def\U{\ensuremath{\mathrm{U}}}
\def\SO{\ensuremath{\mathrm{SO}}}
\def\B{\ensuremath{\mathrm{B}}}
\newcommand{\ket}[1]{\,\left| #1 \right\rangle}
\def\cos{\mathrm{c}}
\def\sin{\mathrm{s}}
\def\tan{\mathrm{t}}
\newcommand{\finiteabstract}{\begin{abstract}
We present a new scheme for teleporting a quantum state between two parties whose local reference frames are misaligned by the action of a finite symmetry group. Unlike other proposals, our scheme requires the same amount of classical communication and entangled resources as conventional teleportation, does not reveal any reference frame information, and is robust against changes in reference frame alignment while the protocol is underway. The mathematical foundation of our scheme is a unitary error basis which is permuted up to a phase by the conjugation action of the group. We completely classify such unitary error bases for qubits, exhibit constructions in higher dimension, and provide a method for proving nonexistence in some cases.\end{abstract}}
\newcommand{\finiteacknowledgements}{We are grateful to Niel de Beaudrap, Simon Benjamin, Subhayan Moulik, Benjamin Musto, David Reutter, Isar Stubbe, Sean Tull and Linde Wester for useful discussions. We thank two anonymous referees for their detailed and helpful comments regarding the presentation of these results. We used the \emph{blochsphere} and \emph{solides-3d} \LaTeX\ packages.  The first author acknowledges support from the Engineering and Physical Sciences Research Council.}
\newcommand{\finitetitleetc}[1]{\ifnumcomp{#1}{=}{1}{\title{Perfect tight quantum teleportation without a shared reference frame}
\date{\today}
\author{Dominic Verdon}\email{dominic.verdon@cs.ox.ac.uk}
\author{Jamie Vicary}\email{jamie.vicary@cs.ox.ac.uk}
\affiliation{Department of Computer Science, Wolfson Building, University of Oxford, Parks Rd, Oxford OX1 3QD} \finiteabstract \maketitle}
{
\title{\vspace{-2.5cm}\bf \LARGE \mbox{Perfect tight quantum teleportation} \mbox{without a shared reference frame}}
\author{\vspace{-.70cm}
\\
Dominic Verdon\thanks{\texttt{dominic.verdon@cs.ox.ac.uk}} \ and Jamie Vicary\thanks{\texttt{jamie.vicary@cs.ox.ac.uk}}
\\
Department of Computer Science, University of Oxford\vspace{-5pt}}
\maketitle \vspace{-1cm} \finiteabstract}}
\newcommand{\finiteintroacknowledgements}[1]{\ifnumcomp{#1}{=}{1}{}{
\paragraph{Acknowledgements.}
\finiteacknowledgements}}
\newcommand{\finitepraacknowledgements}[1]
{\ifnumcomp{#1}{=}{1}{\acknowledgements{\finiteacknowledgements}}{}}
\newcommand{\biblstyle}[1]{\ifnumcomp{#1}{=}{1}{}{\bibliographystyle{plainurl}}}
\newcommand{\centeringforarxiv}[1]{\ifnumcomp{#1}{=}{1}{}{\centering}}
\begin{document}

\finitetitleetc{\version}

\section{Introduction}
\label{sec:introduction}

\paragraph{Motivation.}

It is now well recognized that a shared reference frame is an implicit assumption underlying the correct execution of many quantum protocols~\cite{Bartlett2007, Kitaev2004, Gisin2002, Verstraete2003, Gour2008, Enk2001}. As quantum communication finds its way into handheld devices~\cite{Wabnig2013, Duligall2006, Duligall2007} and into space~\cite{Ren2017, Yin2017, Bacsardi2013}, it is increasingly important to develop protocols robust against reference frame error for situations where alignment is difficult~\cite{Islam2014, Islam2016, Skotiniotis2012} or undesired~\cite{Bartlett2004, Ioannou2014}. Considerable progress has already been made in this regard for quantum key distribution~\cite{DAmbrosio2012, Zhang2014, Wang2015, Liang2014, Souza2008, Laing2010, Vallone2014}, and there is also a smaller body of work on quantum teleportation~\cite{ Chiribella2012,Marzolino2015,Marzolino2016} without a shared reference frame, which our results extend.

\paragraph{Main results.} We consider the problem of quantum teleportation between two parties whose local reference frames are misaligned, where the set of possible local reference frame transformations forms a finite group $G$ with a unitary representation $\rho:G \to \U(d)$ on the $d$-dimensional system to be teleported. (This is the first paper in a series; the second paper~\cite{VerdonInfinite} extends these results to the more common setting of infinite groups.) Success of the protocol is judged by a third-party observer who holds full reference frame information, and who must agree that the original state has been teleported perfectly up to a global phase.\footnote{This was called \textit{unspeakable quantum teleportation} by Chiribella et al~\cite{Chiribella2012}.} We present a teleportation scheme for certain $(G,\rho)$, where $G$ is finite, which is guaranteed to succeed regardless of the parties' reference frame configurations and which additionally satisfies the following properties.
\begin{itemize}
\item \label{property:tightness} \emph{Tightness}. The parties only require a  $d$-dimensional maximally entangled resource state, and only $2$ dits of classical information are communicated from Alice to Bob.
\item \label{property:dr}\emph{Dynamical robustness} (DR). The scheme is not affected by changes in reference frame alignment during transmission of the classical message from Alice to Bob.
\item \label{property:nl} \emph{No reference frame leakage} (NL). No information about either party's reference frame alignment is transmitted.\footnote{This has cryptographic significance in some scenarios~\cite{Ioannou2014, Bartlett2004, Kitaev2004}.}
\end{itemize}

Our scheme depends on the existence  of a \textit{$G$\-equivariant unitary error basis} for the representation $(G,\rho)$. We exhaustively classify these mathematical structures for two-dimensional representations, showing that they exist precisely when the image of the composite homomorphism $G \stackrel \rho \to \U(2) \stackrel q \to \SO(3)$ is isomorphic to 1, $\Z_2$, $\Z_3$, $\Z_4$, $D_2$, $D_3$, $D_4$, $A_4$ or $S_4$, where $q$ is the quotient taking a unitary to its corresponding Bloch sphere rotation. We also provide a construction for any permutation representation with dimension less than $5$, and show how to prove nonexistence in some cases.

Our results rely on a new idea regarding the classical communication part of the protocol: we suppose that the readings of the classical channel are \emph{themselves} interpreted with respect to the local reference frame. Mathematically, this corresponds to a nontrivial action of the group of reference frame transformations on the classical channel. Such classical channels have been called `unspeakable'~\cite{Peres2002}; we provide examples, and show how they  can be used to communicate the measurement result. An unspeakable classical channel is a powerful resource which could be used to execute a prior alignment step before the protocol begins, but we emphasize that it is \textit{not} being used in this way here; indeed, by the (NL) property, our protocol in fact transfers no information at all about either party's reference frame alignment, and makes use of the unspeakable channel in a nontrivial way.

We can give the following simple intuition for how our scheme works. Local reference frame misalignment can cause errors in the performance of the protocol, since Bob will perform correction operations with respect to his own frame, which need not be aligned with the frame in which Alice performed her measurement. But, since in our setting the misalignment also affects the classical channel, it can also cause errors in transmission of the classical measurement result; Bob may, in interpreting the channel reading with respect to his own frame, receive a different measurement value to that transmitted by Alice. In essence, our scheme is constructed so that these errors exactly cancel out. This intuition makes clear how the (DR) property is possible, since a change in local reference frame alignment also affects reception of the classical communication data, even if it takes place while that information is in transit. 

\paragraph{Related work.}
Chiribella et al.~\cite{Chiribella2012} considered teleportation with a speakable classical channel only, and showed that when the group $G$ of reference frame transformations is a continuous compact Lie group, perfect tight teleportation is impossible; this does not contradict our work, which uses an unspeakable classical channel and a finite group $G$. (Furthermore, as a consequence of our main results, we show that for finite $G$, perfect tight teleportation is indeed possible with a speakable classical channel in some restricted situations; see Corollary~\ref{cor:speakableonly} and Remark~\ref{rem:qubitspeakabletel}.)

Several other solutions for reference frame--independent teleportation for a finite group of reference frame transformations exist in the literature. These all involve establishment of a shared reference frame in some way: by using pre-shared entanglement~\cite{Chiribella2012}, sharing entanglement during the protocol~\cite{Kitaev2004}, or transmitting more complex resources~\cite[Section V.A]{Bartlett2007}. Unlike our scheme, these approaches work for arbitrary $(G,\rho$) where $G$ is finite. However, none of them have all the properties of tightness, dynamical robustness and no reference frame leakage, as our scheme does.

Quantum communication under collective noise corresponding to a finite group was considered by Skotiniotis et al.~\cite{Skotiniotis2013}. From the perspective of our discussion above, their protocol satisfies the (DR) and (NL) properties. However, it requires a quantum channel; it is not a teleportation protocol. Their token could be equally be transmitted using an unspeakable classical channel of the type we construct in Section~\ref{sec:unspchan}. However, we are not transmitting a token in their sense; in particular, the classical system we transmit need not carry a free and transitive action of $G$.

\paragraph{Criticism.} We can criticise our scheme as follows.
Firstly, as with the alternative solutions discussed above, it works only for finite $G$ (although we discuss a related scheme for the case of infinite $G$ in a successor article~\cite{VerdonInfinite}.) Secondly, it cannot be implemented for all scenarios $(G,\rho)$ with finite $G$, and, although we provide a range of constructions of equivariant unitary error bases, and completely characterise valid $(G,\rho)$ for qubit teleportation, we cannot give necessary and sufficient conditions for the applicability of our scheme in higher dimensions. Thirdly, to communicate the measurement result, we do not use an ordinary `speakable' classical channel, but rather an `unspeakable' classical channel; while we provide a number of examples of such channels, it is nevertheless clear that this novel aspect of our approach may raise technological barriers in an implementation.
Finally, up to a global phase, the system to be teleported and Bob's half of the entangled pair must carry the same representation $\rho$ of $G$, and Alice's half of the entangled pair must transform according to the dual representation $\rho^*$; although this is physically reasonable in view of charge conservation, a situation may arise in which it is hard to construct a system carrying the representation $\rho^*$. Very often (for instance, for all representations with real characters), $\rho \simeq \rho^*$ up to a phase, which solves this problem.

\paragraph{Outlook.} These results may be applicable to cryptography and security of quantum protocols, as it has been noted that reference frame uncertainty is of cryptographic importance \cite{Ioannou2014, Bartlett2004, Kitaev2004}, and that a private shared reference frame may be considered as a secret key~\cite{Ioannou2014, Bartlett2004}. In this context, it is useful to know what protocols, such as quantum teleportation, may be performed even in the absence of a shared reference frame, without any transmission of cryptographically sensitive reference frame information. 

We can also build on these results to produce schemes for teleportation with a continuous compact Lie group of reference frame transformations. This is treated in a forthcoming paper~\cite{VerdonInfinite}.

\paragraph{Outline.} In Section~\ref{sec:rfindeptel} we present our scheme for reference frame--independent teleportation, beginning with an informal example for a group of spatial reference frame transformations. 
Our scheme uses  an unspeakable classical channel carrying a certain action; in Section~\ref{sec:unspchan} we show how these may be constructed, and give several examples.
Finally, in Section~\ref{sec:equivuebs} we turn our attention to the problem of classifying and constructing equivariant unitary error bases, on which our scheme depends. 

\finiteintroacknowledgements{\version}

\section{Reference frame--independent teleportation}
\label{sec:rfindeptel}
\subsection{Example}
\label{sec:example}

\paragraph{Scenario.} 

Alice and Bob are quantum information theorists operating on spin-$\frac{1}{2}$ particles. They work in separate laboratories, which do not necessarily have the same orientation in space, and their task is to teleport a quantum state  without revealing their spatial orientations, either to each other or to any eavesdropper.
Their relative orientations are not completely unknown: \ignore{they are promised to lie within the subgroup $\Z_3 \subset \SO(3)$\JVcomm{Need to say something about embedding in $\U(1)$}, the group of rigid spatial rotations.} the rotation $g$ taking Alice's Cartesian frame onto Bob's is promised to lie within the subgroup $\Z_3 \subset \SO(3)$, the group of rigid spatial rotations through multiples of $2\pi/3$ radians around some axis. However, $g \in \mathbb{Z}_3$ is unknown. Let $a \in \mathbb{Z}_3$ be the transformation rotating the reference frame anticlockwise through $2\pi/3$ radians. We suppose that the action of $a$ affects the description of qubit states by the standard spin-1/2 representation:
\begin{equation}
\label{eq:exampleaction}
\rho(a) = \begin{pmatrix}
1 & 0 \\ 0 & e ^{2 \pi i/3}
\end{pmatrix}
\end{equation}
That is, a state which appears as $\ket{v}$ in frame configuration $f$ will appear as $\rho(a)\ket{v}$ in frame configuration $a \cdot f$.

Alice and Bob share the two-qubit entangled state
\begin{equation*}\ket \eta = \frac 1 {\sqrt 2} ( \ket {01} + \ket {10} ).\end{equation*} Note that this state is invariant up to a phase under the action~\eqref{eq:exampleaction} of a change in reference frame orientation, so the entanglement will not be degraded by changes in reference frame alignment following its initialisation.  All these aspects of the overall setup are common knowledge to both parties.

\paragraph{The conventional protocol.} A conventional quantum teleportation scheme~\cite{Werner2001} is presented in terms of a \emph{unitary error basis} (a family of unitary operators which form an orthogonal basis for the operator space under the trace inner product):
\begin{align}
\nonumber
U_0 &= \scriptsize \begin{pmatrix} 1 & 0 \\ 0 & e^{2 \pi i /3} \end{pmatrix}
&
U_2 &= \frac{1}{\sqrt{3}} \scriptsize\begin{pmatrix} 1 & \sqrt{2}e^{2 \pi i /3} \\ \sqrt 2 & e^{5 \pi i /3} \end{pmatrix}
\\[-9pt]
\label{eq:exampleueb}
\\[-5pt]
\nonumber
U_1 &= \frac 1 {\sqrt{3}} \scriptsize \begin{pmatrix} 1 & \sqrt{2}e^{4 \pi i/3}\\ \sqrt{2}e^{4 \pi i/3} & e^{5 \pi i/3} \end{pmatrix}
&
U_3 &= \frac {1} {\sqrt{3}} \scriptsize \begin{pmatrix} 1 & \sqrt{2} \\ \sqrt{2}e^{2\pi i/3} & e^{5 \pi i/3} \end{pmatrix}
\end{align}
The scheme proceeds as follows. Alice measures her initial system together with her half of the entangled state in a maximally-entangled orthonormal basis $\ket {\phi_i} = (\mathbbm 1 \otimes (  U_i X)^T) \ket \eta$, where $X$ is the Pauli $X$-matrix~\footnote{The Pauli $X$-matrix appears because of the choice of entangled state $\eta$.}, and communicates the result $i$  to Bob through an ordinary classical channel, which transmits the measurement result faithfully. Bob then applies the  correction $U_i$ to his half of the entangled state.

If the reference frames have the same alignment, the procedure will be successful. However, if the reference frames are misaligned by some nonidentity element $g \in \Z_3$, then, from the perspective of Alice's frame, Bob will not perform the intended correction $U_i$, but rather $\rho(g)^\dag U_i \rho(g)$. Assuming the uniform distribution over $\Z_3$, a simple calculation shows that an input pure state will emerge in a mixed state.

\paragraph{The new protocol.}  We now describe our reference frame--independent scheme. Before performing the protocol, Alice and Bob share the coordinates of four unit vectors $\{v_0,v_1,v_2,v_3\} \in \R^3$, which form a regular tetrahedron centred on the origin such that, under the reference frame transformation $a \in \Z_3 \subset \SO(3)$, the vectors are permuted as follows:
\begin{align}
\label{eq:arrowtransformation}
\hspace{-4pt}a \cdot v_0 &= v_0 & a \cdot v_1 &= v_2 & a \cdot v_2 &= v_3 & a \cdot v_3 &= v_1
\end{align}
For example, let $v_0 = \frac 1 {\sqrt 3} (\hat x + \hat y + \hat z)$, $v_1 = \frac 1 {\sqrt 3} (\hat x - \hat y - \hat z)$, $v_2 = \frac 1 {\sqrt 3} (-\hat x + \hat y - \hat z)$ and  $v_3 = \frac 1 {\sqrt 3} (-\hat x - \hat y + \hat z)$, and suppose that the generating element $a \in \Z_3$ acts as a right-handed rotation about the axis defined by $v_0$.

If Alice obtains measurement result $i$, she communicates this to Bob in the following way: she prepares a physical arrow, of the sort a medieval archer might use, arranges it to have the same orientation as the vector $v_i$, and then sends it directly to Bob by parallel transport along a known path. When the arrow is received, Bob observes its orientation in his own frame, correcting if necessary for the parallel transport map associated to the path, and matches this with one of the reference orientations $v_j \in \{v_0, v_1, v_2, v_3\}$; he thus obtains the message $j \in \{0,1,2,3\}$. He then performs the corresponding unitary correction. This procedure is illustrated in Figure~\ref{fig:exampletetrahedron}. \begin{figure}\centering \includegraphics{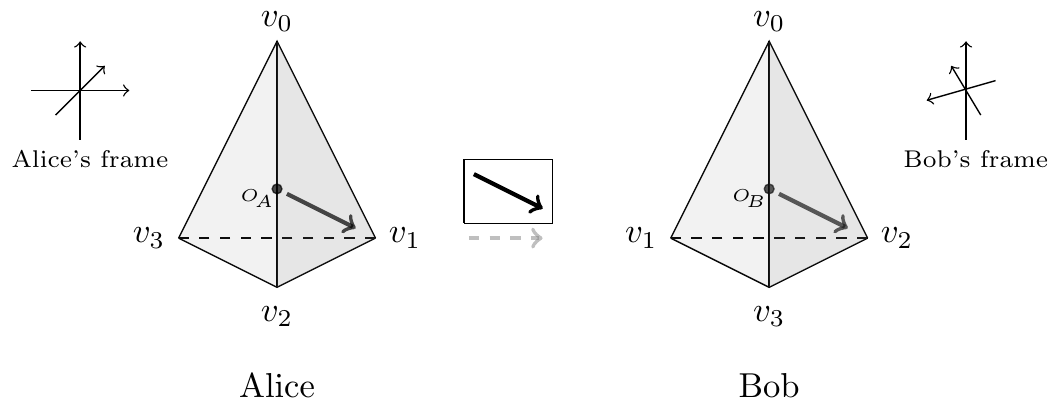}
\caption{\small In our classical communication procedure, Alice and Bob label the vertices of regular tetrahedra centred on their origins $O_A$ and $O_B$, using their own Cartesian frames. Bob's frame is related to Alice's by a $2\pi/3$ anticlockwise rotation around the axis defined by $v_0$. Upon measuring $\ket{\phi_1}$, Alice prepares an arrow pointing to vertex $v_1$ and sends this to Bob by parallel transport. In Bob's frame this arrow points to vertex $v_2$, and so he performs correction $U_2$.}\label{fig:exampletetrahedron}\end{figure} 

Note that Alice transmits no information about her local reference frame by the above procedure, since her measurement result is uniformly random, and thus so is the direction indicated by the arrow. Also, we emphasize that exactly  two bits of classical information have been transferred, since there were four possible values upon transmission and four possible values upon receipt.

Suppose that Alice and Bob's laboratories share the same reference frame; that is, their local frames are related by the element $e \in \Z_3$ of the group of reference frame transformations. Then the arrow's  orientation will be the same in Bob's frame as in Alice's frame, and the measurement outcome will be faithfully communicated. In this case the protocol will be successful, and it is identical to the conventional teleportation protocol, albeit with the two classical bits of information transmitted from Alice to Bob in an unusual way.

Now suppose that Alice and Bob's frames are misaligned by the action of the element $a \in \Z_3$ of the reference frame transformation group. In this case, if Alice sends the result 0, 1, 2, or 3, Bob will receive the result 0, 2, 3 or 1 respectively, because of the transformation properties~\eqref{eq:arrowtransformation} of the arrows. Furthermore, when Bob applies the unitary $U_i$ in his local frame, its action is seen in Alice's frame as $\rho(a) ^\dag U_i \rho(a)$. The following equations describe the consequences of such a conjugation, as can be directly checked using expressions~\eqref{eq:exampleaction} and~\eqref{eq:exampleueb}:
\begin{align*}
\rho(a)^\dag U_0 \rho(a) &= U_0
&
\rho(a)^\dag U_1 \rho(a) &= U_3
\\
\rho(a)^\dag U_2 \rho(a) &= U_1
&
\rho(a)^\dag U_3 \rho(a) &= U_2
\end{align*}
We now see the point of the entire construction: the unitary error basis~\eqref{eq:exampleueb} was carefully chosen so that these two apparent sources of error---in the transmission of the classical measurement result, and in Bob's unitary correction---exactly cancel each other out. For example, if Alice obtains measurement outcome~1, Bob will receive this as measurement outcome~$2$, and will perform the correction $U_2$ in his frame, which in Alice's frame is equal to $\rho(a)^\dag U_2 \rho(a) = U_1$, and so the intended correction will be carried out after all. As a result, the quantum teleportation will conclude successfully, even though Alice and Bob's reference frames were misaligned. Similarly, it can be shown that the teleportation is also successful if the frame misalignment is given by the element $a^2 \in \Z_3$.

\paragraph{Discussion.}

We have exhibited a procedure for reference frame--independent quantum teleportation in the particular case of spatial reference frame misalignment with transformation group $\Z_3 \subset \SO(3)$. This involved  a careful choice of unitary error basis~\eqref{eq:exampleueb}, with communication of the measurement result through a classical channel carrying a compatible nontrivial action~\eqref{eq:arrowtransformation} of the reference frame transformation group. Only 2 bits of classical information were transferred from Alice to Bob, as in a conventional teleportation procedure, and the Hilbert space of the entangled resource was of minimal dimension, so this procedure was \emph{tight} in the sense of Werner~\cite{Werner2001}. The unspeakable information transmitted by Alice was uniformly random, since Alice's measurement results were; in particular, Bob, or an eavesdropper on the classical channel, received no information about Alice's reference frame alignment. Finally, the procedure would have succeeded even if Bob's reference frame alignment changed during the protocol, while Alice's measurement result was still in transit.

In this example we chose $\Z_3 \subset \SO(3)$ as the reference frame transformation group, but the same unitary error basis and classical channel allow reference frame--independent teleportation for the group $A_4 \subset \SO(3)$ of order 12, as we will see in Section~\ref{sec:equivuebs}.

\subsection{General scheme}
\label{quantumteleportationsection}
We now present our scheme in full generality. We begin by recalling the conventional teleportation protocol.
\begin{procedure}[Conventional tight teleportation~\cite{Bennett1993}]\label{proc:oldschooltel}
Alice holds an $n$-dimensional quantum system, prepared in a state $\ket{\psi}$. Separately, Alice and Bob hold an entangled pair of $n$-dimensional quantum systems, in a maximally entangled state $(1 \otimes X)\ket{\eta}$ for some unitary $X$, where $$\ket \eta = \frac 1 {\sqrt n} \sum_{i=1}^n \ket {ii}$$ is the generalised Bell state.\footnote{All maximally entangled states of a bipartite system are of this form.} Alice performs a joint measurement on the system to be teleported and her entangled system, described by an orthonormal basis $\ket {\phi_i} \in \C^n \otimes \C^n$. She communicates the classical measurement result $i$ to Bob using a perfect classical channel; Bob then performs the unitary correction $U_i$ on his half of the entangled state. The procedure is successful if Bob's system is now in the state $\ket{\psi}$.
\end{procedure}

A complete description of correct procedures was given by Werner, who showed that they can be characterized mathematically in terms of unitary error bases.
\begin{definition}
\label{def:ueb}
For a Hilbert space $H$, a \textit{unitary error basis} (UEB) is a basis of unitary operators $\{U_i\}_{i \in I}$, with $I= \{0,1, \dots,\dim(H)^2-1\}$, such that for all $i,j \in I$ we have:
\begin{equation}
\Tr \big( U_i ^\dag U_j ^{} \big) = \delta_{ij}\dim(H)
\end{equation}
\end{definition}

\noindent
Under this correspondence, we construct Alice's joint measurement basis as 
\begin{equation}\label{eq:alicemeasbasis}\ket {\phi_i} := (\mathbbm 1 \otimes X^\T U_i^\T) \ket \eta,
\end{equation}
and Bob performs the correction $U_i$ from the unitary error basis when he receives the measurement result $i$ from Alice. 
Werner showed~\cite[Theorem 1]{Werner2001} that all correct measurement and correction data for Procedure~\ref{proc:oldschooltel} can be obtained from a unitary error basis in this way.

A second key concept in our new scheme is that of an \emph{unspeakable classical channel}. For simplicity, we only consider perfect classical channels in this paper; whatever reading Alice sends through the channel will be received unaltered by Bob. However, his interpretation of this reading will be affected by his reference frame orientation.
\begin{definition}\label{def:unspeakablecomm}
For a finite group $G$, an \textit{unspeakable classical channel} is a classical channel whose set of messages carries a nontrivial action of the group $G$ of reference frame transformations.
\end{definition}

\noindent
Writing $I$ for the set of messages carried by the channel, we can encode the data of an unspeakable channel as a group action ${\sigma}: G \times I \to I$. For each reference frame transformation $g \in G$ taking Alice's frame onto Bob's frame, we obtain an invertible function ${\sigma}(g,-): I \to I$, which describes how a message input by Alice using her local frame is interpreted by Bob with respect to his local frame. Since this function is invertible, there is no loss of information; however, if the receiver of the message does not know $g \in G$, they will be unable to infer which message was actually input. The arrows channel of Section~\ref{sec:example} was an unspeakable classical channel; we will see more examples in Section~\ref{sec:unspchan}.

We now define our new teleportation scheme. Here we write $\rho^*$ for the dual representation of $\rho$. 
\begin{procedure}[Reference frame--independent teleportation]
\label{chargeteleportationopint}
Alice has an $n$-dimensional quantum system in a state $\ket{\psi}$. Separately, Alice and Bob hold a maximally entangled state $(\mathbbm{1} \otimes X)\ket{\eta}$ of a pair of $n$-dimensional quantum systems. They each possess local reference frames with transformation group $G$, acting unitarily by a representation $\rho$ on the system to be teleported, by a representation $\rho^* \otimes \theta_1$ on Alice's half of the entangled state, and by a representation $\rho \otimes \theta_2$ on Bob's half of the entangled state, where $\theta_1, \theta_2$ are any one-dimensional representations of $G$. 

Alice performs a joint measurement on the system to be teleported and her half of the entangled state, described by an orthonormal basis $\{\ket {\phi_i}\}, \ket {\phi_i} \in \C^n \otimes \C^n$. She uses a perfect unspeakable classical channel to communicate the classical measurement result $i$ to Bob, who receives the message ${\sigma}(g,i)$, where $g$ is the transformation taking Alice's local frame configuration upon transmission onto Bob's local frame configuration upon receipt. Bob then immediately performs a unitary correction $U_{{\sigma}(g,i)}$ on his half of the entangled state. 
\end{procedure}

\begin{remark}\label{rem:entangledstatealignment}
We prove in Appendix~\ref{sec:invariantstates} that the conditions on the possible representations carried by each system precisely imply that the maximally entangled state may always be taken to be $G$-invariant up to a phase, preventing degradation of entanglement by reference frame transformations.
\ignore{As stated in the introduction,  if the representation $\rho$ is self dual up to a phase, that is, $\rho \simeq \rho^{*} \otimes \theta$ for some one-dimensional representation $\theta$, one may use the same representation $\rho$ for all systems. This condition holds in particular for all finite group representations with real valued characters, including all representations of the symmetric group.}
\end{remark}
\noindent 
The measurement and correction operations for Procedure~\ref{chargeteleportationopint}, together with the action $\sigma$ on the unspeakable classical channel, are \emph{correct data} if, regardless of Alice and Bob's reference frame alignments, Bob's system ends in the state  $\ket{\psi} \in \C^n$, according to a third observer with a fixed frame who can see both laboratories.
\begin{definition}[$G$-equivariant unitary error basis]
For a finite group $G$, and a Hilbert space $H$ carrying a unitary action $\rho$ of $G$, an \textit{equivariant unitary error basis} for $(G,\rho)$ is a unitary error basis $\{U_i\}_{i \in I}$ for $H$ whose elements are permuted up to a phase\footnote{In an early version of this work~\cite{Verdon2017} we used the term \emph{$G$-equivariant} for the specific  situation where $\xi(i,g)=1$. Here we choose to make this more general definition, since it is more physically relevant.
} by the right conjugation action of~$G$. 
\end{definition}

\noindent
That is,  for all $i \in I$ and $g \in G$, and some family of phases $\xi(i,g) \in \C$, we have that $\xi(i,g) \rho(g)^\dag U_i\rho(g) \in \{U_i\}_{i \in I}$.
Ignoring the phases, we can encode the effect of this conjugation as a right group action $\tau:  I \times G \to  I$.
\noindent
We now show that the notion of $G$\-equivariant unitary error basis gives a precise mathematical characterization of correct data for Procedure~\ref{chargeteleportationopint}.
\begin{theorem}\label{unspeakableteleportationconditions}
All correct data for Procedure~\ref{chargeteleportationopint} can be obtained from an equivariant unitary error basis $\{U_i\}$ for $(G,\rho)$, with associated right action $\tau$. The measurement and correction operations are as in~\eqref{eq:alicemeasbasis}, and the unspeakable classical channel carries the action $\tau^{-1}:G \times I \to I$. 
\ignore{ Bob's corrections are unitaries from an equivariant unitary error basis $\{U_i\}_{i \in I}$ for $(G,\rho)$, Alice measures in the basis~\eqref{eq:alicemeasbasis}, and the $G$-action on the unspeakable classical channel is inverse to that on the unitary corrections; that is,
\begin{equation}
\label{eqn:channelinversetooperationaction}
\; \forall\, g \in G,
\sigma(g,-) = \tau^{-1}(g,-)
\end{equation}}
\end{theorem}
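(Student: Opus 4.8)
The plan is to characterize correct data for Procedure~\ref{chargeteleportationopint} by tracking the entire protocol in the reference frame of the third-party observer, and showing that the success condition---that Bob's output equals $\ket\psi$ for every reference frame alignment $g \in G$---forces exactly the structure of a $G$-equivariant unitary error basis together with the inverse channel action. First I would set up notation in the observer's frame: Alice measures in the basis $\{\ket{\phi_i}\}$, obtains outcome $i$ with the measurement projector $\ket{\phi_i}\bra{\phi_i}$, the classical channel delivers the message $\sigma(g,i)$ to Bob, and Bob applies some correction $V_{\sigma(g,i)}$ to his half of the entangled state. Crucially, since Bob performs his correction in his own frame, which the observer sees conjugated by $\rho(g)$, the effective operation seen by the observer is $\rho(g)^\dag V_{\sigma(g,i)} \rho(g)$ (up to the one-dimensional twist $\theta_2$, which contributes only a global phase and can be absorbed). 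The aim is to reduce the multi-frame success condition to a single algebraic identity.

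The key computational step is to invoke Werner's characterization (Theorem~1 of~\cite{Werner2001}, quoted around~\eqref{eq:alicemeasbasis}): for the \emph{conventional} protocol to succeed when frames are aligned ($g = e$), the measurement basis must be of the form $\ket{\phi_i} = (\mathbbm 1 \otimes X^\T U_i^\T)\ket\eta$ for a unitary error basis $\{U_i\}$, and Bob's aligned correction must be $U_i$. So I would first apply this at $g = e$ to conclude that $\{U_i\}$ is a UEB and that the measurement and aligned corrections take the stated form. The real content is then the misaligned case. Using the standard teleportation identity---that measuring $\ket{\phi_i}$ and applying a correction $W$ sends $\ket\psi$ to $W U_i^\dag \ket\psi$ (up to normalization and global phase, using the ``transpose trick'' $(\mathbbm 1 \otimes A)\ket\eta = (A^\T \otimes \mathbbm 1)\ket\eta$ for the Bell state)---I would compute that for general $g$ the observer sees Bob's system emerge in the state proportional to
\begin{equation*}
\rho(g)^\dag V_{\sigma(g,i)} \rho(g)\, U_i^\dag \ket\psi.
\end{equation*}
Success for all $\ket\psi$ and all $i$ then demands $\rho(g)^\dag V_{\sigma(g,i)} \rho(g) = U_i$ up to a phase, i.e. $V_{\sigma(g,i)} = \rho(g) U_i \rho(g)^\dag$ up to a phase.

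From here the logic runs in both directions. If $\{U_i\}$ is a $G$-equivariant UEB with right action $\tau$, then by definition $\xi(i,g)\rho(g)^\dag U_i \rho(g) = U_{\tau(i,g)}$, equivalently $\rho(g) U_i \rho(g)^\dag = \overline{\xi(i,g)}\, U_{\tau^{-1}(i,g)}$ where $\tau^{-1}(i,g) := \tau(i,g^{-1})$; choosing the channel action $\sigma(g,-) = \tau^{-1}(g,-)$ makes Bob's correction $V_{\sigma(g,i)} = U_{\tau^{-1}(i,g)}$ coincide (up to phase) with $\rho(g)U_i\rho(g)^\dag$, so the success identity holds for every $g$. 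Conversely, the success identity $\rho(g)^\dag V_{\sigma(g,i)}\rho(g) = U_i$ (up to phase) shows that each conjugate $\rho(g)^\dag U_j \rho(g)$ is a phase multiple of some basis element $U_i$, which is precisely the equivariance condition, and reading off the permutation forces $\sigma = \tau^{-1}$. The main obstacle I anticipate is bookkeeping the phases carefully---separating the harmless global phases (from $\theta_1,\theta_2$, from the ``up to a global phase'' success criterion, and from the $\xi(i,g)$) from the genuine permutation data $\tau$---and confirming via Remark~\ref{rem:entangledstatealignment} that the representation assignments $\rho^*\otimes\theta_1$ and $\rho\otimes\theta_2$ leave $\ket\eta$ invariant up to phase so that the entangled resource itself contributes no uncontrolled error. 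Establishing the correct direction of the action (left versus right, and the inversion $\tau \mapsto \tau^{-1}$) is where I would be most careful to match conventions.
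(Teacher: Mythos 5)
Your proposal is correct and follows essentially the same route as the paper's own proof: necessity via Werner's theorem applied at trivial misalignment $g=e$, then sufficiency by tracking Bob's correction as seen in a fixed frame, where the channel permutation $\sigma=\tau^{-1}$ and the conjugation permutation $\tau$ cancel (the paper states this in one line as $U_{\tau(\tau^{-1}(g,x),g)}=U_x$, working in Alice's frame rather than the observer's, which is equivalent). Your only blemish is the phase bookkeeping $\rho(g)U_i\rho(g)^\dag = \overline{\xi(i,g)}\,U_{\tau^{-1}(g,i)}$, where the correct phase index is $\xi(\tau^{-1}(g,i),g)$; this is immaterial since success is judged up to global phase, as you note.
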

\begin{proof}
We work in Alice's frame. Let Bob's misalignment with respect to this frame be $g \in G$. For sufficiency, suppose Alice measures $x \in I$; Bob then reads $\tau^{-1}(g,x)$ and performs the correction $$U_{\tau(\tau^{-1}(g,x),g)} = U_x,$$ as required. For necessity, note that the procedure must work for trivial misalignment $g = e$; therefore, by Werner's result~\cite[Theorem 1]{Werner2001}, Alice must perform  measurements corresponding to a unitary error basis, and Bob must perform the unitary correction $U_x$ in his own frame whenever he receives $x \in I$. The condition on the unspeakable channel is therefore clear.
\end{proof}
We say that an unspeakable classical channel is \emph{compatible} with an equivariant UEB when it carries the inverse action as in Theorem~\ref{unspeakableteleportationconditions}. We see that our scheme can be implemented for some representation $(G,\rho)$ if and only if there exists an equivariant UEB for $(G,\rho)$, and Alice and Bob have access to a compatible unspeakable classical channel. Before investigating these requirements, we draw a straightforward corollary from Theorem~\ref{unspeakableteleportationconditions}.

\begin{definition}[Orbit type]\label{orbittypedefinition}
For a $G$-equivariant unitary error basis $\{U_i\}_{i \in I}$, we  define its \emph{orbit type} as the multiset of sizes of each orbit in $I$ under the action \mbox{$\tau:I \times G \to I$}.
\end{definition}
\begin{corollary}\label{cor:speakableonly}
With only a speakable classical channel (that is, a channel carrying a trivial $G$-action), Procedure~\ref{chargeteleportationopint} succeeds for all frame alignments only if the action $\tau: I \times G \to I$ is trivial; that is,  the elements of the orbit type of the equivariant UEB are all $1$.
\end{corollary}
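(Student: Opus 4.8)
The plan is to specialise Theorem~\ref{unspeakableteleportationconditions} to the case of a speakable channel and extract the stated constraint. By Definition~\ref{def:unspeakablecomm}, a speakable channel is one carrying the trivial $G$-action; equivalently, its action $\sigma(g,-)$ is the identity function on $I$ for every $g \in G$. The central observation is that Theorem~\ref{unspeakableteleportationconditions} has already established that correct data forces the channel to carry the inverse action $\tau^{-1}$, so substituting the triviality condition on $\sigma$ into the correctness condition should immediately pin down $\tau$.

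First I would invoke Theorem~\ref{unspeakableteleportationconditions}: since Procedure~\ref{chargeteleportationopint} is assumed to succeed for all frame alignments, the measurement and correction operations must come from an equivariant UEB $\{U_i\}$ with associated right action $\tau$, and the channel must carry the action $\tau^{-1}:G \times I \to I$. Next I would impose that the channel is speakable, i.e.\ that $\tau^{-1}(g,-)$ is the identity map on $I$ for all $g \in G$. Since $\tau^{-1}(g,-)$ is the inverse of the bijection $\tau(-,g)$, this is equivalent to $\tau(x,g) = x$ for all $x \in I$ and all $g \in G$, so $\tau$ itself is the trivial action. Finally, I would translate this into the language of orbit type (Definition~\ref{orbittypedefinition}): the action $\tau$ being trivial means every $x \in I$ is fixed by every group element, so each orbit is a singleton, and hence every element of the orbit type equals $1$.

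The argument is essentially a direct substitution, so I do not anticipate a serious obstacle; the only point requiring a little care is the logical direction. The corollary is stated as a necessary condition (``succeeds $\Rightarrow$ trivial action''), so I must make sure I am reading off the \emph{necessity} half of Theorem~\ref{unspeakableteleportationconditions} rather than its sufficiency half. Concretely, the chain is: success for all alignments $\Rightarrow$ (by the necessity argument in Theorem~\ref{unspeakableteleportationconditions}) the channel carries $\tau^{-1}$ $\Rightarrow$ (by the speakability hypothesis) $\tau^{-1}$ is trivial $\Rightarrow$ $\tau$ is trivial $\Rightarrow$ all orbit sizes are $1$. One subtlety worth flagging explicitly is the quiet use of the fact that a group action is trivial precisely when its inverse action is trivial, which holds because $\tau(-,g)$ and $\tau^{-1}(g,-)$ are mutually inverse bijections of the finite set $I$; I would state this inversion step rather than leave it implicit.
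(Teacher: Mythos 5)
Your proof is correct and follows exactly the route the paper intends: it states the corollary without a written proof precisely because, as you show, it is an immediate specialisation of the necessity half of Theorem~\ref{unspeakableteleportationconditions} (the channel must carry $\tau^{-1}$, so speakability forces $\tau^{-1}$, hence $\tau$, to be trivial, i.e.\ all orbits are singletons). Your explicit flagging of the direction of implication and of the inversion step $\tau^{-1}$ trivial $\Leftrightarrow$ $\tau$ trivial is a reasonable elaboration of what the paper leaves implicit.
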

\noindent

\section{Unspeakable channels}\label{sec:unspchan}

In this section we address the physical requirement of our scheme, a compatible unspeakable classical channel for a given equivariant UEB. 

\subsection{Construction from quantum systems}
\label{sec:constructionfromquantumsystems}

We begin with a completely general method for constructing such a channel. When Alice performs the measurement on her two systems, they decohere in her measurement basis, and the joint system becomes a single classical object. Alice can transfer this directly to Bob, still in the eigenstate corresponding to her measurement result. Since the reference frame transformation is guaranteed to act as a permutation on measurement outcomes, Bob will also receive the system in an eigenstate, which he can can identify by performing the same measurement as Alice. Due to reference frame uncertainty, the result he receives may of course be different to that noted by Alice. The result is an unspeakable  classical channel. Since Bob both measures and performs the corresponding corrections in his own frame, the procedure will succeed for any reference frame misalignment.
\noindent

\subsection{Construction from shared classical system}
\label{sec:rfchannels}
In some physical situations, the method of Section~\ref{sec:constructionfromquantumsystems} involving transfer of the decohered quantum systems may be impractical. We now provide an alternative construction. The problem is the following: given the right action $\tau: I \times G \to I$ of a finite group on a finite index set, we must construct a compatible unspeakable classical channel $\Sigma$ whose set of messages $M_{\Sigma}$ can be identified with $I$, so that it carries the corresponding left action $\tau^{-1}: G \times I \to I$. 

Here we show how this can be done when $\tau^{-1}$ is a transitive action. This is sufficient since, if $\tau^{-1}$ is not transitive, $I$ will split into orbits under it, and the following procedure may be performed:
\begin{itemize}
\item After her measurement, Alice communicates the orbit $O\subset I$ of the index she measured, through a speakable channel.
\item She then communicates the precise measurement index $i \in O$ using an unspeakable classical channel with the set of messages $O$, carrying the restricted action $\tau^{-1}|_{O}:G \times O \to O$, which is transitive.
\end{itemize}
This procedure still leaks no reference frame information, since the orbit is communicated as speakable information and the outcomes within each orbit are equiprobable. It is still tight, since the classical channel distinguishes only $d^2$ possible messages, despite being split into speakable and unspeakable parts. It is still dynamically robust, since the orbit is unaffected by reference frame transformations.

We assume, therefore, that the action $\tau^{-1}$ is transitive. We can then characterise it further using the following well-known fact from group theory. Recall that the set of right cosets $\{Hg_i\}$ of a subgroup $H<G$ carries a canonical left action $ g\cdot (Hg_i) = Hg_i g^{-1}$; we write this left $G$-set as $G/H$.
\begin{lemma}
For any transitive left $G$-set $X$, there is a unique conjugacy class $C$ of subgroups of $G$ such that $X \simeq G/H$ iff $H \in C$.
\end{lemma}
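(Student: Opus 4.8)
The plan is to prove this standard orbit--stabilizer classification in two stages: first establishing that every transitive left $G$-set is isomorphic to some $G/H$, and then characterizing exactly which subgroups $H$ yield isomorphic $G$-sets, namely those within a single conjugacy class. Throughout, an isomorphism of $G$-sets means a $G$-equivariant bijection.

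First I would establish the existence of the coset model. Given a transitive left $G$-set $X$, pick any base point $x_0 \in X$ and let $H = \mathrm{Stab}(x_0) = \{h \in G : h \cdot x_0 = x_0\}$, which is a subgroup. I would define a map $f: G/H \to X$ by sending the coset of an element to its action on $x_0$; the care here is to match conventions, since the lemma uses \emph{right} cosets $Hg_i$ with the twisted left action $g \cdot (Hg_i) = Hg_i g^{-1}$. Concretely I would set $f(Hg_i) := g_i^{-1} \cdot x_0$ and check three things: that $f$ is well defined (if $Hg_i = Hg_j$ then $g_j g_i^{-1} \in H$, so $g_i^{-1} x_0 = g_j^{-1} x_0$), that $f$ is $G$-equivariant (it intertwines the twisted action on cosets with the given action on $X$), and that $f$ is a bijection (surjectivity uses transitivity of $X$, and injectivity uses the definition of $H$ as the stabilizer). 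This gives $X \simeq G/H$ for $H = \mathrm{Stab}(x_0)$.

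Next I would pin down the dependence on choices, which is the crux of the uniqueness statement. The subgroup $H$ depends on the chosen base point $x_0$, so I would show that different base points give conjugate stabilizers: if $x_1 = k \cdot x_0$ then $\mathrm{Stab}(x_1) = k\, \mathrm{Stab}(x_0)\, k^{-1}$, a direct computation. Since $X$ is transitive every point is of this form, so the set of stabilizers arising from $X$ is exactly one conjugacy class $C$. It then remains to prove the biconditional: $G/H \simeq G/H'$ as $G$-sets if and only if $H$ and $H'$ are conjugate. The forward direction follows because an isomorphism $G/H \to G/H'$ identifies stabilizers of corresponding points, and stabilizers in $G/H$ are precisely the conjugates of $H$; the reverse direction is the observation that if $H' = kHk^{-1}$ then right translation by $k$ descends to a well-defined $G$-set isomorphism between the coset spaces. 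Combining these, the conjugacy class $C$ is uniquely determined by $X$, and $X \simeq G/H$ holds precisely when $H \in C$.

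The main obstacle I anticipate is purely bookkeeping rather than conceptual: keeping the left/right and the twisted action conventions consistent. Because the lemma is phrased with right cosets carrying the inverse-twisted left action $g \cdot (Hg_i) = Hg_i g^{-1}$ (rather than the more usual left cosets $gH$ with action $g \cdot (g_i H) = (g g_i) H$), I must be careful that the map $f$ is built to respect \emph{this} action, which is why the inverse appears in the definition $f(Hg_i) = g_i^{-1} \cdot x_0$. Verifying equivariance and well-definedness against the twisted convention, and correctly tracking whether conjugation appears as $kHk^{-1}$ or $k^{-1}Hk$, is the only place where a sign-style error could creep in; everything else is the textbook orbit--stabilizer argument.
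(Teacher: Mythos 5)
Your proof is correct, and there is in fact nothing in the paper to compare it against: the authors state this lemma without proof, invoking it as a ``well-known fact from group theory,'' so your argument supplies exactly the standard orbit--stabilizer classification they are implicitly relying on. The core steps all check out under the paper's conventions: the map $f(Hg_i) = g_i^{-1}\cdot x_0$ is well defined, equivariant for the twisted action $g\cdot(Hg_i) = Hg_i g^{-1}$, and bijective; stabilizers at different base points are conjugate; and the stabilizer of the coset $Hg_i$ is $g_i^{-1}Hg_i$, so an isomorphism $G/H \simeq G/H'$ forces $H$ and $H'$ to be conjugate. One bookkeeping correction of precisely the kind you flagged: in the reverse direction, under the paper's right-coset convention the isomorphism for $H' = kHk^{-1}$ is induced by \emph{left} translation, $Hg \mapsto kHg = H'(kg)$, which is well defined and commutes with the twisted action; right translation $g \mapsto gk$ sends right cosets of $H$ to right cosets of $H$ and never changes the subgroup. (Your phrase ``right translation by $k$'' is the correct mechanism in the usual left-coset convention, where $gH \mapsto gk^{-1}H'$ does the job, so this is a convention flip rather than a conceptual error, but as stated it would not compile into a proof in the paper's setup.)
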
 
It follows that $\tau^{-1}$ is characterised up to isomorphism by its associated conjugacy class of subgroups. It also follows that any \emph{transitive} unspeakable classical channel $\Sigma$ (that is, any unspeakable classical channel whose set of messages $M_{\Sigma}$ is a transitive $G$-set) is characterised by its associated conjugacy class of subgroups $C_{\Sigma}$. Our problem can therefore be rephrased as follows: we need to construct a transitive unspeakable channel for which $C_{\Sigma} = C_{\tau^{-1}}$, so that $M_{\Sigma} \simeq G/H \simeq I$ as left $G$-sets. 

A key construction is the following, which allows us to group together messages in $M_{\Sigma}$ to create a new channel with a different associated conjugacy class.

\begin{construction}[Quotient channel]
Let $\Sigma$ be a transitive unspeakable classical channel with associated conjugacy class of subgroups $C_{\Sigma}$, and let $H_{\Sigma} \in C_{\Sigma}$. Fix an isomorphism $\alpha: M_{\Sigma} \simeq G/H_{\Sigma}$. Let $K$ be another subgroup such that $H_{\Sigma}< K <G$. 

We obtain a \emph{quotient channel} whose associated conjugacy class of subgroups has representative $K$, and whose messages are right cosets $K g$, transmitted as follows. In order to send a coset $K g$, Alice picks uniformly at random any element $x \in K/H_{\Sigma} \subset G/H_{\Sigma}$, and sends the message $\alpha^{-1}(xg) \in M_{\Sigma}$. Depending on his reference frame orientation, Bob receives some $y \in M_{\Sigma}$, such that $\alpha(y)$ lies in some right coset of $K/H_{\Sigma}$. He then uses the canonical isomorphism
$$
\frac{G/H_{\Sigma}}{K/H_{\Sigma}}
\simeq
G/K 
$$
to obtain a right coset of $K$ in $G$, which is the message he receives.
\end{construction}
\noindent
We obtain the following corollary. Recall the usual partial order on conjugacy classes of subgroups, where $C_1 < C_2$ iff $H_1 < H_2$ for some $H_1 \in C_1, H_2 \in C_2$.
\begin{corollary}\label{cor:canconstructcompat}
If we have access to a transitive unspeakable classical channel $\Sigma$  with associated conjugacy class of subgroups $C_{\Sigma}$, and $C_{\Sigma} < C_{\tau^{-1}}$, then we may construct a compatible channel for $\tau$.
\end{corollary}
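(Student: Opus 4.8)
The plan is to obtain the compatible channel as a single application of the Quotient channel construction, with the hypothesis $C_{\Sigma} < C_{\tau^{-1}}$ supplying exactly the subgroup containment that construction requires.

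First I would unwind the partial-order hypothesis. By the order on conjugacy classes recalled just before the statement, $C_{\Sigma} < C_{\tau^{-1}}$ means there are representatives $H_{\Sigma} \in C_{\Sigma}$ and $K \in C_{\tau^{-1}}$ with $H_{\Sigma} < K$. Since both classes are closed under conjugation, I can arrange these two particular representatives to sit in an honest containment $H_{\Sigma} < K$ simultaneously, rather than merely up to independent conjugation. If $K = G$ then $C_{\tau^{-1}}$ is the class of $G$, so $\tau^{-1}$ is the trivial (one-point) action and a speakable channel is already compatible; I would dispose of this degenerate case separately and assume $K < G$ from now on.

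Next I would feed this data into the Quotient channel construction. By the Lemma classifying transitive left $G$-sets, the given transitive channel $\Sigma$ admits an isomorphism $\alpha : M_{\Sigma} \simeq G/H_{\Sigma}$, because $H_{\Sigma}$ is a representative of $C_{\Sigma}$. Applying the construction to $\Sigma$, the representative $H_{\Sigma}$, the fixed $\alpha$, and the intermediate subgroup $H_{\Sigma} < K < G$ yields a quotient channel $\Sigma'$ whose associated conjugacy class of subgroups has representative $K$, so that $C_{\Sigma'} = C_{\tau^{-1}}$. Finally I would read off compatibility: applying the classification Lemma once more, $C_{\Sigma'} = C_{\tau^{-1}}$ forces $M_{\Sigma'} \simeq G/K \simeq I$ as left $G$-sets, where $I$ carries the action $\tau^{-1}$. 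Thus $\Sigma'$ is a transitive unspeakable channel carrying $\tau^{-1}$ on its message set, which is precisely the compatibility condition identified in Theorem~\ref{unspeakableteleportationconditions}.

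The substantive content is entirely absorbed into the Quotient channel construction, which I take as given; everything else is bookkeeping with the classification Lemma. The main obstacle I anticipate is therefore only the representative-matching step at the start---ensuring $H_{\Sigma} \in C_{\Sigma}$ and $K \in C_{\tau^{-1}}$ can be realised in a single containment $H_{\Sigma} < K$ rather than in two independently conjugated copies---together with the degenerate $K = G$ case. Both are settled by the definition of the partial order and a short separate argument, so I expect no serious difficulty.
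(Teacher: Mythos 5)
Your proposal is correct and is essentially the paper's own proof, which reads in full: take $H_{\tau^{-1}} \in C_{\tau^{-1}}$ and $H_{\Sigma} \in C_{\Sigma}$ with $H_{\Sigma} < H_{\tau^{-1}}$, and construct the quotient channel. Your anticipated obstacle of representative-matching is a non-issue---the stated definition of the partial order already supplies two representatives in a single honest containment---and your separate treatment of the degenerate case $K = G$ (a one-point orbit, where the speakable orbit label alone suffices) is harmless extra care that the paper leaves implicit.
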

\begin{proof}
Take $H_{\tau^{-1}} \in C_{\tau^{-1}}$, $H_{\Sigma} \in C_{\Sigma}$ such that $ H_{\Sigma} < H_{\tau^{-1}}$, and construct the quotient channel.
\end{proof}
\noindent
The trivial subgroup is the only member of its conjugacy class, which we call the \emph{trivial class}. The trivial class is the minimal element of the poset of conjugacy classes of subgroups. It follows that, from an transitive unspeakable channel $\Sigma$ whose associated conjugacy class of subgroups is the trivial class, we may construct a compatible channel for any transitive $\tau^{-1}$.

We now show how to use a shared classical system to construct an unspeakable classical channel with trivial associated conjugacy class.

\begin{definition}
A \emph{reference frame system} is a classical system  whose configuration is described according to a local reference frame, and whose set of configurations $C$ carries a free and transitive action of $G$.\end{definition}
\noindent 
The details of how this system is shared between Alice and Bob are abstracted away in this approach. The nomenclature is derived from the fact that Alice and Bob each possess physical systems serving as their local reference frames, on which the reference frame transformation group $G$ acts freely and transitively, by definition.

Alice and Bob will use their shared reference frame system to communicate messages. They associate each of the $|G|$ configurations of the system to an element of $G$ using a \emph{labelling}, which is a choice of isomorphism $l: C \to G$ depending on their local reference frame configurations. Once Alice fixes a labelling, she can communicate element $g \in G$ to Bob by preparing the system in the configuration associated to $g$ in her labelling. Bob will then interpret this configuration with respect to his own labelling. 

A labelling $l:C \to G$ is obtained by choosing a configuration $x_e$ such that $l(x_e) = e$;  the labelling is then fully determined by the equation $l (g\cdot x_e) = g l(x_e) = g$. Alice and Bob both agree on a way to pick $x_e$ based on their own local frame configuration; this is specified by a map $\epsilon: \mathcal{F} \to C$, where $\mathcal{F}$ is the space of local frame configurations and $\epsilon$ satisfies the naturality equation
$$
\epsilon(g\cdot f) = g \cdot \epsilon(f).
$$
We write $[l(x)]$ to refer to $x \in C$ when a labelling is fixed. Alice and Bob generally have different labellings $l_A$, $l_B$, so we write $[l_A(x)]_A$, $[l_B(x)]_B$ to refer to $x$ using their respective labellings. We obtain the following proposition.
\begin{proposition}\label{prop:readingtransfunderrfchange}A shared reference frame system gives rise to an transitive unspeakable classical channel whose associated conjugacy class of subgroups is trivial.
\end{proposition}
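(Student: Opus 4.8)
The plan is to compute directly the message action induced by the reference frame system and then recognise it as the left regular representation of $G$. First I would fix Alice's frame configuration $f_A \in \mathcal{F}$ and write $g \in G$ for the misalignment taking her frame onto Bob's, so that Bob's configuration is $f_B = g \cdot f_A$. Using the naturality equation satisfied by $\epsilon$, Alice's and Bob's chosen basepoints $x_e^A = \epsilon(f_A)$ and $x_e^B = \epsilon(f_B)$ are related by
$$
x_e^B = \epsilon(f_B) = \epsilon(g \cdot f_A) = g \cdot \epsilon(f_A) = g \cdot x_e^A,
$$
so that their two labellings $l_A, l_B : C \to G$ differ by a single known translation.

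Next I would trace one transmission. To communicate $h \in G$, Alice prepares the configuration $y$ with $l_A(y) = h$, which by definition of the labelling is the unique $y \in C$ with $y = h \cdot x_e^A$; existence and uniqueness follow from the $G$-action on $C$ being free and transitive. Bob receives the same physical configuration $y$ and decodes it with his own labelling, obtaining the element $l_B(y)$ determined by $y = l_B(y) \cdot x_e^B$. Substituting the relation above gives $h \cdot x_e^A = l_B(y) \cdot g \cdot x_e^A$, and cancelling $x_e^A$ by freeness yields $l_B(y) = h g^{-1}$. Hence the channel realises the action $\sigma(g,h) = h g^{-1}$ on the message set $M_\Sigma \simeq G$.

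It then remains to identify this $G$-set. I would verify that $\sigma$ is a genuine left action, since $\sigma(g_1, \sigma(g_2, h)) = h g_2^{-1} g_1^{-1} = h (g_1 g_2)^{-1} = \sigma(g_1 g_2, h)$, and that the inversion map $h \mapsto h^{-1}$ is a $G$-set isomorphism from $(G,\sigma)$ onto the ordinary left regular action $g \cdot h = gh$. The regular action is free and transitive, so every point stabiliser is the trivial subgroup $\{e\}$; by the preceding Lemma, the associated conjugacy class of subgroups is therefore the trivial class, and $M_\Sigma \simeq G/\{e\}$. This establishes simultaneously that the channel is transitive, that it is unspeakable (the action is nontrivial whenever $G \neq 1$), and that its associated conjugacy class is trivial, as claimed.

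The main obstacle is purely bookkeeping: correctly propagating the misalignment $g$ through the two independent labellings. The crucial ingredients are the naturality of $\epsilon$, which pins down $x_e^B = g \cdot x_e^A$, and the freeness of the $G$-action on $C$, which guarantees each labelling is a well-defined bijection and allows one to cancel $x_e^A$ to extract $l_B(y) = h g^{-1}$. Once the formula $\sigma(g,h) = h g^{-1}$ is in hand, the conclusion is immediate from the group-theoretic Lemma already established.
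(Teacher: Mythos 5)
Your proof is correct and follows essentially the same route as the paper's: using the naturality of $\epsilon$ to relate the two basepoints and the freeness of the action on $C$ to derive the channel action $\sigma(g,x) = xg^{-1}$, which is free and transitive, hence has trivial stabilizers. You spell out more details than the paper (verifying the action axioms and exhibiting the isomorphism with the regular action via inversion — though note that with the paper's convention $g \cdot (Hg_i) = Hg_ig^{-1}$, your $\sigma$ is already literally $G/\{e\}$), but the argument is the same.
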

\begin{proof}
From the above discussion,the labelling of the channel is defined as $[g]_A = g \cdot [e]_A$; we have $[e]_A = \epsilon(f_A)$, so $[g]_A = g \cdot \epsilon(f_A) = g \cdot \epsilon(  g_{AB}^{-1} \cdot f_B) =(g g_{AB}^{-1}) \cdot [e]_B = [g g_{AB}^{-1}]_B$.
The channel therefore carries the action 
$
\sigma(g,x) = x g^{-1},
$
and the result follows.
\end{proof}
\noindent
By Corollary~\ref{cor:canconstructcompat}, it is therefore possible to construct a compatible unspeakable channel for any equivariant unitary error basis using a shared reference frame system. We conclude this section by presenting two examples of shared reference frame systems.

\begin{example}[Particle in a box]\label{ex:rfchannelbox}
Suppose that the quantum systems used in the teleportation protocol are particles in cubic boxes. In order to describe states of and operations on these systems, it is necessary to decide which sides of the box are `up', `front' and `right'. Alice and Bob shared such a labelling when they created their entangled pair of boxes; since that time, however, the orientation, and therefore the labelling, of Bob's box may have altered. The choice of labelling can be seen as a reference frame, whose transformation group is the group of rigid rotations of a cube. One reference frame system here is a classical solid cube, with labelled sides, passed between parties; the map $\epsilon: \mathcal{F} \to C$ is defined by labelling the cube identically to the box containing the particle. This is illustrated in Figure~\ref{fig:boxchannel}.
\begin{figure}
\includegraphics{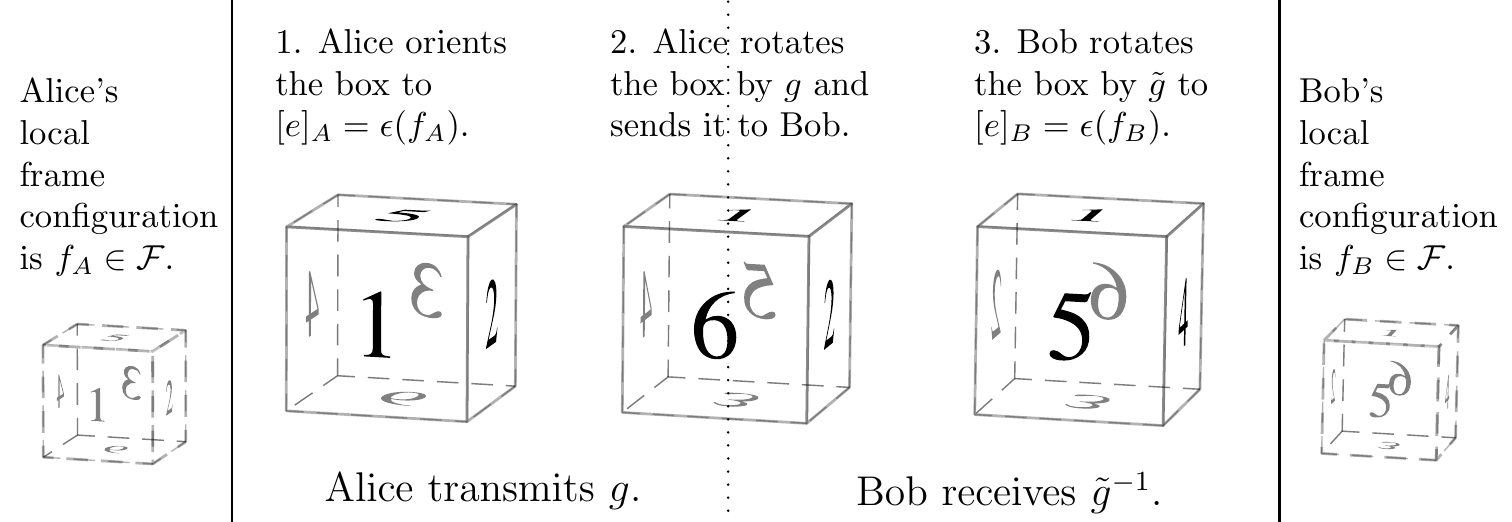}
\caption{The reference frame channel of Example~\ref{ex:rfchannelbox}, where $G$ is the group of rigid rotations of a cube. Here Alice transmits a $\pi/2$-rotation around the $x$ axis, and Bob receives a $\pi$-rotation around the $z$-axis.}
\label{fig:boxchannel}
\end{figure}
\end{example}

\begin{example}[Group of time translations]\label{ex:rfchanneltime} We suppose that the system to be teleported has a basis of energy eigenstates with different energy eigenvalues. Over the period $T$ of time evolution, these states will acquire a relative phase. In order to define states and operations, Alice and Bob must  choose a time $t_0$ at which the chosen basis vectors will have trivial phase. If we are promised that Alice and Bob's clocks are related by a time translation in a finite subgroup of $\U(1)$, then the choice of $t_0$ corresponds to a reference frame with cyclic transformation group. One reference frame system here is the time of arrival, modulo $T$, of a signal transmitted from Alice to Bob; the map $\epsilon: \mathcal{F} \to C$ is defined by the signal arriving at one's own time $t_0$. \end{example}

\section{Equivariant unitary error bases}
\label{sec:equivuebs}
We now turn to the classification and construction of equivariant unitary error bases, the mathematical basis for our scheme.

\subsection{Classification for qubits}\label{sec:qubitequebs}
We first fully classify equivariant UEBs for two-dimensional representations $(G, \rho)$. Let $q: \SU(2) \to \SO(3)$ be the quotient homomorphism taking a qubit unitary to its corresponding Bloch sphere rotation. Our results are outlined in the following theorem.
\begin{theorem}[Classification of equivariant UEBs for qubits]
\label{qubituebclassificationtheorem}%
The existence of unitary error bases of a given orbit type for a unitary representation $\rho: G \to \U(2)$ depends only on the isomorphism class of the image subgroup $q(\rho(G)) \subset \SO(3)$, according to the classification given in Table~\ref{qubitrepstable}.
\end{theorem}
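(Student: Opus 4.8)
The plan is to reformulate the entire question on the Bloch sphere, translating qubit unitary error bases and the equivariance condition into elementary data about the finite rotation group $\Gamma := q(\rho(G)) \subset \SO(3)$, and then to run a case analysis over the classified finite subgroups of $\SO(3)$. First I would set up the dictionary. Since conjugation by a unitary is insensitive to its global phase, the Bloch map $q:\U(2)\to\SO(3)$ descends through $\U(2)/\U(1)$, and for any phases the quantity $\Tr(U_i^\dagger U_j)$ depends only on $R_i^{-1}R_j$, where $R_i := q(U_i)$. Writing $U_i = e^{i\phi_i}V_i$ with $V_i\in\SU(2)$, the orthogonality condition $\Tr(U_i^\dagger U_j)=0$ for $i\neq j$ is equivalent to $V_i^\dagger V_j$ being traceless; a traceless special unitary has eigenvalues $\pm i$, hence squares to $-\mathbbm{1}$ and corresponds under $q$ to a rotation by $\pi$. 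A short computation with products of $\pi$-rotations (two such rotations about axes at angle $\alpha$ compose to a rotation by $2\alpha$) then shows that $\{R_0,R_1,R_2,R_3\}$ is a UEB precisely when it is a left coset $R_0 W$ of a Klein four-subgroup $W\subset\SO(3)$, namely the group generated by the $\pi$-rotations about an orthonormal frame; conversely, any such coset lifts, with arbitrary phases, to a genuine UEB. Under $q$ the conjugation action $U_i\mapsto\xi(i,g)\rho(g)^\dagger U_i\rho(g)$ becomes ordinary conjugation $R_i\mapsto q(\rho(g))^{-1}R_i\,q(\rho(g))$, and the phases $\xi(i,g)$ are then automatic, since two unitaries with the same Bloch rotation differ only by a phase. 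Hence equivariant UEBs up to phase correspond exactly to Klein four-cosets that are stable as a set under conjugation by $\Gamma$, and the orbit type of Definition~\ref{orbittypedefinition} is precisely the orbit type of this conjugation action on the four-element set.

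Second I would extract the structural constraint. The subgroup $W$ is recovered canonically from $S=R_0W$ as its set of left-differences $\{x^{-1}y : x,y\in S\}$; since conjugation carries the difference set of $S$ to that of $\gamma S\gamma^{-1}$, invariance of $S$ forces $\gamma W\gamma^{-1}=W$ for all $\gamma\in\Gamma$, i.e. $\Gamma\subseteq N(W)$. The normalizer of a Klein four-group in $\SO(3)$ is the rotation group permuting its three perpendicular axes, which is the octahedral group $\cong S_4$. Conversely, whenever $\Gamma\subseteq N(W)$ the coset $S=W$ itself (the case $R_0=\mathbbm{1}$) is conjugation-invariant, giving an equivariant UEB. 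Therefore an equivariant UEB exists if and only if $\Gamma$ is $\SO(3)$-conjugate into the octahedral group, which by the subgroup structure of $S_4$ happens exactly when $\Gamma$ is isomorphic to one of $1,\Z_2,\Z_3,\Z_4,D_2,D_3,D_4,A_4,S_4$; this simultaneously proves nonexistence whenever $\Gamma$ has an element of order $\geq 5$, is a larger dihedral group, or is $A_5$.

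For the dependence on isomorphism class only, I would invoke the classification of finite subgroups of $\SO(3)$: each isomorphism type among $\Z_n, D_n, A_4, S_4, A_5$ forms a single $\SO(3)$-conjugacy class, while the whole problem (existence and achievable orbit types) is manifestly invariant under simultaneously conjugating $\Gamma$ and $W$; thus the achievable orbit types are a function of the conjugacy class, hence of the isomorphism class, of $\Gamma$. It then remains to populate Table~\ref{qubitrepstable}: for each admissible $\Gamma\subseteq N(W)=S_4$, enumerate the conjugation-fixed cosets in $\SO(3)/W$ and, for each, read off the induced affine action of $\Gamma$ on $S\cong W\cong\Z_2\times\Z_2$ to obtain its orbit sizes. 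The distinguished coset $S=W$ always yields orbit type $\{1\}\sqcup(\text{orbit sizes of }\Gamma\text{ acting on the three axes})$, through the quotient $S_4\to S_3$, which already recovers orbit type $\{1,3\}$ for $\Z_3$ as in the worked example.

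The main obstacle is this final bookkeeping: showing that the cosets I exhibit exhaust all $\Gamma$-invariant cosets and that their orbit types are exactly those in the table, for each of the nine groups. Concretely this amounts to computing, for every $\Gamma$, the fixed points of conjugation on $\SO(3)/W$ (equivalently solving $\gamma R_0\gamma^{-1}\in R_0 W$ for all $\gamma\in\Gamma$) together with the resulting orbit decompositions, while keeping track of the mild subtlety that several subgroups of $S_4$ that are not $S_4$-conjugate (for instance face- versus edge-type $\Z_2$) do become $\SO(3)$-conjugate, and so must be checked to yield the same achievable orbit types. The translation of the first paragraph is what makes this tractable, reducing an analytic statement about unitaries and phases to a finite, purely group-theoretic enumeration.
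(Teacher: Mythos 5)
Your proposal is correct in its structural claims but takes a genuinely different route from the paper. The paper performs the same first reduction (equivariant UEBs correspond, with freely chosen phases, to equivariant orthogonal error bases of Bloch rotations), but then classifies equivariant OEBs by direct geometry in the $\SO(3)$-ball: orthogonality lemmas plus the composition-angle formula~\eqref{angleofcompositeeqn} yield, subgroup by subgroup, explicit trigonometric equations whose solution sets give the families and counts of Table~\ref{qubitrepstable}. You instead prove a structure theorem the paper never states: every qubit OEB is a left coset $R_0W$ of a Klein four-group $W$ of $\pi$-rotations about an orthonormal frame. I checked this: orthogonality to $R_0$ makes the differences $T_i=R_0^{-1}R_i$ $\pi$-rotations, mutual orthogonality forces their axes pairwise perpendicular (two $\pi$-rotations about axes at angle $\alpha$ compose to a rotation by $2\alpha$, so $\alpha=\pi/2$), and $T_1T_2$ is then the $\pi$-rotation about the unique common perpendicular, which must be $T_3$; the difference-set argument forcing $q(\rho(G))\subseteq N(W)\cong S_4$ is likewise sound, as is the converse via the invariant coset $S=W$. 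This buys a uniform, conceptual proof of the existence dichotomy --- equivariant UEBs exist iff the image group embeds in the octahedral group --- which in the paper is spread over several separate nonexistence arguments (Propositions~\ref{cyclicqubituebs} and following, for $\Z_n$, $D_n$ with $n\geq 5$, and the icosahedral case). What the paper's computational route delivers, and your sketch does not yet, is the quantitative content of Table~\ref{qubitrepstable}: the number of isolated solutions and the parameter counts of each family per orbit type. Two points in your final bookkeeping need care: first, for a fixed image group $\Gamma$ the enumeration must range over \emph{all} Klein groups $W$ normalized by $\Gamma$, and for small cyclic $\Gamma$ these form a continuous family (for $\Gamma=\Z_2$ both frames containing the rotation axis and ``tilted'' frames occur); this, together with positive-dimensional fixed-coset sets in $\SO(3)/W$, is exactly the source of the paper's 2-parameter families, so the remaining computation is low-dimensional geometry rather than the ``finite, purely group-theoretic enumeration'' you advertise. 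Second, your observation that the coset $S=W$ yields orbit type $(1)$ together with the axis orbits is correct and is visibly consistent with the $\Z_3$ solution containing the identity in Figure~\ref{fig:z331orbits}. With that bookkeeping executed, your argument proves the theorem, including the dependence on isomorphism class alone, via the standard fact (which you correctly invoke, and which the paper uses implicitly by choosing coordinates) that isomorphic finite subgroups of $\SO(3)$ are conjugate.
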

\begin{table}[t]
\caption{UEB families for qubit representations. }
\label{qubitrepstable}
\centeringforarxiv{\version}
\begin{tabular}{lll}
\hline
\hline
\bf Isom. class of $q(\rho(G))$ & \bf Orbit types and solutions, up to phase & \bf Further details\\
\hline\hline
Trivial & (1,1,1,1) - any UEB & N/A
\\\hline
$\mathbb{Z}_2$ & (1,1,1,1) - one 2-parameter family & Proposition~\ref{z2equivariantrotations}
\\
 & (2,1,1) - one 2-parameter family &
 \\
 & (2,2) - one 2-parameter family &
 \\\hline
$\mathbb{Z}_3$ & (3,1) - one 2-parameter family & Proposition~\ref{Z3equivariantrotations}
\\\hline
$\mathbb{Z}_4$ & (2,1,1) - one 2-parameter family & Proposition~\ref{Z4equivariantrotations}
\\\hline
$\mathbb{Z}_n, n\geq5$ & No solutions & N/A
\\\hline
$D_2$ & (1,1,1,1) - one isolated solution & Proposition~\ref{d2equivariantuebs}
\\&(2,1,1) - six isolated solutions &
\\&(2,2) - three isolated solutions &
\\&(4) - two isolated solutions &
\\\hline
$D_3$ & (3,1) - six isolated solutions & Proposition~\ref{d3equivariantuebs}
\\\hline
$D_4$ & (2,1,1) - two isolated solutions & Proposition~\ref{d4equivariantuebs}
\\& (2,2) - two isolated solutions &
\\\hline
$D_n, n \geq 5$ & No solutions & N/A
\\\hline
Tetrahedral ($A_4$)& (4) - two isolated solutions & Proposition~\ref{tetrahedraluebs}
\\\hline
Octahedral ($S_4$) & (1,3) - one isolated solution & Proposition~\ref{octahedraluebs}
\\\hline
Icosahedral ($A_5$) & No solutions & N/A
\\\hline\hline
\end{tabular}
\end{table}
\noindent
Whilst in Table~\ref{qubitrepstable} we have only given the orbit type of the UEBs, in the proof we give now we  we also describe the associated action $\tau: I \times G \to G$. Before beginning the proof, we make a quick remark.
\begin{remark}\label{rem:qubitspeakabletel}
By Corollary~\ref{cor:speakableonly}, tight qubit teleportation without an unspeakable classical channel is possible only when the image of the composite homomorphism $G \stackrel \rho \to \U(2) \stackrel q \to \SO(3)$ is isomorphic to 1, $\mathbb{Z}_2$ or $D_2$. 
\end{remark}

We begin by fixing some notation for rotations. Euler showed~\cite{Euler1776} that every rotation in $\SO(3)$ can be represented uniquely as a rotation through an angle $0 \leq \theta \leq \pi$ around a given normalised vector $\hat{n} \in \mathbb{R}^3$. We write a rotation through an angle $\theta$ around an axis $\hat{n}$ as $r(\theta,\hat{n})$.\footnote{Note that this notation is slightly redundant because rotations through an angle $\pi$ around antipodal $\hat{n}$ are identical, as are all rotations through an angle $0$.} Given two rotations $r(\theta_1,\hat{n}_1)$ and $r(\theta_2,\hat{n}_2)$, we write the angle and axis of the composite as $\theta_{12}$ and $\hat{n}_{12}$. For concision, we will occasionally write rotations simply as $r \in \SO(3)$, omitting to mention the axis and angle of rotation.

It is well known that unitary operations on a qubit correspond to rotations of the Bloch sphere together with a global phase~\cite[Exercise 4.8]{Nielsen2011}. It is easy to check that two unitaries $U_1,U_2$ are orthogonal iff their corresponding Bloch sphere rotations $q(U_1),q(U_2)$ are orthogonal in the following sense.
\begin{definition}\label{def:orthogrotations}
Two rotations $r_1,r_2 \in \SO(3)$ are \emph{orthogonal} if the composite $r_1^{-1} r_2$ is a rotation through the angle $\pi$.
\end{definition}
\noindent
The image of a UEB under the quotient $q$ will be a set of orthogonal rotations preserved under conjugation by the orthogonal rotations $q(\rho(g))$ for $g \in G$; this inspires the following definition.
\begin{definition}
\label{def:oeb}
An \textit{orthogonal error basis} (OEB) is a family $\mathcal O \subset \SO(n)$ of $n^2$ orthogonal rotations. For a finite group $G$ and a homomorphism $\rho: G \to \SO(n)$, an \textit{equivariant orthogonal error basis} for $(G,\rho)$ is an OEB $\mathcal O \subset \SO(n)$  preserved under conjugation by $\rho(g)$ for all $g\in G$.
\end{definition}
\ignore{An equivariant unitary error basis for $(G,\rho)$ will be mapped under $q$ to an equivariant orthogonal error basis for $(G, q \circ \rho)$; }
In the other direction, given an equivariant OEB for $(G,q \circ \rho)$, one may obtain all corresponding equivariant UEBs for $(G,\rho)$ by picking phases for each rotation.
\ignore{
\begin{proposition}\label{uebexistsiffprojexiststhm}
Let $\pi:G \to \U(2)$ be a representation. Then a $G$-equivariant UEB exists for $\pi$ if and only if a $G$-equivariant OEB exists for $q \circ \pi$.
\end{proposition}
\noindent
}
A classification of equivariant UEBs for subgroups $G \subset \U(2)$ is therefore equivalent to a classification of equivariant OEBs for subgroups $q(G) \subset \SO(3)$.
Note also that the action of $\rho(g)$ on the index set of a UEB is identical to the action of $q(\rho(g))$ on the index set of the corresponding OEB.
\begin{theorem}[{\cite[Theorem 19.2]{Armstrong1997}}]
The finite subgroups of $\SO(3)$ are as follows:
\begin{itemize}
\item cyclic groups $\Z_n$ for $n \geq 1$, generated by a rotation through $2\pi/n$ around a given axis;
\item dihedral groups $D_n$ for $n \geq 1$, generated by a rotation through $2\pi/n$ around a given axis and a $\pi$-rotation around a perpendicular axis;
\item the group of orientation-preserving symmetries of a regular tetrahedron, isomorphic to $A_4$;
\item the group of orientation-preserving symmetries of a regular octahedron (or a cube), isomorphic to $S_4$;
\item the group of orientation-preserving symmetries of a regular icosahedron, isomorphic to $A_5$.
\end{itemize}
\end{theorem}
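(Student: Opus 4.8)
The plan is to use the classical \emph{counting-poles} argument (an orbit-counting, Burnside-type method). Let $G \subset \SO(3)$ be finite with $|G| = N \geq 2$. Every non-identity $g \in G$ is a rotation about a unique axis, and so fixes exactly two antipodal points on the unit sphere $S^2$, called its \emph{poles}. Let $P \subset S^2$ be the set of all poles arising from non-identity elements. Since $hgh^{-1}$ fixes $h \cdot p$ whenever $g$ fixes $p$, the group $G$ acts on $P$. First I would double-count the set of pairs $(g,p)$ with $g \neq e$ and $g \cdot p = p$: summing over elements gives $2(N-1)$ since each non-identity rotation has two poles, while summing over poles gives $\sum_{p \in P}(|G_p| - 1)$, where $G_p$ is the stabilizer. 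Grouping $P$ into $G$-orbits $O_1, \dots, O_k$ with common stabilizer order $n_i$ on $O_i$ (constant since stabilizers within an orbit are conjugate) and using $|O_i| = N/n_i$, I obtain
\[
2(N-1) = \sum_{i=1}^{k} \frac{N}{n_i}(n_i - 1),
\]
which after dividing by $N$ yields the key identity
\[
2 - \frac{2}{N} = \sum_{i=1}^{k}\left(1 - \frac{1}{n_i}\right).
\]

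The second step is to solve this equation under the constraints $2 \leq n_i \leq N$ and $n_i \mid N$. Each summand lies in $[\tfrac12, 1)$ while the left-hand side lies in $[1,2)$, which forces $k \in \{2,3\}$. For $k=2$ the identity reduces to $1/n_1 + 1/n_2 = 2/N$, and since $n_i \leq N$ this forces $n_1 = n_2 = N$: two poles fixed by all of $G$, i.e. a single rotation axis, giving the cyclic group $\Z_N$. For $k=3$, ordering $n_1 \leq n_2 \leq n_3$, the identity becomes $1/n_1 + 1/n_2 + 1/n_3 = 1 + 2/N > 1$, whose only integer solutions are the family $(2,2,m)$ with $N = 2m$ and the three sporadic triples $(2,3,3)$, $(2,3,4)$, $(2,3,5)$ with $N = 12, 24, 60$. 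These match the orders of $D_m$, $A_4$, $S_4$ and $A_5$ respectively.

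The third step, which I expect to be the main obstacle, is to upgrade each arithmetic solution to an honest identification of the group up to conjugacy; the counting above produces only the order $N$ and the orbit/stabilizer data, not the group itself. For $(2,2,m)$ the orbit with stabilizer order $m$ gives two antipodal poles on a principal axis carrying a $\Z_m$, while the two remaining orbits supply the perpendicular $\pi$-rotation axes, and reconstructing the products shows $G \cong D_m$. For the three sporadic cases I would compute the orbit sizes $N/n_i$ and argue geometrically that the poles of the smallest orbit form the vertices of a regular tetrahedron, octahedron and icosahedron respectively (with the other orbits giving face-centres and edge-midpoints); since $G$ acts faithfully on these vertices and has exactly the order of the full rotation group, it must \emph{be} that rotation group, yielding $A_4$, $S_4$ and $A_5$.

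The delicate part is precisely this last reconstruction — turning pole-orbit combinatorics into a genuine isomorphism with the named symmetry group — whereas everything preceding it is elementary double-counting. Since the statement is quoted verbatim from \cite[Theorem 19.2]{Armstrong1997}, the cleanest route in the paper is to present the counting identity and the solution of the Diophantine equation explicitly, and then cite Armstrong for the geometric identification of the exceptional groups rather than redo the polyhedral arguments in full.
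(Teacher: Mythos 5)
The paper does not prove this statement at all: it is quoted verbatim from the literature, with the proof deferred entirely to \cite[Theorem 19.2]{Armstrong1997}, so there is no in-paper argument to compare yours against. Your counting-poles proof is correct and is in fact the standard argument (essentially Armstrong's own): the double count $2(N-1)=\sum_i \frac{N}{n_i}(n_i-1)$, the resulting identity $2-\frac{2}{N}=\sum_i\bigl(1-\frac{1}{n_i}\bigr)$, the bound forcing $k\in\{2,3\}$, and the Diophantine analysis yielding $\Z_N$, the family $(2,2,m)$, and the triples $(2,3,3)$, $(2,3,4)$, $(2,3,5)$ are all sound, and you correctly flag that the only genuinely delicate step is upgrading the orbit/stabilizer data to an identification of the group with $D_m$, $A_4$, $S_4$ or $A_5$ up to conjugacy. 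Your proposed resolution of that step --- presenting the counting explicitly and citing Armstrong for the geometric reconstruction --- is exactly consistent with the paper's own treatment, which cites Armstrong for the whole theorem.
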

\noindent
In order to find sets of points preserved under the conjugation action of these subgroups, we recall a useful way to think about conjugation in $\SO(3)$. The group $\SO(3)$ may be viewed as a closed ball $\B(3) \subset \mathbb{R}^3$ of radius $\pi$, which we call the $\SO(3)$-ball, under the identification 
\begin{equation}\label{eq:so3ballidentification}
r(\theta, \hat{n}) \mapsto \theta \hat{n}.
\end{equation}
Antipodal points on the boundary are identified, since rotation through an angle $\pi$ around $\hat{n}$ is the same as rotation through an angle $\pi$ around $-\hat{n}$. Given two rotations $r_1 = r(\theta, \hat{n})$ and $r_2$, we have the identity $$r_2 r_1 r_2^{-1} = r_2 r(\theta,\hat{n}) r_2^{-1} =r(\theta,r_2(\hat{n})).$$ It follows that, under the identification~\eqref{eq:so3ballidentification}, conjugation by a rotation in $\SO(3)$ corresponds to rotation of the $\SO(3)$-ball. Equivariant OEBs for a subgroup are therefore sets of orthogonal points in the $\SO(3)$-ball permuted by rotations in that subgroup.

For concision, in what follows we will occasionally conflate points in $\B(3)$ and rotations in $\SO(3)$. For instance, we say `a point on the $z$-axis' to signify the element of $\SO(3)$ corresponding to a point on the $z$-axis, that is, a rotation around the $z$-axis through some angle. We will also write $\mathrm{sin}(x)$, $\mathrm{cos}(x)$ and $\mathrm{tan}(x)$ as $\sin(x)$, $\cos(x)$ and $\tan(x)$ respectively.

We now recall some useful facts about orthogonality in $\SO(3)$. 
\begin{lemma}\label{rotationslemma1}
Each rotation in $\SO(3)$ around $\hat{n}$ is orthogonal to exactly one other rotation around $\pm \hat{n}$.
\end{lemma}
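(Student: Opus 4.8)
The plan is to reduce the statement to an elementary computation inside the abelian one-parameter subgroup of rotations sharing the axis $\hat n$. The key observation is that the rotations ``around $\pm\hat n$'' — that is, around $\hat n$ and around $-\hat n$ — are exactly the rotations fixing the line $\R\hat n$, and these form a circle subgroup isomorphic to $\SO(2)\cong\R/2\pi\Z$. Identifying a rotation around $-\hat n$ through an angle $\phi$ with the rotation around $\hat n$ through angle $-\phi$, I can parametrize every element of this subgroup uniquely by a single angle $\psi\in\R/2\pi\Z$, writing it $r_\psi$. In the $\SO(3)$-ball picture of~\eqref{eq:so3ballidentification} these are precisely the points on the diameter through $\pm\pi\hat n$, which forms a topological circle once the antipodal boundary points are identified.

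First I would fix $r_1=r_{\psi_1}$ and compute the composite appearing in Definition~\ref{def:orthogrotations}. Since all rotations about the line $\R\hat n$ commute and compose by adding angles, one has $r_1^{-1}r_2=r_{\psi_2-\psi_1}$ for any $r_2=r_{\psi_2}$ in the subgroup. By Definition~\ref{def:orthogrotations}, $r_1$ and $r_2$ are orthogonal precisely when this composite is a rotation through the angle $\pi$, i.e.\ when $\psi_2-\psi_1\equiv\pi\pmod{2\pi}$. This congruence has the unique solution $\psi_2=\psi_1+\pi$ in $\R/2\pi\Z$, yielding exactly one rotation $r_2$ around $\pm\hat n$ orthogonal to $r_1$; since $\pi\not\equiv 0$, this partner is genuinely distinct from $r_1$, which gives the word ``other''. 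Geometrically, the orthogonal partner is simply the antipode of $r_1$ on the circle of axis-rotations.

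The only real care needed — and the main place the count could go astray — is the bookkeeping of the parametrization: I must check that folding the rotations about $+\hat n$ and about $-\hat n$ into a single copy of $\R/2\pi\Z$ is done consistently, so that the half-turn (angle $\pi$) is counted once rather than twice (it is the single identified boundary point $\pi\hat n=\pi(-\hat n)$), and so that the phrase ``through the angle $\pi$'' in Definition~\ref{def:orthogrotations} selects exactly the residue class $\psi\equiv\pi$. Once the circle subgroup is correctly identified with $\R/2\pi\Z$, the orthogonality condition becomes a single linear congruence with a unique solution, and the lemma follows at once.
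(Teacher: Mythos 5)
Your proof is correct and follows essentially the same route as the paper's: both reduce the claim to angle arithmetic in the one-parameter subgroup of rotations about the axis, where composition subtracts angles and the orthogonality condition $\psi_2-\psi_1\equiv\pi\pmod{2\pi}$ has a unique solution. Your $\R/2\pi\Z$ parametrization with the antipodal boundary identification is just a more explicit packaging of the paper's choice of $\theta_1\in[0,\pi]$, $\theta_2\in(-\pi,\pi]$.
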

\begin{proof}
The composite $r(\theta_1, \hat{n})^{-1} r(\theta_2, \hat{n})$ is the rotation $r(\theta_2 -\theta_1, \hat{n})$. For a given $\theta_1 \in [0,\pi]$, there is only one $\theta_2 \in (-\pi,\pi]$ such that $\theta_1 - \theta_2$ is an odd multiple of $\pi$.
\end{proof}

\begin{lemma}\label{rotationslemma2}
The rotation $r(\theta_2, \hat{n}_2)$ is orthogonal to the rotation $r(\pi,\hat{n}_1)$ iff either $\hat{n}_2$ is orthogonal to $\hat{n}_1$ or $\theta_2 = 0$. 
\end{lemma}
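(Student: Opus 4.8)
The statement is: $r(\theta_2,\hat n_2)$ is orthogonal to $r(\pi,\hat n_1)$ iff either $\hat n_2 \perp \hat n_1$ or $\theta_2 = 0$. Let me plan how to establish this.

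The plan is to unfold the definition of orthogonality (Definition~\ref{def:orthogrotations}) and reduce everything to computing the angle of a single composite rotation. By definition, the two rotations are orthogonal iff the composite $r(\pi,\hat n_1)^{-1}\,r(\theta_2,\hat n_2)$ is a rotation through angle $\pi$. Since $r(\pi,\hat n_1)^{-1} = r(\pi,\hat n_1)$, this amounts to asking when $r(\pi,\hat n_1)\,r(\theta_2,\hat n_2)$ has rotation angle $\pi$.

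First I would translate the angle condition into a trace (or quaternion) condition, since the rotation angle $\phi$ of a composite is controlled by its trace via $\Tr = 1 + 2\cos\phi$, so angle $\pi$ is equivalent to $\Tr = -1$, i.e. the composite is traceless. I expect the cleanest route is through unit quaternions (or equivalently the $\SU(2)$ double cover): write the half-angle quaternion for each factor as $q_k = \cos(\theta_k/2) + \sin(\theta_k/2)\,(\hat n_k \cdot \vec\imath)$, so that $r(\pi,\hat n_1)$ corresponds to the purely imaginary quaternion $\hat n_1\cdot\vec\imath$. The composite rotation has angle $\pi$ precisely when the scalar part of the product quaternion vanishes. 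Multiplying $(\hat n_1\cdot\vec\imath)\big(\cos(\theta_2/2) + \sin(\theta_2/2)\,(\hat n_2\cdot\vec\imath)\big)$ and using the quaternion identity $(\hat n_1\cdot\vec\imath)(\hat n_2\cdot\vec\imath) = -(\hat n_1\cdot\hat n_2) + (\hat n_1\times\hat n_2)\cdot\vec\imath$, the scalar part comes out to $\sin(\theta_2/2)\,(\hat n_1\cdot\hat n_2)$.

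Setting the scalar part to zero gives $\sin(\theta_2/2)\,(\hat n_1\cdot\hat n_2) = 0$, and since $\theta_2\in[0,\pi]$ forces $\theta_2/2\in[0,\pi/2]$, the factor $\sin(\theta_2/2)$ vanishes iff $\theta_2 = 0$; otherwise we need $\hat n_1\cdot\hat n_2 = 0$, i.e. $\hat n_2\perp\hat n_1$. This is exactly the claimed dichotomy, and both implications fall out of the single equation, so there is no separate converse to argue. The main thing to be careful about — the only real obstacle — is the passage between the $\SO(3)$ rotation angle and the quaternion scalar part: a rotation by angle $\phi$ lifts to half-angle $\phi/2$, so angle $\pi$ corresponds to a \emph{traceless} (purely imaginary) quaternion, and I must check that the degenerate case $\theta_2=0$ (where $\hat n_2$ is undefined) is handled correctly, which it is since then $r(\theta_2,\hat n_2)$ is the identity and the composite is just $r(\pi,\hat n_1)$ itself, manifestly of angle $\pi$. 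An alternative, coordinate-free argument avoiding quaternions would compute the composite's angle directly from the axis--angle composition formula $\cos(\phi/2) = \cos(\theta_1/2)\cos(\theta_2/2) - \sin(\theta_1/2)\sin(\theta_2/2)\,(\hat n_1\cdot\hat n_2)$ specialised to $\theta_1=\pi$, which immediately yields $\cos(\phi/2) = -\sin(\theta_2/2)\,(\hat n_1\cdot\hat n_2)$ and hence the same conclusion; I would present whichever the paper's conventions make shortest.
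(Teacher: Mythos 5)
Your proof is correct and takes essentially the same approach as the paper: the paper simply cites the standard half-angle composition formula (its equation~\eqref{angleofcompositeeqn}, which your quaternion computation re-derives), sets $\theta_1=\pi$ so that the condition for the composite to have angle $\pi$ becomes $\sin(\theta_2/2)\,\hat n_1\cdot\hat n_2=0$, and reads off the same dichotomy $\theta_2=0$ or $\hat n_1\perp\hat n_2$ --- exactly your ``alternative, coordinate-free'' route. (Your quoted scalar part drops a minus sign relative to the literal quaternion product, but since $\pm q$ represent the same rotation and you only use the vanishing of the scalar part, this is immaterial.)
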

\begin{proof}
We have the following standard formula for the rotation angle $\theta_{12}$ of the composite $r_2^{-1} \circ r_1$, where $r_i$ is a rotation around the axis $\hat{n}_i$ through an angle $\theta_i \in [0,\pi]$~\cite[Exercise 4.15]{Nielsen2011}:
\begin{align}
\begin{split}
\cos(\theta_{12}/2) = &\cos(\theta_1/2) \cos(\theta_2/2) \\&+\sin(\theta_1/2) \sin(\theta_2/2)\hat{n}_1 \cdot \hat{n}_2\label{angleofcompositeeqn}
\end{split}
\end{align}
Orthogonality of $r_2$ and $r_1$ is precisely the condition that the LHS is zero. Since the first term on the RHS equals zero when $\theta_1 = \pi$, the second term must also. This implies that either $\hat{n}_1 \cdot \hat{n}_2 = 0$, in which case the axes of rotation are orthogonal, or $\sin(\theta_2/2)=0$, in which case the other rotation is simply the identity.
\end{proof}

\begin{lemma}\label{rotationslemma3}
Two rotations can be orthogonal only if the angle between the axes of rotation is obtuse. If the angle between the axes is $\pi/2$ then for orthogonality one rotation must be through the angle $\pi$.
\end{lemma}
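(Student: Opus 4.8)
The plan is to reduce everything to the composite-angle identity~\eqref{angleofcompositeeqn} used in the proof of Lemma~\ref{rotationslemma2}. Since $r_1 = r(\theta_1,\hat n_1)$ and $r_2 = r(\theta_2,\hat n_2)$ are orthogonal exactly when the rotation angle $\theta_{12}$ of their composite equals $\pi$, i.e.\ when $\cos(\theta_{12}/2) = 0$, orthogonality is precisely the vanishing of the right-hand side of~\eqref{angleofcompositeeqn}. Writing $\phi \in [0,\pi]$ for the angle between the axes, so that $\hat n_1 \cdot \hat n_2 = \cos\phi$, this is the single equation
\begin{equation*}
\cos(\theta_1/2)\cos(\theta_2/2) + \sin(\theta_1/2)\sin(\theta_2/2)\cos\phi = 0 .
\end{equation*}
The whole lemma will then come out of tracking signs in this identity.

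First I would record that $\theta_1,\theta_2 \in [0,\pi]$ forces $\theta_1/2, \theta_2/2 \in [0,\pi/2]$, so each of $\cos(\theta_i/2)$ and $\sin(\theta_i/2)$ is nonnegative. Isolating the axis term gives
\begin{equation*}
\cos(\theta_1/2)\cos(\theta_2/2) = -\sin(\theta_1/2)\sin(\theta_2/2)\cos\phi ,
\end{equation*}
whose left-hand side is nonnegative. For two genuinely nontrivial rotations we have $\sin(\theta_1/2)\sin(\theta_2/2) > 0$, so the coefficient of $-\cos\phi$ is strictly positive; hence $\cos\phi \le 0$, which says the angle $\phi$ between the axes is at least $\pi/2$. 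This establishes the first assertion.

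For the second assertion I would specialise to $\phi = \pi/2$, i.e.\ $\cos\phi = 0$. The displayed orthogonality equation then collapses to $\cos(\theta_1/2)\cos(\theta_2/2) = 0$, and since $\theta_i/2 \in [0,\pi/2]$ a cosine vanishes only at $\pi/2$; thus $\theta_1 = \pi$ or $\theta_2 = \pi$, so one of the two rotations is a $\pi$-rotation, as claimed.

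I do not anticipate a real obstacle here: the argument is elementary sign bookkeeping in one trigonometric identity, and it dovetails with Lemmas~\ref{rotationslemma1} and~\ref{rotationslemma2}. The only point needing care is the degenerate case in which one rotation is the identity (some $\theta_i = 0$), where the axis, and hence the ``angle between the axes'', is not well defined; setting $\theta_1 = 0$ in the equation shows the identity is orthogonal only to $\pi$-rotations, so this case is consistent with the statement and can be flagged separately.
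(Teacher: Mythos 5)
Your proof is correct and takes essentially the same route as the paper's: both read orthogonality as the vanishing of the right-hand side of~\eqref{angleofcompositeeqn} and deduce the two claims by sign bookkeeping on the factors $\cos(\theta_i/2)$ and $\sin(\theta_i/2)$ for $\theta_i \in [0,\pi]$. If anything, yours is marginally more careful, since the paper asserts these factors are ``positive'' when they are only nonnegative, and you explicitly flag the degenerate case $\theta_i = 0$ that this sloppiness glosses over.
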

\begin{proof}
Considering (\ref{angleofcompositeeqn}), we note that both $\cos(\theta_1/2) \cos(\theta_2/2)$ and $\sin(\theta_1/2) \sin(\theta_2/2)$ will be positive for $\theta_1,\theta_2 \in [0,\pi]$. The sum can only be zero, then, if $\hat{n}_1\cdot\hat{n}_2 \leq 0$, i.e. if the angle between the axes is obtuse. If the angle is $\pi/2$ then we need $\cos(\theta_1/2) \cos(\theta_2/2)= 0$, which implies that one of the rotations is through an angle~$\pi$.
\end{proof}

We now begin our classification.

\subsubsection{Cyclic subgroups of $\SO(3)$}

Any set of orthogonal points will be equivariant for $\mathbb{Z}_1$. We proceed directly to the nontrivial cases. Let the $z$-axis be the axis of rotation of the generator of $\mathbb{Z}_n$ which rotates the $\SO(3)$-ball through an angle $2\pi/n$. Recalling that antipodal points on the ball's surface are identified, we immediately obtain the following characterisation of the orbits under this action.
\begin{lemma}\label{cyclicorbitsizelemma}
The orbit sizes under the conjugation action of $\mathbb{Z}_n$ on $\SO(3)$ are: 
\begin{itemize}
\item 1, for a point on the axis of rotation; 
\item
$n$, for a point in the interior of the ball and not on the axis of rotation, on the boundary of the ball and not on the $xy$-plane or the axis of rotation, or on the intersection of the boundary of the ball and the $xy$-plane when $n$ is odd; 
\item $n/2$, for a point on on the intersection of the boundary of the ball and the $xy$-plane when $n$ is even.
\end{itemize}
\end{lemma}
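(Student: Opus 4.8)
The plan is to exploit the geometric picture established immediately above the statement: under the identification~\eqref{eq:so3ballidentification} of $\SO(3)$ with the ball $\B(3)$, conjugation by the generator $r(2\pi/n,\hat z)$ of $\Z_n$ acts as an honest rigid rotation of the ball through $2\pi/n$ about the $z$-axis. So the lemma reduces to counting orbits of this rotation on the ball, subject to the single caveat that antipodal boundary points are identified. I would parametrize a point of the ball by its radius $\theta \in [0,\pi]$ (the rotation angle) together with the spherical polar angle $\phi \in [0,\pi]$ and azimuth $\psi$ of its axis $\hat n$; the generator then fixes $\theta$ and $\phi$ and sends $\psi \mapsto \psi + 2\pi/n$. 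Throughout, orbit size equals $n$ divided by the order of the stabiliser, so the task is really to detect exactly when some nontrivial power of the generator fixes a given point.

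With this coordinate description the first two bullets are immediate. A point on the axis of rotation has $\hat n = \pm\hat z$ (equivalently $\phi \in \{0,\pi\}$), or is the origin $\theta = 0$; in each case $\psi$ is irrelevant and the point is fixed, giving orbit size $1$. For an off-axis point strictly inside the ball ($0 < \theta < \pi$, $\phi \notin \{0,\pi\}$) there is no antipodal identification to worry about, and the $n$ azimuthal images $\psi + 2\pi k/n$ are pairwise distinct, so the orbit has size $n$.

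The real work — and the only place I expect genuine subtlety — is the boundary sphere $\theta = \pi$, where $\pi\hat n$ and $-\pi\hat n$ denote the same rotation. Here the antipode of the point with data $(\pi,\phi,\psi)$ has data $(\pi,\pi-\phi,\psi+\pi)$, so the identification can merge two azimuthal images only when $\pi - \phi = \phi$, i.e. on the $xy$-plane ($\phi = \pi/2$). For a boundary point off the $xy$-plane and off the axis this never happens, so the orbit again has size $n$. For a boundary point on the $xy$-plane the identification reads $\psi \sim \psi + \pi$, and the whole question becomes a parity computation: the rotated image $\psi + 2\pi k/n$ equals the original (as an element of $\SO(3)$) iff $2\pi k/n \equiv 0 \pmod{2\pi}$ or $2\pi k/n \equiv \pi \pmod{2\pi}$, i.e. iff $k \equiv 0 \pmod n$ or $2k \equiv n \pmod{2n}$. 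When $n$ is odd the second congruence is unsolvable (its left side is even while $n$ is odd), so the stabiliser is trivial and the orbit retains its full size $n$; when $n$ is even it is solved by $k = n/2$, so the stabiliser is $\{e, g^{n/2}\}$ and the orbit collapses to $n/2$ points. Carrying out this final case-split carefully, and confirming that the three possibilities exhaust exactly the points listed in each bullet, completes the proof.
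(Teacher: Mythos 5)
Your proposal is correct and takes essentially the same approach the paper intends: the paper states Lemma~\ref{cyclicorbitsizelemma} without proof, treating it as immediate from the identification of conjugation with rigid rotation of the $\SO(3)$-ball about the $z$-axis together with the antipodal identification on the boundary sphere. Your stabiliser computation---in particular the observation that antipodal merging on the boundary forces $\phi = \pi/2$, and the parity argument showing $2k \equiv n \pmod{2n}$ is solvable iff $n$ is even---is precisely that ``immediate'' claim carried out in full, with no gaps.
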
\noindent
\begin{proposition}\label{z2equivariantrotations}
The $\mathbb{Z}_2$-equivariant orthogonal error bases are as follows:
\begin{itemize}
\item for orbit type (1,1,1,1), a 2-parameter family of solutions, where two points are rotations around the $z$-axis and the other two are $\pi$-rotations around orthogonal axes in the $xy$-plane;
\item for orbit type (2,1,1), a 2-parameter family of solutions, where one point is a rotation around the $z$-axis, another point is a $\pi$-rotation around an $x$-axis perpendicular to the $z$-axis, and the other two points are rotations around axes in the $yz$-plane (see Figure~\ref{fig:z221orbits}), where the $y$-axis is perpendicular to both the $x$- and $z$-axes;
\item for orbit type (2,2), a 2-parameter family of solutions, where, for an axis $x$ orthogonal to $z$ and an axis $y$ orthogonal to both, two points lie in the $xz$-plane and below the $xy$-plane, and another two points lie in the $yz$-plane and above the $xy$-plane (see Figure~\ref{fig:z222orbits}).
\end{itemize}
\end{proposition}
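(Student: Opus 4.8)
The plan is to work entirely in the $\SO(3)$-ball picture established above, where the generator of $\mathbb{Z}_2$ acts by rotating the ball through $\pi$ about the $z$-axis, i.e.\ by $(x,y,z)\mapsto(-x,-y,z)$. By Lemma~\ref{cyclicorbitsizelemma} the fixed points of this action are exactly the $z$-axis (rotations $r(\theta,\hat z)$) together with the equatorial boundary circle (the $\pi$-rotations $r(\pi,\hat n)$ with $\hat n$ in the $xy$-plane), and every other point sits in an orbit of size $2$ paired by $g$. Since the only orbit sizes are $1$ and $2$, the admissible orbit types for a four-element OEB are precisely $(1,1,1,1)$, $(2,1,1)$ and $(2,2)$, so it suffices to treat these three. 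Throughout I will use three standing facts, all immediate from Lemmas~\ref{rotationslemma1}--\ref{rotationslemma3}: a $z$-axis rotation and an $xy$-plane $\pi$-rotation are automatically orthogonal (their axes are perpendicular); two $z$-axis rotations are orthogonal iff they differ by $\pi$, and two $xy$-plane $\pi$-rotations iff their axes are perpendicular, so each family admits at most two mutually orthogonal members; and, since orthogonality is conjugation-invariant, a $g$-fixed point orthogonal to $R$ is automatically orthogonal to $gRg^{-1}$, while for two size-two orbits only two of the four cross-orthogonalities are independent.

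For the $(1,1,1,1)$ case the counting facts force the four fixed points to split as two on the $z$-axis (differing by $\pi$) and two $xy$-plane $\pi$-rotations about perpendicular axes; all cross-pairs are then automatically orthogonal, and the free data are the common $z$-rotation angle and the direction of the $xy$-axes, giving the claimed two-parameter family. For $(2,1,1)$ I would first eliminate the two degenerate placements of the pair of fixed points: if both lie in the $xy$-plane then orthogonality to the orbit (Lemma~\ref{rotationslemma2}) forces the orbit axis perpendicular to two independent $xy$-directions, hence equal to $\pm\hat z$, making it a fixed point; and if both lie on the $z$-axis then feeding the two resulting instances of~\eqref{angleofcompositeeqn} against each other again forces the orbit element onto the equator, i.e.\ to a fixed point. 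Hence one fixed point is a $z$-axis rotation $r(\theta,\hat z)$ and the other an $xy$-plane $\pi$-rotation $r(\pi,\hat x)$; orthogonality of the orbit to the latter (Lemma~\ref{rotationslemma2}) pins the orbit axis into the $yz$-plane, and the surviving independent constraints $R\perp R'$ and $R\perp r(\theta,\hat z)$ are two equations in the four parameters (the angle $\theta$, the direction of $\hat x$, the tilt of the orbit axis in the $yz$-plane, and the orbit rotation angle), again a two-parameter family.

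The $(2,2)$ case is where I expect the real work to lie. Here one has two size-two orbits $\{R_1,R_1'\}$, $\{R_2,R_2'\}$, and the conditions reduce (by the conjugation-invariance remark) to the two within-pair orthogonalities together with $R_1\perp R_2$ and $R_1\perp R_2'$. The crux is to show that the two orbits must lie in mutually perpendicular vertical planes (which I will name $xz$ and $yz$) and on opposite sides of the equatorial $xy$-plane: one expands each condition via~\eqref{angleofcompositeeqn}, uses the within-pair equations to force each orbit axis to be tilted more than $\pi/4$ from the $z$-axis, and then uses the obtuse-axis constraint of Lemma~\ref{rotationslemma3} on the cross conditions to eliminate all other relative azimuths of the two axes. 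Once the planes are fixed the two cross conditions collapse to a single equation, leaving three equations in the four tilt/angle parameters plus the overall azimuth of the plane-pair, i.e.\ a two-parameter family, with the sign data recording the ``above''/``below'' placement. The main obstacle, and the step I would spend the most care on, is this azimuthal reduction in the $(2,2)$ case---ruling out every non-perpendicular placement of the two orbit planes and confirming that the solution set has exactly the asserted dimension---since the bookkeeping of the composite-angle formula across all four cross-orthogonality conditions is the one genuinely computational part of the argument.
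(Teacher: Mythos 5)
Your overall strategy coincides with the paper's: the $\SO(3)$-ball picture, a case split over orbit types via Lemma~\ref{cyclicorbitsizelemma}, and repeated use of Lemmas~\ref{rotationslemma1}--\ref{rotationslemma3} with the composite-angle formula~\eqref{angleofcompositeeqn}. Your $(1,1,1,1)$ and $(2,1,1)$ treatments match the paper's (and your elimination of the subcase with both $1$-orbits on the $z$-axis, by solving the two instances of~\eqref{angleofcompositeeqn} simultaneously to force $\cos(r/2)=0$ and an equatorial axis, is actually tidier than the paper's three-way case split there). But the step you yourself flag as the crux contains a genuine error: the obtuse-axis constraint of Lemma~\ref{rotationslemma3} cannot ``eliminate all other relative azimuths'' in the $(2,2)$ case, because obtuseness is an inequality, not an equation. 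Writing $\hat{n}_i$ for the orbit axes with tilts $\psi_i$ from the $z$-axis and azimuths $\phi_i$, the two cross conditions give $\hat{n}_1\cdot \hat{n}_2\leq 0$ and $\hat{n}_1\cdot \hat{n}_2'\leq 0$; summing these yields only the opposite-sides condition $\cos(\psi_1)\cos(\psi_2)\leq 0$, and since your own within-pair analysis forces $\psi_i \in (\pi/4, 3\pi/4)$ away from the boundary case, one has $|\sin(\psi_1)\sin(\psi_2)| > |\cos(\psi_1)\cos(\psi_2)|$, so an entire interval of relative azimuths around $\pm\pi/2$ survives the obtuseness test. The proposed mechanism therefore fails to pin the two orbit planes perpendicular.

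The repair is the equality argument the paper uses. Impose both cross conditions as equalities in~\eqref{angleofcompositeeqn}: subtracting them gives $\sin(r_1/2)\sin(r_2/2)\left(\hat{n}_1\cdot \hat{n}_2 - \hat{n}_1\cdot \hat{n}_2'\right)=0$, and since no orbit element is the identity, $\hat{n}_1\cdot \hat{n}_2 = \hat{n}_1\cdot \hat{n}_2'$ (equivalently, in the paper's phrasing: with $r_1,r_2$ fixed only the inter-axis angle varies in~\eqref{angleofcompositeeqn}, and cosine is injective on $[0,\pi]$). As $\hat{n}_2'$ is $\hat{n}_2$ rotated by $\pi$ about $z$, this difference equals $2\sin(\psi_1)\sin(\psi_2)\cos(\phi_1-\phi_2)$, forcing $\phi_1-\phi_2=\pm\pi/2$ exactly. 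Relatedly, your ``equations minus unknowns'' dimension counts in the $(2,1,1)$ and $(2,2)$ cases do not by themselves establish the claimed families: a naive count shows neither that the solution set is nonempty nor over which parameter ranges it exists. The paper closes this by exhibiting unique explicit solvability---\eqref{eq:r211z1=2orbit} determines $r$ uniquely from $\theta$ precisely for $\theta\in[\pi/2,3\pi/2]$, a single-valued equation fixes the $z$-coordinate of the $1$-orbit, and in the $(2,2)$ case substituting~\eqref{eq:r211z1=2orbit} into~\eqref{z222orthogonalityequation} determines $r_1\in[\pi/2,\pi]$ uniquely from $r_2\in[\pi/2,\pi]$---and your argument needs the same verification to conclude that each family is genuinely two-parameter.
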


\begin{proof}
\emph{Orbit type (1,1,1,1)}. By Lemma~\ref{rotationslemma1} there can be at most two rotations on the $z$-axis. The other two, in order to have orbit size 1, must both be $\pi$ rotations around different axes in the $xy$-plane, which must be orthogonal to each other by Lemma~\ref{rotationslemma2}. This set of solutions therefore has two independent parameters, namely the angle of one rotation around the $z$-axis and the orientation of the perpendicular axes in the $xy$-plane.

\item \emph{Orbit type (2,1,1)}. Firstly, suppose both the 1-orbits lie off the $z$-axis. Then they must be orthogonal $\pi$-rotations in the $xy$-plane. But then the other two rotations would have to be orthogonal and we would end up in the case $(1,1,1,1)$.

Let us now suppose that exactly one of the $1$-orbits lies on the $z$-axis. The other must be an orthogonal $\pi$-rotation; let this be around the $x$\-axis. Then the $2$-orbit must lie in the $yz$-plane by Lemma \ref{rotationslemma2}. We are therefore looking for three orthogonal points in the $yz$-plane, one on the $z$-axis and the other two symmetric under a reflection in the $z$-axis. Let $r$ be the rotation angle of the elements in the $2$-orbit and $\theta$ be the angle between them. Here we take $0 <\theta < 2\pi$, where $\theta = 0$ would correspond to both points being on the positive $z$-axis. By (\ref{angleofcompositeeqn}) we have the following equation for orthogonality of the elements of the 2-orbit:
\begin{equation}\label{eq:r211z1=2orbit}
r = 2\cos^{-1}\left(\sqrt{\frac{\cos(\theta)}{\cos(\theta)-1}}\right)
\end{equation}
This has a unique solution $r\in [\pi/2,\pi]$ for $\theta \in [\pi/2,3\pi/2]$, and none otherwise. Using (\ref{angleofcompositeeqn}), it can be shown similarly that, for given $\theta$, there is a unique value of the $z$-coordinate of the 1-orbit such that all three points are orthogonal (see Figure~\ref{fig:z221orbits}). We therefore have a 2-parameter family of solutions, where one parameter corresponds to a choice of $z$-coordinate $z_1$ of the 1-orbit on the $z$-axis, and the other parameter comes from a choice of orientation of $x$-axis. 
\begin{figure}
\centering
\includegraphics{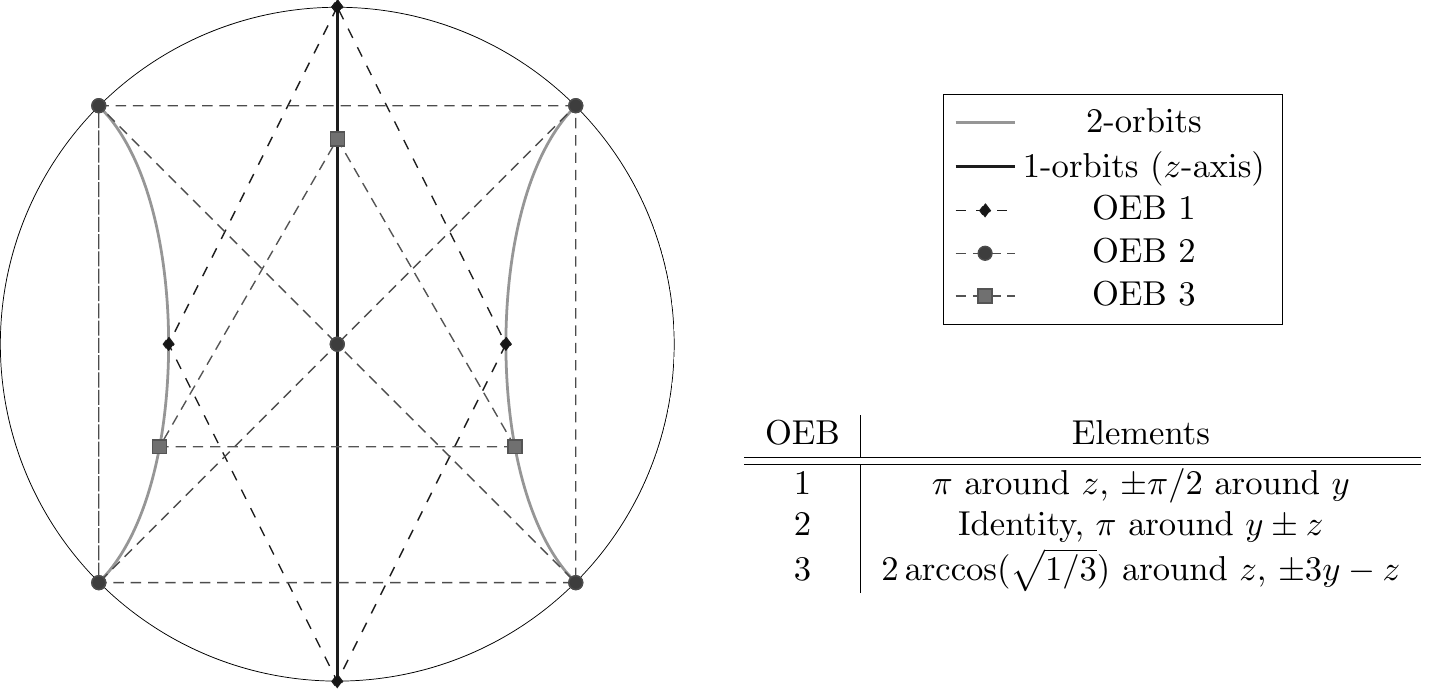}
\caption{\small $\mathbb{Z}_2$-equivariant OEBs with orbit type $(2,1,1)$. The diagram shows the intersection of the $yz$-plane with the $\SO(3)$-ball. One 1-orbit of the OEB is a $\pi$-rotation around the $x$-axis, and the remaining 2-orbit and 1-orbit are rotations around axes in the $yz$-plane shown in the diagram. Each 2-orbit is a pair of points with identical $z$-value on the two curved gray lines. The corresponding 1-orbit is a point on the $z$-axis. Three possible choices of points are given in the table and marked in the figure, joined by dashed lines.}
\label{fig:z221orbits}
\end{figure}

Suppose now that both 1-orbits lie on the $z$-axis; we will demonstrate that we cannot then obtain solutions of this orbit type. Firstly, if the elements of the 2-orbit are $\pi$-rotations not in the $xy$-plane, then they will not be orthogonal to the 1-orbits on the $z$-axis. On the other hand, if the elements of the 2-orbit are rotations through an angle less than $\pi$ and not in the $xy$-plane, then, given that by Lemma \ref{rotationslemma1} the $z$-rotations will be on opposite sides of the origin, both elements of the 2-orbit will make an acute angle with one of the $z$-rotations, violating Lemma \ref{rotationslemma3}. The 2-orbit must therefore lie in the $xy$-plane. The rotations of the 2-orbit must be through an angle less than $\pi$, or they would form two $1$-orbits. But, by Lemma \ref{rotationslemma3}, in order to be orthogonal both $z$-rotations must then be through an angle $\pi$, which would identify them.

\item \emph{Orbit type (2,2)}. Each 2-orbit will lie in a plane through the $z$-axis. Again, let $r$ be the rotation angle of the elements in the 2-orbit and $\theta$ be the angle between them; the relationship between $r$ and $\theta$ was already given in~\eqref{eq:r211z1=2orbit}.

We must find two 2-orbits where all four elements are pairwise orthogonal. Without loss of generality let the first orbit $O_1$ lie in the $xz$-plane, and let $\theta_1 \in [\pi/2,\pi]$. Certainly, the second orbit $O_2$ must have $\theta_2 \in [\pi,3\pi/2]$, as otherwise the central angle between some pair of elements will be acute. We now show that the orbit $O_2$ must also lie in the $yz$-plane. In other words, the two 2-orbits must lie in orthogonal planes containing the $z$-axis, and be on opposite sides of the $xy$-plane.

Let $r_1, r_2 \in [0, \pi]$ be the rotation angles of $O_1$ and $O_2$ respectively. Take one element from each orbit, and consider their composition~\eqref{angleofcompositeeqn}. With $r_1,r_2$ fixed, the only thing that can vary on the right hand side of this equation is the angle between the axes of rotation of these elements. This angle will lie between $0$ and $\pi$, and $\cos(x)$ is single-valued in that range; therefore, for both elements of the second orbit to be orthogonal to the given element of the first, their axes of rotation must both have an equal central angle with that element. This means that the $xz$-plane containing $O_1$ must be orthogonal to the plane through the $z$-axis containing $O_2$, which must therefore be the $yz$-plane.

With the planes fixed, we now find which angles $\theta_1 \in [\pi/2,\pi]$ and $\theta_2 \in [\pi,3\pi/2]$ defining the two orbits are compatible. By the above discussion, for orthogonality of all elements it is sufficient for a single pair of elements from different orbits to be orthogonal. Unit vectors $\hat{n}_1, \hat{n}_2$ defining the axes of rotation of a pair of elements in $O_1, O_2$ respectively may be expressed in Cartesian coordinates as $\hat{n}_1 = (\sin(\theta_1/2)),0,\cos(\theta_1/2))$ and $\hat{n}_2 = (0, \sin(\theta_2/2),\cos(\theta_2/2)).$ The orthogonality condition~\eqref{angleofcompositeeqn} then becomes
\begin{align}\label{z222orthogonalityequation}
\begin{split}
-\cos(r_1/2)\cos(r_2/2) = \sin(r_1/2) \sin(r_2/2)\cos(\theta_1/2)\cos(\theta_2/2)
\end{split}.
\end{align}
Replacing $\theta_1, \theta_2$ with $r_1, r_2$ using~\eqref{eq:r211z1=2orbit}, squaring both sides and performing some trigonometric manipulations, we derive
$$
r_1 = 2 \cos^{-1}\left(\sqrt{\frac{1}{2}-\cos^2(\frac{r_2}{2})}\right)
$$
\ignore{$$
\cos(\theta_1) = -\frac{1+\cos(\theta_2)}{1+3\cos(\theta_2)}.
$$}
This uniquely determines \mbox{$r_1 \in [\pi/2,\pi]$} for any \mbox{$r_2 \in [\pi/2,\pi]$}. The solutions of orbit type (2,2) are therefore parametrised by two angle variables; the first is the orientation of the $x$-axis and the second is the angle $r_2$ of one of the rotations in the 2-orbit $O_2$ lying below the $xy$-plane.
Two of these solutions are shown in Figure~\ref{fig:z222orbits}.
\begin{figure}
\includegraphics{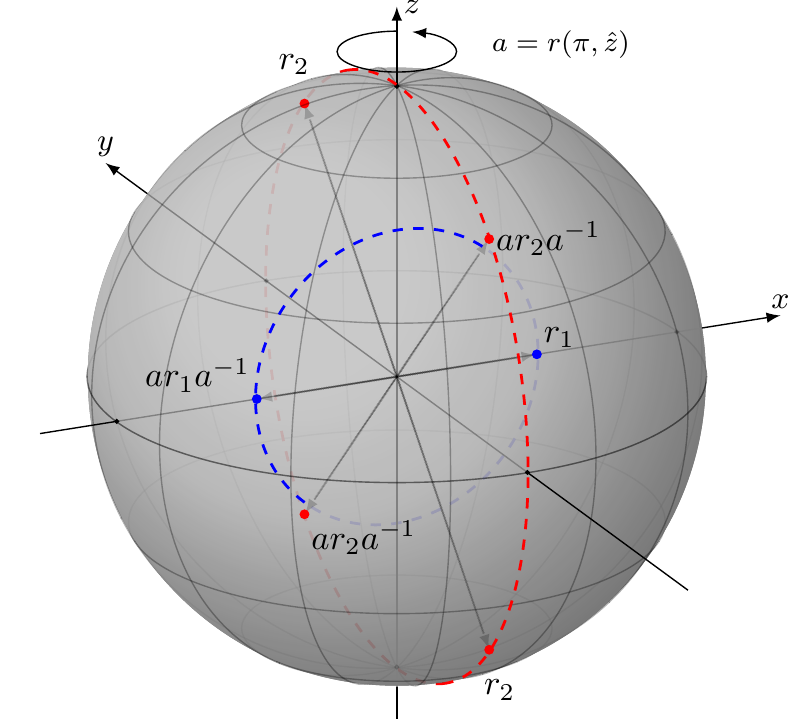}
\quad
\includegraphics{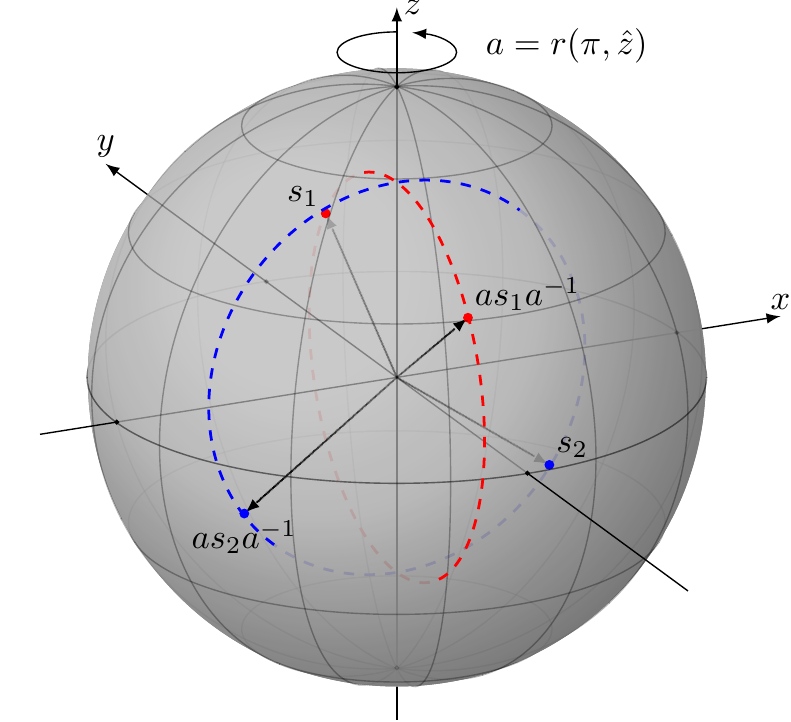}
\caption{Two equivariant OEBs for $\mathbb{Z}_2$ with orbit type (2,2), pictured in the $\SO(3)$-ball. Under the $\mathbb{Z}_2$ action, the equivariant OEB on the left is generated by $r_1 = r(\pi/2, \hat{x})$ and $r_2 = r(\pi, \frac{1}{\sqrt{2}}(\hat{y} + \hat{z}))$ (note the identification of antipodal points), while the equivariant OEB on the right is generated by $s_1 = r(2\pi/3, \frac{1}{\sqrt{3}}(\sqrt{2}\hat{y}+ \hat{z})) $ and $s_2= r(2\pi/3, \frac{1}{\sqrt{3}}(\sqrt{2}\hat{x}- \hat{z})).$}
\label{fig:z222orbits}
\end{figure}
\end{proof}

\begin{proposition}\label{Z3equivariantrotations}
The $\mathbb{Z}_3$-equivariant  orthogonal error bases are as follows:
\begin{itemize}
\item for orbit type (1,1,1,1), no solutions;
\item for orbit type (3,1), a 2-parameter family of solutions, forming the vertices of a tetrahedron with one vertex on the $z$-axis and the other three forming an equilateral triangle in a plane perpendicular to the $z$-axis (see Figure~\ref{fig:z331orbits}).
\end{itemize}
\end{proposition}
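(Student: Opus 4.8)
The plan is to work in the $\SO(3)$-ball $\B(3)$, in which the generator of $\mathbb{Z}_3$ acts by rotation through $2\pi/3$ about the $z$-axis and an equivariant OEB is a set of four pairwise orthogonal points permuted by this rotation. Since $3$ is odd, Lemma~\ref{cyclicorbitsizelemma} shows every orbit has size $1$ (points on the $z$-axis) or size $3$ (all other points), so the only partitions of $4$ into admissible orbit sizes are $(1,1,1,1)$ and $(3,1)$; these are the only two cases to treat.

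For orbit type $(1,1,1,1)$ I would rule out existence immediately. All four points are fixed by the action, hence lie on the axis of rotation, so the OEB would consist of four pairwise orthogonal rotations about $\pm\hat z$. But Lemma~\ref{rotationslemma1} states that each rotation about $\hat z$ is orthogonal to exactly one other rotation about $\pm\hat z$; four pairwise orthogonal such rotations would give one of them three distinct orthogonal partners on the axis, a contradiction. Hence there are no solutions of this type.

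For orbit type $(3,1)$ the $1$-orbit is a rotation $r(\psi,\hat z)$ on the $z$-axis, and the $3$-orbit consists of rotations $r(\theta,\hat n_i)$, $i=1,2,3$, whose axes $\hat n_i$ share a common polar angle $\alpha$ and whose azimuthal angles differ by $2\pi/3$, the common offset $\phi$ being a free orientation parameter. The $\mathbb{Z}_3$ symmetry already yields the stated geometry: the three points $\theta\hat n_i$ have equal height $\theta\cos\alpha$ and equal distance $\theta\sin\alpha$ from the axis, so they form an equilateral triangle in a plane perpendicular to the $z$-axis, with the $1$-orbit on that axis; together they are the vertices of a tetrahedron. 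Since the generator fixes the $1$-orbit and permutes the $3$-orbit, and conjugation preserves orthogonality, it suffices to impose two conditions: (i) two elements of the $3$-orbit are orthogonal, and (ii) one element of the $3$-orbit is orthogonal to the $1$-orbit. Using $\hat n_1\cdot\hat n_2=\cos^2\alpha-\frac12\sin^2\alpha$ together with~\eqref{angleofcompositeeqn}, condition (i) reads $\cos^2(\theta/2)+\sin^2(\theta/2)\big(\cos^2\alpha-\frac12\sin^2\alpha\big)=0$, equivalently $\cos^2\alpha=\frac13-\frac23\cot^2(\theta/2)$; condition (ii) reads $\cos(\psi/2)\cos(\theta/2)+\sin(\psi/2)\sin(\theta/2)\cos\alpha=0$ and determines $\psi$. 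With $\phi$ and $\theta$ free, (i) fixes $\alpha$ and (ii) fixes $\psi$, so we obtain a $2$-parameter family.

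The genuine work, rather than bookkeeping, is the trigonometric verification that these equations are simultaneously solvable over a nondegenerate two-dimensional range. I would check that $\cos^2\alpha=\frac13-\frac23\cot^2(\theta/2)$ lies in $[0,1]$ precisely for $\theta\in[2\cos^{-1}(1/\sqrt3),\pi]$, and that (ii) then yields a valid angle $\psi$ for each such $\theta$, treating the endpoint $\cos\alpha=0$ (axes in the $xy$-plane, forcing $\psi=\pi$) separately. This mirrors the analysis of the $\mathbb{Z}_2$ case in Proposition~\ref{z2equivariantrotations}, which I would follow closely, and concludes by confirming that the four points are arranged as in Figure~\ref{fig:z331orbits}. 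Combined with the nonexistence argument for $(1,1,1,1)$, this completes the proof.
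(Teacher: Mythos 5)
Your proposal is correct and takes essentially the same approach as the paper's proof: the $(1,1,1,1)$ case is dismissed exactly as in the paper via Lemma~\ref{rotationslemma1}, and for orbit type $(3,1)$ the paper likewise works in the $\SO(3)$-ball, uses the orbit classification of Lemma~\ref{cyclicorbitsizelemma} to pin down the tetrahedral geometry, and imposes one intra-orbit and one inter-orbit orthogonality condition from~\eqref{angleofcompositeeqn} to obtain the 2-parameter family. The only (immaterial) difference is the direction of parametrization: you take the rotation angle $\theta$ as free and solve $\cos^2(\alpha) = \frac{1}{3} - \frac{2}{3}\tan^{-2}(\theta/2)$ for the polar angle, whereas the paper fixes the polar angle $\psi$ and solves $r = 2\sin^{-1}\bigl(\sqrt{2}/(\sqrt{3}\,\sin(\psi))\bigr)$ for the rotation angle --- the same relation inverted, with your admissible range $\theta \in [2\cos^{-1}(1/\sqrt{3}), \pi]$ matching the paper's $\psi \in [\sin^{-1}(\sqrt{2/3}),\, \pi - \sin^{-1}(\sqrt{2/3})]$.
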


\begin{proof} \emph{Orbit type (1,1,1,1).} All the points would need to be on the $z$-axis, which is impossible by Lemma \ref{rotationslemma1}.
\item \emph{Orbit type (3,1).} By the classification of orbits (Lemma~\ref{cyclicorbitsizelemma}), these OEBs consist of a 1-orbit on the $z$-axis and a 3-orbit forming the vertices of an equilateral triangle in a plane perpendicular to the $z$-axis. Let one of the elements in the $3$-orbit lie in the $xz$-plane, so $(r,\psi,0)$ are its spherical coordinates. From \eqref{angleofcompositeeqn} we obtain the following condition for orthogonality of the elements of the  3-orbit: 
\begin{align*}
r &= 2 \sin^{-1}\left(\frac{\sqrt{2}}{\sqrt{3}\sin(\psi)}\right)
\end{align*}
Where soluble, this equation completely determines $r$ for given $\psi$. It admits solutions for $\psi \in [\sin^{-1}(\sqrt{\frac{2}{3}}), \pi-\sin^{-1}(\sqrt{\frac{2}{3}})]$. \ignore{\DVcomm{solutions $(r,\psi)$ of interest are $(2s^{-1}(\sqrt{\frac{2}{3}}),\frac{\pi}{2})$, $(\pi,s^{-1}(\sqrt{\frac{2}{3}}))$ and $(\pi, \pi-s^{-1}(\sqrt{\frac{2}{3}}))$.}} By~\eqref{angleofcompositeeqn} we also obtain an equation in $\psi$ for the height $z$ of the point on the $z$-axis, which is single-valued in the range $\psi \in [\sin^{-1}(\sqrt{\frac{2}{3}}), \pi-\sin^{-1}(\sqrt{\frac{2}{3}})]$:
$$
z=2\tan^{-1}(\sqrt{\frac{3}{2}}\cos(r(\psi)/2)\tan(\psi))
$$
Under this equation $z$ can take all values in $[-\pi,\pi]$; the 3-orbit lies on the other side of the $xy$-plane.  These OEBs therefore form a 2-parameter family, where one parameter is the angle $\psi$, and the other is the choice of $x$-axis. Two solutions are shown in Figure~\ref{fig:z331orbits}.
\begin{figure}
\includegraphics{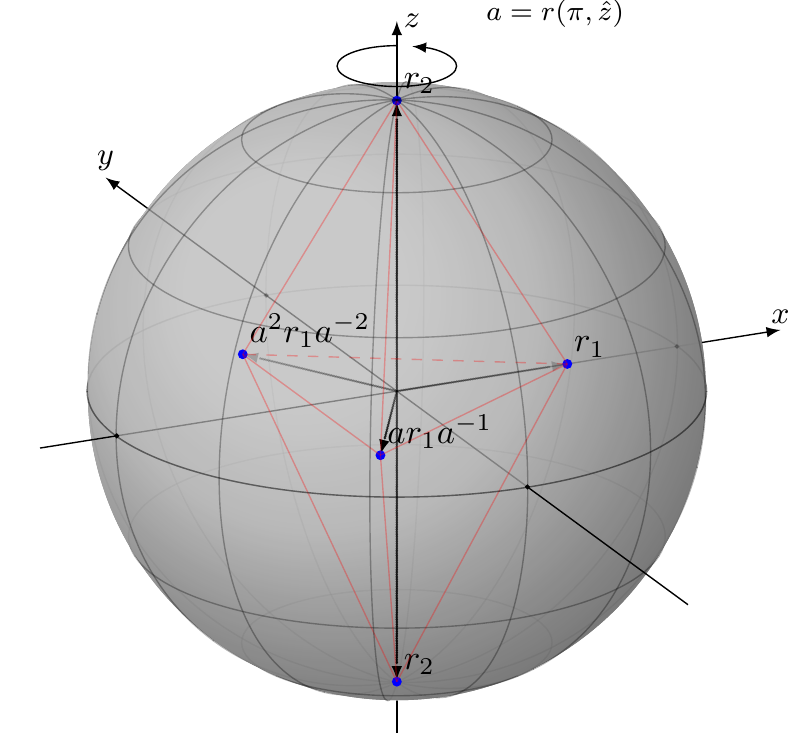}
\quad
\includegraphics{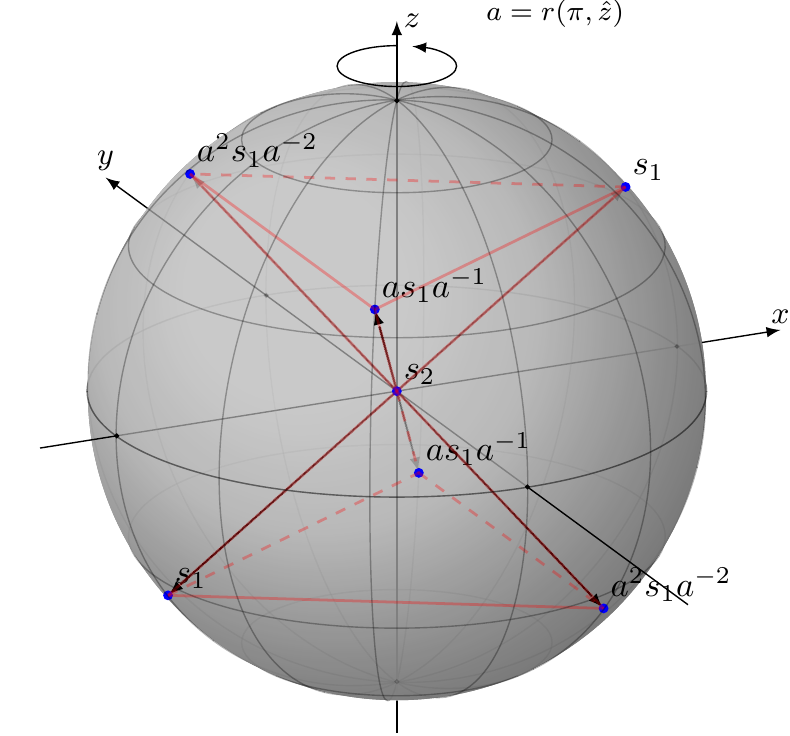}\caption{Two equivariant OEBs for $\mathbb{Z}_3$ with orbit type (3,1). Under the $\mathbb{Z}_3$ action, the equivariant OEB on the left is generated by  $r_1 = r(2\sin^{-1}(\sqrt{\frac{2}{3}}),\hat{x})$ and $r_2 = r(\pi, \hat{z})$, and the equivariant OEB on the right is generated by  $s_1 = r(\pi, \frac{1}{\sqrt{3}} (\sqrt{2}\hat{x}+\hat{z}))$ and $s_2 = r(0, \hat{z})$. Note the identification of antipodal points in both cases; this is why the points are vertices of two tetrahedra rather than just one.}
\label{fig:z331orbits}
\end{figure}
\end{proof}

\begin{proposition}\label{Z4equivariantrotations}
The $\mathbb{Z}_4$-equivariant  orthogonal error bases are as follows:
\begin{itemize}
\item for orbit type (1,1,1,1), no solutions;
\item for orbit type (2,1,1), a 2-parameter family of 
solutions identical to the (1,1,1,1) solutions for $\mathbb{Z}_2$ (Proposition~\ref{z2equivariantrotations});
\item for orbit type (2,2), no solutions;
\item for orbit type (4), no solutions.
\end{itemize}
\end{proposition}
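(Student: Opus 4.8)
The plan is to treat the four orbit types in turn, using the orbit-size classification of Lemma~\ref{cyclicorbitsizelemma} for $n=4$: since the generator rotates the $\SO(3)$-ball through $\pi/2$ about the $z$-axis, the only possible orbit sizes are $1$ (points on the $z$-axis), $2$ (points on the boundary circle in the $xy$-plane, which are $\pi$-rotations about axes in that plane), and $4$ (generic points). This immediately forces the coarse shape of each candidate OEB and reduces the work to checking orthogonality.

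For orbit type $(1,1,1,1)$ all four points are fixed, hence all lie on the $z$-axis; but Lemma~\ref{rotationslemma1} permits at most two pairwise-orthogonal rotations about a single axis, so there are no solutions, exactly as in the $\mathbb{Z}_3$ case. For orbit type $(2,1,1)$ the structure is forced: the two $1$-orbits sit on the $z$-axis and the single $2$-orbit consists of two points on the boundary circle in the $xy$-plane whose azimuths differ by $\pi/2$ (the generator advances azimuth by $\pi/2$, and antipodal identification collapses the would-be $4$-orbit to this pair), i.e. two $\pi$-rotations about orthogonal axes in the $xy$-plane. I would then verify all orthogonalities: the two $\pi$-rotations are mutually orthogonal because their axes are perpendicular (Lemma~\ref{rotationslemma2}); each is automatically orthogonal to any $z$-axis rotation, since its axis lies in the $xy$-plane and is therefore perpendicular to the $z$-axis (again Lemma~\ref{rotationslemma2}); and the two $z$-axis rotations are orthogonal precisely when their angles differ by $\pi$ (Lemma~\ref{rotationslemma1}). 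These are exactly the constraints defining the $\mathbb{Z}_2$-equivariant solutions of type $(1,1,1,1)$ in Proposition~\ref{z2equivariantrotations}, so the two families coincide, the free parameters being the angle of one $z$-rotation and the orientation of the orthogonal $xy$-axes.

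For orbit type $(2,2)$ I would observe that both $2$-orbits live on the boundary circle in the $xy$-plane, so all four elements are $\pi$-rotations about axes in that plane. By Lemma~\ref{rotationslemma2} two such rotations are orthogonal iff their axes are perpendicular, and pairwise orthogonality of all four would demand four mutually perpendicular axes inside a two-dimensional plane, which is impossible; hence there are no solutions.

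The only genuine computation, and the step I expect to be the main obstacle, is orbit type $(4)$. A $4$-orbit is $\{p, gp, g^2p, g^3p\}$, where conjugation by the generator $g$ rotates the axis of $p = r(\theta,\hat n)$ by $\pi/2$ about the $z$-axis while fixing the angle $\theta$; thus all four elements share the same $\theta$ and the same polar angle $\psi$ of their axes, differing only in azimuth by multiples of $\pi/2$. The central angle $\gamma$ between the axes of two elements whose azimuths differ by $\pi/2$ satisfies $\cos\gamma = \cos^2\psi$, and substituting this into the orthogonality condition~\eqref{angleofcompositeeqn} for two rotations of equal angle $\theta$ gives $\cos^2(\theta/2) + \sin^2(\theta/2)\cos^2\psi = 0$. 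Both terms on the left are non-negative for $\theta,\psi \in [0,\pi]$, so the equation forces $\cos^2(\theta/2) = 0$ and $\cos^2\psi = 0$, i.e. $\theta = \pi$ and $\psi = \pi/2$. But such a point lies on the boundary circle in the $xy$-plane and therefore has orbit size $2$, not $4$, a contradiction. Hence there are no solutions of type $(4)$, and the case reduces to a short sign argument once the central-angle formula is in hand.
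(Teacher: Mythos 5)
Your proof is correct and takes essentially the same route as the paper: Lemma~\ref{cyclicorbitsizelemma} forces the shape of each orbit type, Lemma~\ref{rotationslemma1} kills $(1,1,1,1)$, Lemma~\ref{rotationslemma2} both kills $(2,2)$ and identifies the $(2,1,1)$ solutions with the $\mathbb{Z}_2$ type-$(1,1,1,1)$ family of Proposition~\ref{z2equivariantrotations}. Your orbit-type-$(4)$ computation, deriving $\cos^2(\theta/2)+\sin^2(\theta/2)\cos^2(\psi)=0$ and hence $\theta=\pi$, $\psi=\pi/2$ (landing on the boundary circle, where the orbit size is $2$), simply inlines the nonnegativity argument underlying Lemma~\ref{rotationslemma3}, which the paper instead invokes in a two-case split (axes off versus in the $xy$-plane); the two arguments are the same in substance, with yours slightly more unified.
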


\begin{proof}
\emph{Orbit type (1,1,1,1).} All the points would need to be on the $z$-axis, which is impossible by Lemma \ref{rotationslemma1}.

\item\emph{Orbit type (2,1,1).} The 2-orbit must consist of orthogonal $\pi$-rotations around axes in the $xy$-plane. One parameter therefore corresponds to the rotation angle of one of the rotations on the $z$-axis, and the other to the orientation of the orthogonal axes in the the $xy$-plane.

\item\emph{Orbit type (2,2).} These must be four different $\pi$-rotations around axes in the $xy$-plane. But then they cannot be orthogonal.

\item\emph{Orbit type (4).} The angle between rotation vectors in a 4-orbit will be acute if they are not in the $xy$-plane, so they cannot be orthogonal. If they are in the $xy$-plane then as the angle between adjacent vectors is $\pi/2$, at least one pair of opposite vectors must be $\pi$-rotations by Lemma \ref{rotationslemma3}; but then these will be identified and this will not be a 4-orbit.
\end{proof}

\begin{proposition} \label{cyclicqubituebs}
There are no $\mathbb{Z}_n$-equivariant orthogonal error bases for $n \geq 5$. 
\end{proposition}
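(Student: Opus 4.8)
The plan is to reduce the statement to a short counting argument over orbit sizes, with no hard computation required. Since we are classifying equivariant OEBs for qubits, the relevant orthogonal error basis consists of exactly four pairwise-orthogonal rotations in the $\SO(3)$-ball, and $\mathbb{Z}_n$-equivariance means that this four-element set is a union of orbits under the conjugation action, i.e.\ rotation of the ball through $2\pi/n$ about the $z$-axis. First I would invoke Lemma~\ref{cyclicorbitsizelemma} to list the only possible orbit sizes: $1$ for points on the $z$-axis, $n/2$ for points on the intersection of the boundary with the $xy$-plane when $n$ is even, and $n$ for every remaining point. Because the OEB has only four elements, each orbit it contains must have size at most $4$.

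For $n \geq 5$ this immediately eliminates the generic orbits, since a size-$n$ orbit is too large to fit; the only admissible orbits are the size-$1$ orbits on the $z$-axis and, when $n$ is even with $n/2 \leq 4$ (that is, $n \in \{6,8\}$), a single size-$n/2$ boundary orbit. I would then split into two cases. If the OEB consists entirely of size-$1$ orbits, all four rotations lie on the $z$-axis; but Lemma~\ref{rotationslemma1} asserts that each rotation about a fixed axis is orthogonal to exactly one other about that axis, so at most two mutually orthogonal rotations can share the $z$-axis, contradicting the requirement of four pairwise-orthogonal elements.

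The only remaining possibility is $n \in \{6,8\}$, with the four elements comprising a single boundary orbit of size $n/2$ (giving three or four points) together with at most one axial point. Here every boundary point is a $\pi$-rotation about an axis lying in the $xy$-plane, and by Lemma~\ref{rotationslemma2} two such $\pi$-rotations are orthogonal precisely when their axes are orthogonal. Since the $xy$-plane is two-dimensional, three or four of its axes cannot be pairwise orthogonal, so this case is excluded as well. This exhausts all configurations and yields nonexistence for every $n \geq 5$.

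I do not expect a genuine obstacle in this argument: once the orbit sizes are in hand, each case closes by a one-line appeal to Lemmas~\ref{rotationslemma1} and~\ref{rotationslemma2}. The only real care needed is the bookkeeping of which orbit sizes are admissible for each $n$, and in particular isolating the two even exceptions $n=6$ and $n=8$ whose boundary orbits happen to be small enough to fit inside a four-element basis; handling these two stray cases cleanly is the main subtlety.
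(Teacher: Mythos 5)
Your proposal is correct and takes essentially the same approach as the paper's proof: both reduce the problem to counting admissible orbit sizes via Lemma~\ref{cyclicorbitsizelemma}, rule out four $1$-orbits on the $z$-axis by Lemma~\ref{rotationslemma1}, and eliminate the residual even cases $n=6,8$ by applying Lemma~\ref{rotationslemma2} to the boundary $\pi$-rotations. Your observation that the $xy$-plane admits at most two pairwise orthogonal axes dispatches $n=6$ and $n=8$ in one stroke where the paper checks them separately, but this is the same argument in substance.
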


\begin{proof}
We handle the odd and even cases separately.

\item \emph{$n \geq 5$ and $n$ odd}: The only orbit sizes are $1$ and $n$. Since we only have four elements in the UEB, all four points must be of orbit size $1$; they must therefore all be on the $\hat{z}$-axis. But this is impossible by Lemma~\ref{rotationslemma1}.

\item \emph{$n\geq 5$ and $n$ even}: For $n=6$, the orbit sizes are $1$, $3$ and $6$. Since for the reason given above we cannot have four $1$-orbits, we must have one $1$-orbit and one $3$-orbit. However, the axes of the $\pi$-rotations will not be orthogonal and so the rotations are not orthogonal by Lemma \ref{rotationslemma2}. For $n=8$, the orbit sizes are $1, 4$ and $8$, so we are forced to have a 4-orbit by Lemma \ref{rotationslemma1}. But these $\pi$ rotations will again not be around orthogonal vectors and are therefore not orthogonal by Lemma \ref{rotationslemma2}. For $n>8$, the orbit sizes for elements off the $\hat{z}$-axis are all greater than~4.
\end{proof}

\subsubsection{Dihedral subgroups of $\SO(3)$}

Let the $z$-axis be the axis of cyclic rotation, and let the $f$-axis be the perpendicular axis of $\pi$-rotation (the `flip axis'). 

\begin{proposition}\label{d2equivariantuebs}
The $D_2$-equivariant orthogonal error bases are as follows:
\begin{itemize}
\item for orbit type (1,1,1,1), one solution;
\item for orbit type (2,1,1), six solutions;
\item for orbit type (2,2), three solutions;
\item for orbit type (4), two solutions.
\end{itemize}
\end{proposition}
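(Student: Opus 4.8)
The plan is to exploit the description of conjugation as rotation of the $\SO(3)$-ball. The group $D_2$ here is $\{e, r(\pi,\hat x), r(\pi,\hat y), r(\pi,\hat z)\}$, so its conjugation action is by the three $\pi$-rotations of the ball about the coordinate axes, under which $(x,y,z)\mapsto(-x,-y,z)$ and its cyclic variants. First I would determine the orbit structure. A point is a $1$-orbit (fixed by all of $D_2$) exactly when it lies on all three axes simultaneously, giving the origin $e$, together with the three boundary poles $r(\pi,\hat x), r(\pi,\hat y), r(\pi,\hat z)$, which are fixed because the two $\pi$-rotations about the \emph{other} axes send each pole to its antipode, identified with itself. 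Thus there are exactly four fixed points. A point has a $2$-orbit when exactly one nontrivial element stabilises it; the locus fixed by $r(\pi,\hat z)$ is the $z$-axis together with the boundary circle of the $xy$-disc, so a $2$-orbit is either a pair of interior axis points $\{r(\theta,\hat a), r(\theta,-\hat a)\}$ or a pair of boundary $\pi$-rotations in a coordinate plane. All remaining points lie in $4$-orbits.

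For orbit type $(1,1,1,1)$ every point must be fixed, so the OEB is forced to be $\{e, r(\pi,\hat x), r(\pi,\hat y), r(\pi,\hat z)\}$; I would verify pairwise orthogonality using Definition~\ref{def:orthogrotations}, noting $e^{-1}r(\pi,\hat a)=r(\pi,\hat a)$ and that the product of two perpendicular half-turns is the half-turn about the third axis, giving the single solution. For orbit type $(2,1,1)$ the two $1$-orbits are a choice of $2$ of the $4$ fixed points, so $\binom{4}{2}=6$ choices; for each I would show the remaining $2$-orbit is uniquely determined. When the chosen pair is $\{e, r(\pi,\hat z)\}$, orthogonality to $e$ forces the remaining pair to be $\pi$-rotations, orthogonality to $r(\pi,\hat z)$ forces their axes into the $xy$-plane by Lemma~\ref{rotationslemma2}, and mutual orthogonality forces perpendicular axes, pinning them to $r(\pi,\tfrac{1}{\sqrt2}(\hat x\pm\hat y))$; when the pair is $\{r(\pi,\hat x), r(\pi,\hat y)\}$, Lemma~\ref{rotationslemma2} forces the $2$-orbit onto the $z$-axis and Lemma~\ref{rotationslemma1} pins it to $r(\pi/2,\pm\hat z)$. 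By the $D_2$-symmetry permuting the three axes these two computations cover all six choices, each yielding a unique solution.

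For orbit type $(2,2)$ I would run through the pairings of $2$-orbit types and eliminate the degenerate ones. Two axis-orbits on perpendicular axes fail by Lemma~\ref{rotationslemma3} (both would be $\pi/2$-rotations with perpendicular axes, hence not orthogonal), and two on the same axis are excluded by Lemma~\ref{rotationslemma1}; two coordinate-plane orbits fail because an axis orthogonal to both axes of one planar pair must be a fixed axis, contradicting orbit size $2$. This leaves the mixed pairing: an axis $2$-orbit $\{r(\pi/2,\pm\hat z)\}$ together with a coplanar $\pi$-rotation pair $r(\pi,\tfrac1{\sqrt2}(\hat x\pm\hat y))$, where internal orthogonality fixes the angle $\pi/2$ and $\phi=\pi/4$, and cross-orthogonality is automatic by Lemma~\ref{rotationslemma2}. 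One solution arises for each choice of which coordinate axis carries the axis-orbit, giving three isolated solutions.

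The orbit type $(4)$ case is where I expect the real work, and it is the main obstacle. Here I would parametrise a generic point $r(\theta,\hat n)$ with $\hat n=(n_1,n_2,n_3)$ and its three conjugates, whose axes are the sign-flips of $\hat n$, and impose the orthogonality equation~\eqref{angleofcompositeeqn} on each pair. Writing $c=\cos^2(\theta/2)$, the three independent conditions take the form $c+(1-c)(2n_i^2-1)=0$ for $i=1,2,3$; subtracting them forces $n_1^2=n_2^2=n_3^2=\tfrac13$, and back-substitution gives $c=\tfrac14$, i.e.\ $\theta=2\pi/3$. The axes are therefore the four body diagonals $\tfrac1{\sqrt3}(\pm1,\pm1,\pm1)$, and since $\theta\neq\pi$ the two sign-parity classes give genuinely distinct orbits, yielding exactly two solutions; these are precisely the inscribed-tetrahedron configurations appearing in Proposition~\ref{tetrahedraluebs}. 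The delicate points will be confirming that the six pairwise conditions reduce to these three (using that conjugation preserves orthogonality), and checking that the two sign classes are distinct orbits rather than a single one.
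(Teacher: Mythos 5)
Your proof is correct, but it takes a genuinely different route from the paper. The paper's proof is incremental: it invokes the classification of $\mathbb{Z}_2$-equivariant OEBs (Proposition~\ref{z2equivariantrotations}), and for each $\mathbb{Z}_2$-orbit type asks which members of those two-parameter families survive the additional flip, reading off the $D_2$-orbit types and counts from the parameters (the type-$(4)$ solutions emerge from the $\mathbb{Z}_2$-type-$(2,2)$ family via the condition $\cos\theta=-\tfrac13$). You instead analyse the full $D_2$ conjugation action on the $\SO(3)$-ball from scratch: you determine the fixed-point set $\{e,r(\pi,\hat x),r(\pi,\hat y),r(\pi,\hat z)\}$ and the two species of $2$-orbits, then do a direct case analysis per orbit type, culminating in an explicit computation for type $(4)$ that pins the axes to the body diagonals with $\theta=2\pi/3$ and $c=\cos^2(\theta/2)=\tfrac14$; your reduction of the six pairwise conditions to three via conjugation-invariance of orthogonality is exactly the right observation, and your count of sign-parity classes as two distinct orbits (valid since $\theta\neq\pi$) matches the paper's two solutions. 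Your approach is self-contained (independent of Proposition~\ref{z2equivariantrotations}, using only Lemmas~\ref{rotationslemma1}--\ref{rotationslemma3} and equation~\eqref{angleofcompositeeqn}) and yields explicit coordinates for every solution, making the link to Proposition~\ref{tetrahedraluebs} transparent; the paper's approach buys brevity and fits its overall strategy of bootstrapping dihedral and polyhedral cases from cyclic ones. One phrasing ellipsis to tighten: in the $(2,1,1)$ case for the pair $\{e,r(\pi,\hat z)\}$, mutual orthogonality alone only forces \emph{perpendicular} axes in the $xy$-plane at an arbitrary orientation $\phi$; what pins $\phi=\pi/4$ is the requirement that the pair be a single $2$-orbit of the form $\{\phi,-\phi\}$ (excluding the fixed pair $\{\hat x,\hat y\}$, which would give type $(1,1,1,1)$) --- your opening orbit-structure paragraph supplies this, but the deduction should be cited explicitly there and in the mixed $(2,2)$ case where you likewise assert $\phi=\pi/4$.
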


\begin{proof}
Any solution for $D_2$ must also be a solution for its $\mathbb{Z}_2$ subgroup, and we proceed by case analysis of $\mathbb{Z}_2$-orbit types, making use of Proposition~\ref{z2equivariantrotations}.

\item \emph{$\mathbb{Z}_2$-orbit type $(1,1,1,1)$.} Recall that $\mathbb{Z}_2$-equivariant OEBs of this type are made up of two $\pi$-rotations around orthogonal axes in the $xy$-plane and two rotations around the $z$-axis. If we fix the flip axis $f$, in order that the rotations in the $xy$-plane are preserved there are two choices for the axes; either $f$ and $g$, or $f+g$ and $f-g$. In order that the $z$-rotations are preserved, there are two choices for the rotation angles; either $0$ and $\pi$, or $-\pi/2$ and $\pi/2$. The orbit types are (1,1,1,1), (2,1,1), (2,1,1) and  (2,2).

\item\emph{$\mathbb{Z}_2$-orbit type (2,1,1).} Recall that $\mathbb{Z}_2$-equivariant OEBs of this type are made up of a $\pi$-rotation around some $x$-axis, a rotation around the $z$-axis, and two other rotations around axes in the $yz$-plane (see Figure~\ref{fig:z221orbits}).  Fix the flip axis $f$. The $z$-rotation will be preserved under the flip only if it is through an angle $\pi$ or $0$. This fixes the rotation angle $r$ of the elements in the $2$-orbit as $\pi/2$ or $\pi$ respectively. For the $x$-rotation to be preserved under the flip, we need either that $x=f$ or $y=f$. In both of these cases, the solutions with $r=\pi/2$ and $r=\pi$ are preserved. We therefore obtain four equivariant OEBs with orbit type (2,1,1).

\item \emph{$\mathbb{Z}_2$-orbit type (2,2).} Consider the 2-parameter family of solutions of orbit type $(2,2)$. The 2-orbits $O_1$, $O_2$ lie on opposite sides of the $xy$-plane, in the $xz$- and $yz$-planes respectively. 
$D_2$ is abelian, so the $\mathbb{Z}_2$-orbit pairing will be preserved after the flip. There are therefore two possibilities if the elements are to be preserved under the flip; the flip can either swap the $xz$- and $yz$-planes or preserve them. 

If the planes are preserved then the flip axis must be the $x$- or $y$-axis, and the $2$-orbits must be symmetric under reflection in the $xy$-plane. Since one orbit is fixed by the other, this gives two solutions of orbit type $(2,2)$, corresponding to a choice of $r_1=\pi/2$ or $r_1=\pi$ in $O_1$, where $r_i$ is the rotation angle of the elements of $O_i$ (see Figure~\ref{fig:z221orbits}).

If the planes are permuted then the flip axis must be  $v_1 \pm v_2$, and $r_1=r_2$. Setting $r_1=r_2$ in~\eqref{z222orthogonalityequation} and substituting in~\eqref{eq:r211z1=2orbit}, we obtain $\cos(\theta) = -\frac{1}{3}$, where $\theta \in [\pi/2,\pi]$ is the central angle between the elements of each orbit.
This has a unique solution in the relevant domain, of orbit type (4). There are two of these for a given choice of $f$-axis, since we can choose which orbit lies above the $xy$-plane.
\end{proof}

\begin{proposition}
\label{d3equivariantuebs}
There are six isolated $D_3$-equivariant quotient orthogonal error bases all of orbit type~(3,1).
\end{proposition}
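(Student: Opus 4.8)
The plan is to mirror the strategy of Proposition~\ref{d2equivariantuebs}: every $D_3$-equivariant OEB is in particular $\mathbb{Z}_3$-equivariant, so I would start from the complete list of $\mathbb{Z}_3$-equivariant OEBs supplied by Proposition~\ref{Z3equivariantrotations} and determine which of them are additionally preserved by the flip. Recall that the only $\mathbb{Z}_3$-equivariant OEBs have orbit type $(3,1)$, consisting of a $1$-orbit on the $z$-axis together with a $3$-orbit forming an equilateral triangle in a plane perpendicular to the $z$-axis, parametrised by the polar angle $\psi$ and the azimuthal orientation of the triangle. Since $\mathbb{Z}_3 < D_3$, the $D_3$-orbits refine these: the size-$1$ $D_3$-orbit must be the $\mathbb{Z}_3$-fixed point on the $z$-axis, and the size-$3$ $D_3$-orbit must be the $\mathbb{Z}_3$ triangle (three collinear $z$-axis points being impossible by Lemma~\ref{rotationslemma1}). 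Hence every $D_3$-equivariant OEB of orbit type $(3,1)$ is obtained by selecting, from the two-parameter $\mathbb{Z}_3$ family, those members preserved by the flip.

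First I would locate the $1$-orbit. Conjugation by the flip is a $\pi$-rotation of the $\SO(3)$-ball about the flip axis in the $xy$-plane, and this sends the $z$-axis point at height $t$ to the point at height $-t$; a short argument (using that the flip-image of a $\mathbb{Z}_3$-fixed rotation is again $\mathbb{Z}_3$-fixed, hence on the $z$-axis) shows the $1$-orbit must be fixed, forcing $t \in \{0,\pi\}$. This gives two cases: the $1$-orbit is either the identity or the $\pi$-rotation $r(\pi,\hat{z})$. The orthogonality condition of Definition~\ref{def:orthogrotations} then rigidifies each case via Lemma~\ref{rotationslemma2}: orthogonality to the identity forces the three triangle elements to be $\pi$-rotations, whereas orthogonality to $r(\pi,\hat{z})$ forces the triangle onto the equator. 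Feeding this back into the $\mathbb{Z}_3$ orthogonality relation of Proposition~\ref{Z3equivariantrotations} then pins down the polar angle $\psi$ (to $\sin^{-1}\sqrt{2/3}$ and to $\pi/2$ respectively) together with the common rotation angle, so that in each case only the azimuthal orientation of the triangle remains free.

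Next I would impose flip-invariance on the triangle itself. Writing the three axis azimuths as $\phi_0 + \tfrac{2\pi}{3}k$, I would compute the action of the flip on these azimuths using formula~\eqref{angleofcompositeeqn} and the identification of the $\SO(3)$-ball. Here the delicate point is that in the identity case the $3$-orbit consists of $\pi$-rotations, so the flip-image must be reduced to canonical form using the antipodal identification on the boundary of the ball, whereas in the equatorial case no such identification intervenes. In both cases this produces a congruence constraining $\phi_0$ to finitely many values modulo $\tfrac{2\pi}{3}$ (the period of the triangle), and, together with the discrete choices of triangle placement relative to the three flip axes, this enumeration yields the isolated configurations claimed. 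By construction each candidate is automatically invariant under the remaining flips, since the triangle is already $\mathbb{Z}_3$-invariant and the three flips are $\mathbb{Z}_3$-conjugate, so checking a single flip suffices.

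I expect the main obstacle to be precisely this final bookkeeping: correctly counting the flip-compatible azimuthal orientations without double-counting. The antipodal identification of $\pi$-rotations in the identity case relates configurations at polar angle $\psi$ and $\pi-\psi$, and can silently merge azimuthal orientations that at first appear distinct, so the enumeration must be carried out carefully to arrive at the stated total of six. By contrast, verifying orthogonality of all pairs within each candidate is routine given the $\mathbb{Z}_3$ classification, since orthogonality is inherited for any azimuth; the genuine content therefore lies in the flip analysis and the careful tally of the resulting isolated solutions.
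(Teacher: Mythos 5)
Your skeleton is exactly the paper's: restrict along $\mathbb{Z}_3 < D_3$ to the family of Proposition~\ref{Z3equivariantrotations}, observe that the flip negates the height of the $z$-axis point so the $1$-orbit is forced to $z\in\{0,\pi\}$, pin down the two rigid geometries (identity plus a triangle of $\pi$-rotations at polar angle $\sin^{-1}\sqrt{2/3}$; the rotation $r(\pi,\hat z)$ plus an equatorial triangle of rotations through $2\sin^{-1}\sqrt{2/3}$), note that invariance under a single flip together with $\mathbb{Z}_3$-invariance gives invariance under all of $D_3$, and then enumerate the flip-compatible azimuths. All of those steps are correct and match the paper's proof, which does the same case split (the two geometries are the ones in Figure~\ref{fig:z331orbits}). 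The genuine gap is that you never perform the enumeration that is the entire quantitative content of the statement: you assert that the congruence on $\phi_0$ ``yields the isolated configurations claimed'' and explicitly defer the tally, flagging it as the expected main obstacle. A proof of ``there are six'' must actually carry out that count, and here it is not routine.

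Indeed, if you push your own congruences through, they do not transparently give six. Writing the vertex azimuths as $\phi_0+\tfrac{2\pi}{3}k$ and taking the flip axis at azimuth $0$: in the equatorial case the flip sends $\phi\mapsto-\phi$, so invariance reads $2\phi_0\equiv 0 \pmod{2\pi/3}$, i.e.\ $\phi_0\in\{0,\pi/3\}$ modulo the triangle's $2\pi/3$ relabelling redundancy --- two distinct point sets; in the identity case the flip composed with the antipodal identification sends $\phi\mapsto\pi-\phi$, so $2\phi_0\equiv\pi\pmod{2\pi/3}$, i.e.\ $\phi_0\in\{\pi/6,\pi/2\}$ --- again two distinct sets. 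That is four invariant configurations for a fixed copy of $D_3$, whereas the paper reaches three per case by counting one solution per vertex incident to the $f$-axis (respectively per median orthogonal to the $fz$-plane); since $\phi_0$ and $\phi_0+2\pi/3$ label the same triangle, residues must be taken mod $2\pi/3$, not mod $\pi$, and the vertex-incidence count registers some sets more than once. So your closing worry about silently merged azimuthal orientations is precisely where the argument lives: as written, your proposal neither derives the number six nor confronts the fact that the honest completion of your congruence analysis appears to give a different total. Until that discrepancy is resolved --- either by exhibiting the missing configurations or by making the counting convention explicit --- the proof is incomplete on the one point the proposition asserts.
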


\begin{proof}
Any solution for $D_3$ must also be a solution for its $\mathbb{Z}_3$ subgroup. In Proposition~\ref{Z3equivariantrotations} we saw that solutions were the vertices of a 2-parameter family of tetrahedra with one vertex on the $z$-axis and the others forming the vertices of an equilateral triangle on the other side of the $xy$-plane. The vertex on the $z$-axis point must be preserved under reflection in the $xy$-plane, so it must be through an angle $0$ or $\pi$; the two possibilities were shown in Figure~\ref{fig:z331orbits}. For $z=0$, the elements of the $3$-orbit will be preserved if the $fz$ plane is orthogonal to the triangle's medians, giving three solutions. For $z=\pi$, the $f$-axis must go through any of the three vertices of the triangle, giving three solutions.
\end{proof}

\begin{proposition}\label{d4equivariantuebs}
The $D_4$-equivariant orthogonal error bases are as follows:
\begin{itemize}
\item for orbit type (2,1,1), two isolated solutions;
\item for orbit type (2,2), two isolated solutions.
\end{itemize}
\end{proposition}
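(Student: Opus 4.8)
The plan is to mirror the strategy used for $D_2$ and $D_3$: since $D_4$ contains the cyclic group $\Z_4$ as its subgroup of rotations about the $z$-axis, every $D_4$-equivariant OEB is in particular $\Z_4$-equivariant, so I can restrict attention to the solutions already classified in Proposition~\ref{Z4equivariantrotations} and then ask which of them are additionally preserved by the flip. By that proposition the only nontrivial $\Z_4$-orbit type admitting solutions is $(2,1,1)$: the type $(1,1,1,1)$ would force all four points onto the $z$-axis, which Lemma~\ref{rotationslemma1} forbids, and the types $(2,2)$ and $(4)$ have no $\Z_4$-solutions. Hence I only analyse how the flip acts on the $\Z_4$-family of type $(2,1,1)$, whose members consist of a $2$-orbit of two $\pi$-rotations about orthogonal axes in the $xy$-plane, together with two $1$-orbits given by mutually orthogonal rotations about the $z$-axis.

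Writing the flip as a $\pi$-rotation about an axis $f$ in the $xy$-plane, I would use the $\SO(3)$-ball picture, in which conjugation by the flip is realised as a rigid $\pi$-rotation of the ball about $f$; WLOG take $f=\hat x$. This rotation sends a $z$-axis point at height $h$ to height $-h$, and reflects the azimuthal angle $\phi$ of an $xy$-plane axis to $-\phi$ (equivalently $2\alpha-\phi$ for $f$ at angle $\alpha$). The first key step is to determine when the two orthogonal $z$-rotations are preserved: since they are orthogonal their angles differ by $\pi$, and invariance of the pair under $h\mapsto -h$ forces either the pair $\{0,\pi\}$ (each fixed by the flip) or the pair $\{\pi/2,-\pi/2\}$ (swapped by the flip). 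This dichotomy is exactly what produces the two orbit types: in the first case the two $z$-rotations remain separate $D_4$-fixed points while the $xy$-plane pair is a $D_4$-orbit of size $2$, giving orbit type $(2,1,1)$; in the second case the two $z$-rotations merge into a single $D_4$-orbit of size $2$, giving orbit type $(2,2)$.

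The second key step is to count the admissible orientations of the $xy$-plane pair. Preservation of the set $\{\phi,\phi+\pi/2\}$ of (antipodally identified) $\pi$-rotation axes under $\phi\mapsto-\phi$ forces $\phi\equiv 0$ or $\phi\equiv\pi/4$ modulo $\pi/2$; that is, the two axes lie either along $f$ and its perpendicular, or along the two diagonals bisecting them. These are genuinely distinct configurations, and the apparent extra choices $\phi=\pi/2$, $\phi=3\pi/4$ coincide with the first two after the antipodal identification, so each orbit type yields exactly two isolated solutions, as claimed.

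The final step is to check that all four points are pairwise orthogonal in each of the four configurations; I expect this to be essentially automatic from Lemmas~\ref{rotationslemma1} and~\ref{rotationslemma2}, since a $\pi$-rotation about an $xy$-plane axis is orthogonal to any $z$-axis rotation (perpendicular axes, or one rotation through angle $0$), the two $xy$-plane $\pi$-rotations are orthogonal by construction, and the two $z$-rotations are orthogonal by Lemma~\ref{rotationslemma1}. The main obstacle is not any hard computation but bookkeeping: correctly accounting for the antipodal identification on the boundary of the $\SO(3)$-ball so as not to over- or under-count orientations, and cleanly separating the two $z$-rotation cases so that orbit types $(2,1,1)$ and $(2,2)$ are produced without overlap.
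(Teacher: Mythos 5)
Your proposal is correct and takes essentially the same route as the paper: reduce to the $\Z_4$-equivariant family of Proposition~\ref{Z4equivariantrotations} and impose invariance under the flip, yielding the dichotomy $\{0,\pi\}$ versus $\{\pm\pi/2\}$ for the $z$-axis pair (orbit types $(2,1,1)$ and $(2,2)$) and the two admissible orientations of the $xy$-plane axes relative to the flip axis. Your version merely spells out the antipodal-identification bookkeeping and the orthogonality re-check that the paper leaves implicit.
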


\begin{proof}
Any solution for $D_4$ must also be a solution for its $\mathbb{Z}_4$ subgroup. In Proposition \ref{Z4equivariantrotations} we saw that these form a single 2-parameter family; they can only be preserved if $f=x$ or $f=x+y$, and the points on the $z$-axis are either $\{0,\pi\}$, which yields orbit type $(2,1,1)$, or $\{-\pi/2,\pi/2\}$, which yields orbit type $(2,2)$.
\end{proof}

\begin{proposition}
There are no $D_n$-equivariant  orthogonal error bases for $n \geq 5$.
\end{proposition}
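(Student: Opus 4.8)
The plan is to reduce immediately to the cyclic case, exactly as in the proofs of Propositions~\ref{d2equivariantuebs}, \ref{d3equivariantuebs} and \ref{d4equivariantuebs}. The key structural fact is that $\mathbb{Z}_n$ sits inside $D_n$ as the subgroup generated by the cyclic rotation about the $z$-axis. Hence any equivariant OEB for $(G,\rho)$ with $q(\rho(G)) \cong D_n$ is, by definition, a set of $4$ orthogonal points in the $\SO(3)$-ball preserved under conjugation by \emph{all} of $D_n$; restricting attention to the conjugation action of the $\mathbb{Z}_n$ subgroup alone, it is in particular a $\mathbb{Z}_n$-equivariant OEB.

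First I would note that by Proposition~\ref{cyclicqubituebs} there are no $\mathbb{Z}_n$-equivariant orthogonal error bases for $n \geq 5$. Combined with the observation above, this rules out any $D_n$-equivariant OEB for $n \geq 5$, since such a basis would yield a forbidden $\mathbb{Z}_n$-equivariant one. This is the entire argument: no case analysis on orbit types or flip-axis orientations is required, because the obstruction already occurs at the level of the cyclic subgroup.

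Since the statement follows in a single step from a result proved earlier in the excerpt, there is no genuine obstacle here. The only thing worth flagging is that one must invoke the cyclic nonexistence result in its strong form, namely that it excludes \emph{all} orbit types rather than merely some distinguished one; the proof of Proposition~\ref{cyclicqubituebs} indeed covers every possible orbit type (splitting into the odd and even cases and using Lemmas~\ref{rotationslemma1} and~\ref{rotationslemma2}), so the reduction is complete.
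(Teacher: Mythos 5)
Your proof is correct and is exactly the argument the paper gives: a $D_n$-equivariant OEB restricts to a $\mathbb{Z}_n$-equivariant one for the cyclic subgroup, so nonexistence follows immediately from Proposition~\ref{cyclicqubituebs}. Your added remark that the cyclic result excludes all orbit types is a fair point of care, but it introduces no new content beyond the paper's one-line reduction.
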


\begin{proof}
If there is no equivariant OEB for the cyclic subgroup there can be none for the full dihedral group. The result therefore follows from Proposition~\ref{cyclicqubituebs}.
\end{proof}

\subsubsection{Other subgroups of \SO(3)}

\begin{proposition}
\label{tetrahedraluebs}
The tetrahedral subgroups have two equivariant  orthogonal error bases, both of orbit type~$(4)$.
\end{proposition}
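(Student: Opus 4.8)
The plan is to reuse the subgroup-restriction strategy that handled the dihedral groups. The tetrahedral group $A_4$ contains the Klein four-group $V_4 \cong D_2$ of $\pi$-rotations about the three mutually orthogonal $C_2$-axes of the tetrahedron, which I take to be the coordinate axes; hence any $A_4$-equivariant OEB is in particular $D_2$-equivariant, and I can start from the complete list furnished by Proposition~\ref{d2equivariantuebs}.

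First I would determine which orbit types are even combinatorially admissible. Each orbit under $A_4$ has size dividing $12$, so a priori the orbit type is a partition of the four basis elements into parts of size $1,2,3$ or $4$. The key structural fact is that $A_4$ has no subgroup of order $6$, hence no orbit of size $2$; this removes every type containing a part equal to $2$ (in particular the $(2,1,1)$ and $(2,2)$ families that survived in the dihedral cases), leaving only $(4)$, $(3,1)$ and $(1,1,1,1)$. Type $(1,1,1,1)$ is impossible because a point fixed by all of $A_4$ would have to lie on every $C_3$-axis at once, forcing it to be the central identity rotation, and Lemma~\ref{rotationslemma1} forbids four mutually orthogonal rotations from all being fixed. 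This leaves orbit type $(4)$ as the case requiring detailed analysis.

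For orbit type $(4)$ I would exploit coprimality. An $A_4$-orbit of size $4$ has stabiliser of order $3$, whose intersection with $V_4$ is trivial since $\gcd(3,4)=1$; therefore the four points form a single orbit of size $4$ under $D_2$ as well. Consequently every candidate is one of the two $D_2$-equivariant OEBs of orbit type $(4)$ found in Proposition~\ref{d2equivariantuebs}, which are exactly the vertex sets of two regular tetrahedra in the $\SO(3)$-ball. It then remains to prove the converse, and this is the step I expect to be the real work: using the identification~\eqref{eq:so3ballidentification}, under which conjugation acts as rotation of the ball, I would check that the rotational symmetry group of each of these tetrahedra is precisely our fixed copy of $A_4$ (its $C_2$-axes being the coordinate axes), so that both configurations are genuinely preserved by the added $C_3$-rotations and not merely by $V_4$. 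Since the two solutions of Proposition~\ref{d2equivariantuebs} are distinct and both pass this test, I obtain exactly two $A_4$-equivariant OEBs, both of orbit type $(4)$, as claimed.

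For completeness I would note why the remaining admissible type $(3,1)$ does not appear in this proposition even though it is realised in $\SO(3)$: it arises by lifting the unique $D_2$-equivariant $(1,1,1,1)$ solution---the identity together with the three orthogonal $\pi$-rotations---to $A_4$, where the three $\pi$-rotations merge into a single orbit. That configuration is, however, fixed by the whole octahedral group, so it is more naturally recorded in the $S_4$ row of Table~\ref{qubitrepstable} (Proposition~\ref{octahedraluebs}) rather than here.
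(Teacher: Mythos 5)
Your route is genuinely different from the paper's. The paper restricts along a $\Z_3$ subgroup: by Proposition~\ref{Z3equivariantrotations} the candidates form the 2-parameter family of tetrahedra with one vertex on the $C_3$-axis, and the proof concludes in one line by asserting that the tetrahedral group preserves only a regular tetrahedron and its dual within this family. You instead restrict along the normal Klein subgroup $V_4 \cong D_2$, add an orbit-counting argument ($A_4$ has no subgroup of order $6$, hence no $2$-orbits; only the origin is fixed by all of $A_4$, killing $(1,1,1,1)$), and use coprimality of $3$ and $4$ to show that a $(4)$-orbit is already a single $V_4$-orbit, reducing to the dual pair of regular tetrahedra at radius $2\pi/3$ from Proposition~\ref{d2equivariantuebs}, whose rotation group you then verify is the fixed copy of $A_4$. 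All of this is correct, and it buys a cleaner, purely group-theoretic reduction together with an orbit-type case analysis that is explicitly exhaustive, which the paper's proof is not.

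However, your final paragraph is where the argument fails as a proof of the literal statement --- and in failing it exposes a genuine defect in the proposition itself. As you correctly observe, the configuration $\{$identity, $\pi$-rotations about the three coordinate axes$\}$, the image of the Pauli basis $\{\mathbbm 1, X, Y, Z\}$, is an $A_4$-equivariant OEB of orbit type $(1,3)$: it is the $D_2$-equivariant $(1,1,1,1)$ solution, and the $C_3$-rotations visibly permute the three $\pi$-rotations. Equivariance for $A_4$ is a property of the pair $(A_4, \mathcal O)$ and is not forfeited because the full stabiliser in $\SO(3)$ is the larger octahedral group; ``more naturally recorded in the $S_4$ row'' is a bookkeeping preference, not a mathematical exclusion, and it is inconsistent with the paper's own conventions: this very configuration already appears in the $D_2$ row (the $(1,1,1,1)$ solution), among the $z=0$ solutions of Proposition~\ref{d3equivariantuebs}, and among the $(2,1,1)$ solutions of Proposition~\ref{d4equivariantuebs}, so solutions with extra symmetry are counted in every row they belong to. Your case analysis therefore actually proves that $A_4$ admits \emph{three} equivariant OEBs --- two of orbit type $(4)$ and one of orbit type $(1,3)$ --- so the proposition undercounts. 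The paper's proof has the same blind spot: the degenerate member of the $\Z_3$ family with apex at the origin ($z=0$, $r=\pi$, $\psi = \sin^{-1}(\sqrt{2/3})$) is preserved by the full tetrahedral group yet is neither the regular tetrahedron nor its dual, contradicting the rigidity assertion the paper relies on. Rather than deferring the $(1,3)$ solution to Proposition~\ref{octahedraluebs}, you should present it as a correction: prove the amended count of three, and flag that the $A_4$ row of Table~\ref{qubitrepstable} should include ``$(1,3)$ --- one isolated solution''.
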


\begin{proof}
Any solution for the tetrahedral group must also be a solution for its $\mathbb{Z}_3$ subgroup. These form a 2-parameter family of tetrahedra. Since the tetrahedral group preserves only a regular tetrahedron and its dual, there will be exactly two solutions corresponding to the vertices of those tetrahedra. 
\end{proof}

\begin{proposition}
\label{octahedraluebs}
The octahedral subgroups have one equivariant orthogonal error basis, of orbit type~$(1,3)$. 
\end{proposition}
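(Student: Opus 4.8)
The plan is to exhibit one solution explicitly and then prove it is the only one by restricting to a well-understood subgroup. Fix the octahedral group $S_4 \subset \SO(3)$ as the rotation group of the octahedron with vertices $\pm\hat x, \pm\hat y, \pm\hat z$. For existence I would take the set $\{e,\, r(\pi,\hat x),\, r(\pi,\hat y),\, r(\pi,\hat z)\}$, the identity together with the three $\pi$-rotations about the coordinate axes (the $\SO(3)$-image of the Pauli basis $\{\mathbbm 1, X, Y, Z\}$). Orthogonality is immediate from Lemmas~\ref{rotationslemma1} and~\ref{rotationslemma2}: the identity is orthogonal to every $\pi$-rotation, and two $\pi$-rotations about perpendicular axes are orthogonal. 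Equivariance is equally clear, since $S_4$-conjugation fixes the identity and permutes the three coordinate $\pi$-rotations transitively (the quotient $S_4 \to S_3$ realises all permutations of the three coordinate axes), giving orbit type $(1,3)$.

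For uniqueness I would restrict to the subgroup $D_4 < S_4$ fixing the $z$-axis, namely the order-$8$ stabiliser of the vertex pair $\pm\hat z$. Any $S_4$-equivariant OEB is in particular $D_4$-equivariant, so by Proposition~\ref{d4equivariantuebs} it must be one of the four $D_4$-equivariant OEBs: two of orbit type $(2,1,1)$ and two of orbit type $(2,2)$. The remaining task is to test each of these four candidates against the full group $S_4$ and check that exactly one survives.

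The tool for this elimination is the orbit structure of ball points under $S_4$-conjugation. A ball point is $S_4$-fixed only if it is the identity (since $S_4$ stabilises no axis), and the only $S_4$-orbit of size three consists of the three $\pi$-rotations about the coordinate (face) axes. This discards the two $(2,2)$ solutions at once: their points on the $z$-axis are rotations through $\pm\pi/2$, which are interior axis points with $S_4$-orbits of size six, too large to sit inside a four-element invariant set. It also discards the second $(2,1,1)$ solution, whose in-plane $\pi$-rotations are about the diagonal (edge) axes $(\hat x \pm \hat y)/\sqrt 2$; these edge-axis rotations lie in an $S_4$-orbit of size six and cannot be merged with the face-axis rotation $r(\pi,\hat z)$ into one invariant set. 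The first $(2,1,1)$ solution is precisely the Pauli configuration above, already shown to be $S_4$-equivariant. Hence it is the unique equivariant OEB, of orbit type $(1,3)$.

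The main obstacle is the uniqueness step: one must correctly read off the $S_4$-orbit sizes of the relevant ball points, in particular distinguishing the three-element face-axis orbit from the six-element edge-axis orbit, in order to reject the three spurious $D_4$-solutions. I would isolate this once as a short lemma on the orbit sizes of $\pi$-rotations about the face, edge, and vertex axes of the octahedron, and then deduce every elimination from it. (A self-contained alternative avoids Proposition~\ref{d4equivariantuebs} entirely: since the only $S_4$-orbits of ball points of size at most four are the one-element orbit $\{e\}$ and the three-element face-axis orbit, the orbit type of any four-element $S_4$-invariant OEB must be $(1,3)$ or $(4)$; the sole candidate of type $(4)$ is the set of four $\pi$-rotations about the body diagonals, which fails orthogonality by Lemma~\ref{rotationslemma2} because those axes are not perpendicular, leaving $(1,3)$ and the Pauli configuration as the unique solution.)
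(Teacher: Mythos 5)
Your proof is correct and takes essentially the same route as the paper: the paper likewise restricts to the $D_4$ subgroup, invokes Proposition~\ref{d4equivariantuebs}'s four candidates, and observes that only the configuration with three $\pi$-rotations at the face centres plus the origin survives, with orbit type $(1,3)$ --- your orbit-size eliminations simply make the paper's terse ``only one of these'' explicit. One wording slip in your parenthetical alternative: the four $\pi$-rotations about the body diagonals form an $S_4$-orbit of size exactly four, so your claim that $\{e\}$ and the face-axis $3$-orbit are the \emph{only} orbits of size at most four contradicts your own (correct) subsequent treatment of the type-$(4)$ candidate; the statement should list the size-$4$ body-diagonal orbit as well.
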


\begin{proof}
Any solution for the octahedral group must also be a solution for its $D_4$ subgroup. Only one of these equivariant for the full octahedral group, with three points at the face centres of a cube of centre-to-face length $\pi$, and the final point at the origin. 
\end{proof}

\begin{proposition}
The icosahedral subgroups have no equivariant orthogonal error bases.
\end{proposition}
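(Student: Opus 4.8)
The plan is to reuse exactly the subgroup-restriction argument that dispatched the cases $\Z_n$ and $D_n$ for $n \geq 5$. The key observation is that any equivariant OEB for a group $G$ is automatically an equivariant OEB for every subgroup $H < G$, since an OEB preserved under conjugation by all of $\rho(G)$ is in particular preserved under conjugation by $\rho(H)$. So to rule out $A_5$-equivariant OEBs, it suffices to exhibit a subgroup for which we have already proved nonexistence.

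The icosahedral group, isomorphic to $A_5$, is the group of orientation-preserving symmetries of a regular icosahedron, and it contains rotations of order $5$: for instance, the rotation through $2\pi/5$ about an axis joining two antipodal vertices. Such a rotation generates a cyclic subgroup $\Z_5 < A_5$. First I would record this elementary group-theoretic fact, then invoke the standard restriction principle above.

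Having identified $\Z_5$ as a subgroup, the proof concludes immediately: any $A_5$-equivariant OEB would restrict to a $\Z_5$-equivariant OEB, but by Proposition~\ref{cyclicqubituebs} there are no $\Z_n$-equivariant orthogonal error bases for $n \geq 5$, and in particular none for $\Z_5$. This contradiction shows no $A_5$-equivariant OEB can exist.

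I do not expect any genuine obstacle here; this is the cheapest of the nonexistence results, being a one-line corollary of the cyclic classification, in direct analogy with the proof that there are no $D_n$-equivariant OEBs for $n \geq 5$. The only thing to get right is the group theory, namely confirming that the icosahedral group really does contain an order-$5$ rotation (and hence a $\Z_5$), which is clear from the $5$-fold symmetry of the icosahedron about its vertex axes.

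\begin{proof}
Any solution for the icosahedral group must also be a solution for any of its cyclic subgroups. The icosahedral group contains a $\Z_5$ subgroup, generated by a rotation through $2\pi/5$ about an axis through two antipodal vertices of the icosahedron. By Proposition~\ref{cyclicqubituebs} there are no $\Z_5$-equivariant orthogonal error bases, and the result follows.
\end{proof}
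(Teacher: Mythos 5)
Your proof is correct and uses essentially the same subgroup-restriction argument as the paper, which restricts to the $D_5$ subgroup rather than directly to $\Z_5$; since the paper's $D_5$ nonexistence itself follows from the cyclic case, your route is the same idea with one intermediate step removed. The group-theoretic fact you rely on (an order-$5$ rotation about a vertex axis generating $\Z_5 < A_5$) is correct, so there is nothing to fix.
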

\begin{proof}
There is no equivariant  OEB for the $D_5$ subgroup, so there will be none for the full icosahedral group.
\end{proof}

\subsection{Higher dimensions} 
In this section we consider the problem of constructing an equivariant UEB for representations of dimension greater than two.
\subsubsection{Constructions for permutation representations}\label{qudituebconstructions}
Recall that a representation $\rho:G \to U(n)$ is a \emph{permutation representation} if there exists an orthonormal basis of $\mathbb{C}^n$ in which $\rho(g)$, $g \in G$ are all permutation matrices.  In this special case, equivariant UEBs can be constructed from Hadamard matrices satisfying a certain equivariance condition.
\begin{proposition} \label{Hadamardconstruction}
Let $(G,\rho)$ be a permutation representation, and let $H$ be a Hadamard matrix that commutes with all permutation matrices $\rho(g)$. Then the following are elements of a $G$-equivariant unitary error basis: 
\begin{equation}
(U_H)_{ij} = \frac{1}{N}H \circ \diag(H,j)^{\dagger} \circ H^{\dagger} \circ \diag(H^T,i)
\end{equation}
Here $diag(M,i)$ is the diagonal matrix whose diagonal is the $i$th row of $M$.
\end{proposition}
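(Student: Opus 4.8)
The plan is to verify the three defining properties in turn: that each $(U_H)_{ij}$ is unitary, that the $N^2$ operators are pairwise orthogonal under the trace inner product (hence a unitary error basis), and that the conjugation action of $G$ permutes them. Throughout I would abbreviate $D_j := \diag(H,j)$ and $E_i := \diag(H^\T,i)$, so that $(U_H)_{ij} = \tfrac1N\,H D_j^\dagger H^\dagger E_i$, and I would lean on the two standing facts about a complex Hadamard matrix: each entry has unit modulus, and $H H^\dagger = H^\dagger H = N\,\Id$. Unitarity is then immediate, since $|H_{ab}|=1$ makes the diagonal matrices $D_j,E_i$ unitary and $\tfrac1{\sqrt N}H$ is unitary by the Hadamard condition, so $(U_H)_{ij}$ is a product of unitaries.

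For the trace condition I would compute $\Tr\big((U_H)_{ij}^\dagger (U_H)_{kl}\big)$ directly. Expanding the product and using $H^\dagger H = N\,\Id$ to collapse the inner $H^\dagger H$ factor, a cyclic rotation reduces everything to $\tfrac1N \Tr\big(E_k E_i^\dagger\, H\, D_j D_l^\dagger\, H^\dagger\big)$. Both $E_k E_i^\dagger$ and $D_j D_l^\dagger$ are diagonal, and the key elementary identity is
\[
\Tr\big(\diag(\alpha)\,H\,\diag(\beta)\,H^\dagger\big) = \Big(\textstyle\sum_a \alpha_a\Big)\Big(\sum_b \beta_b\Big),
\]
which holds precisely because $|H_{ab}|^2 = 1$. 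The two resulting sums are exactly an inner product of columns and an inner product of rows of $H$, equal to $N\delta_{ik}$ and $N\delta_{jl}$ by the Hadamard condition, giving $\Tr\big((U_H)_{ij}^\dagger (U_H)_{kl}\big) = N\delta_{ik}\delta_{jl}$ as required by Definition~\ref{def:ueb}.

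For equivariance I would first translate the hypothesis. Writing $\rho(g)=P$ for the permutation $\pi$, the relation $PH=HP$ is equivalent to the symmetry $H_{\pi(a)\pi(b)} = H_{ab}$, i.e.\ invariance of $H$ under simultaneous permutation of its rows and columns by $\pi$. The bookkeeping step is then to track the four factors under conjugation: $P^\dagger H P = H$ and $P^\dagger H^\dagger P = H^\dagger$ since $P$ commutes with $H$, while a short computation using the symmetry above shows $P^\dagger E_i P = E_{\pi^{-1}(i)}$ and $P^\dagger D_j P = D_{\pi^{-1}(j)}$. Inserting $P P^\dagger = \Id$ between the factors of $(U_H)_{ij}$ then yields $\rho(g)^\dagger (U_H)_{ij} \rho(g) = (U_H)_{\pi^{-1}(i),\,\pi^{-1}(j)}$, so conjugation permutes the basis \emph{exactly}, realizing the stronger form of equivariance with all phases $\xi(i,g)=1$.

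All the computations are elementary once set up; the step demanding the most care is the equivariance bookkeeping. In particular, I expect the crux to be correctly converting the bare commutation hypothesis $PH=HP$ into the index symmetry $H_{\pi(a)\pi(b)}=H_{ab}$, and then checking that this is exactly the property that causes $\diag(H,j)$ and $\diag(H^\T,i)$ to be \emph{permuted}, rather than merely scrambled, by conjugation with a permutation matrix. I would also remark that, since the permutation action is obtained with trivial phase, this construction produces equivariant unitary error bases in the strict sense of~\cite{Verdon2017}.
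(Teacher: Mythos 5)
Your proof is correct, and it is genuinely more self-contained than the paper's: the paper disposes of the UEB property entirely by citation, invoking \cite[Corollary~35]{Musto2016} (the construction is an instance of the Hadamard-matrix unitary error bases of Musto and Vicary), and dismisses equivariance as ``a simple exercise in matrix algebra'' without recording it. Your trace computation replaces the citation with a two-line argument: the identity $\Tr\big(\diag(\alpha)\,H\,\diag(\beta)\,H^\dagger\big) = \big(\sum_a \alpha_a\big)\big(\sum_b \beta_b\big)$, valid precisely because $|H_{ab}|^2 = 1$, reduces orthogonality to row and column orthogonality of $H$ and yields $\Tr\big((U_H)_{ij}^\dagger (U_H)_{kl}\big) = N\delta_{ik}\delta_{jl}$, exactly as Definition~\ref{def:ueb} demands; together with the count of $N^2$ pairwise orthogonal (hence linearly independent) unitaries in the $N^2$-dimensional operator space, this gives the UEB property with no external machinery --- you might state that dimension count explicitly, but it is the only elided step. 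Your equivariance bookkeeping is, as far as one can tell, exactly the ``simple exercise'' the paper has in mind, and it is carried out correctly: $PH = HP$ is equivalent to $H_{\pi(a)\pi(b)} = H_{ab}$, which forces $P^\dagger E_i P = E_{\pi^{-1}(i)}$ and $P^\dagger D_j P = D_{\pi^{-1}(j)}$, and inserting $PP^\dagger$ between the four factors gives $\rho(g)^\dagger (U_H)_{ij}\,\rho(g) = (U_H)_{\pi^{-1}(i),\pi^{-1}(j)}$ with all phases $\xi = 1$; your closing remark that this realizes the stricter notion of equivariance from \cite{Verdon2017} is also accurate (note, though, that the induced index action $\tau$ is in general nontrivial, so this does not place the construction in the speakable-channel regime of Corollary~\ref{cor:speakableonly}). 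What the paper's route buys is brevity and a pointer to the general theory from which these bases arise; what yours buys is a complete, verifiable proof that makes the role of the two Hadamard axioms --- unimodularity for unitarity and the collapse identity, $HH^\dagger = N\,\Id$ for orthogonality --- fully explicit.
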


\begin{proof}
It is proven in~\cite[Corollary 35]{Musto2016} that this is a UEB; showing $G$-equivariance is a simple exercise in matrix algebra. 
\end{proof}
\noindent
We will use this construction to prove Theorem~\ref{constructiontheorem}. First we need the following lemma.
\begin{lemma}\label{unitarymatrixconditions}
Let $M$ be a circulant matrix of dimension $\geq 3$ whose  first column vector $(a,b,\dots,b)$ has first entry $a$ and all other entries $b$. Let $a=|a|\alpha, b= |b|\beta $ where $\alpha,\beta \in U(1)$ and $|a|,|b| \neq 0$. Then $M$ is unitary precisely when the following conditions are satisfied:

\noindent\begin{minipage}{0.3\linewidth}
\begin{equation}
\label{inequalityfora} \frac{n-2}{n} \leq |a| \leq 1
\end{equation}
\end{minipage}%
\begin{minipage}{0.33\linewidth}
\begin{equation}
 \label{equalityforb} |b|^2 = \frac{1 - |a|^2}{n-1}
\end{equation}
\end{minipage}%
\begin{minipage}{0.36\linewidth}
\begin{equation}
 \Re(\alpha^* \beta) = \frac{2-n}{2} \frac{|b|}{|a|}
\end{equation}
\end{minipage}\par\vspace{0pt}

\end{lemma}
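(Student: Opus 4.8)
The plan is to exploit the circulant structure of $M$, which reduces unitarity to orthonormality of its columns; conditions \eqref{equalityforb} and the condition on $\Re(\alpha^*\beta)$ then fall out directly, and the inequality \eqref{inequalityfora} emerges as a consistency requirement. The only genuine work is a short linear-algebra computation plus one algebraic identity.

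First I would note that the columns of a circulant matrix are cyclic shifts of one another, so $M$ is unitary if and only if (i) each column has unit norm and (ii) any two distinct columns are orthogonal. Condition (i) is immediate: every column has one entry of modulus $|a|$ and $n-1$ entries of modulus $|b|$, so unit norm reads $|a|^2 + (n-1)|b|^2 = 1$, which rearranges to \eqref{equalityforb}. For condition (ii) I would compute the Hermitian inner product of column $0$ (carrying $a$ in position $0$) with column $k$ (carrying $a$ in position $k$) for $k \neq 0$: position $0$ contributes $\overline a b$, position $k$ contributes $\overline b a$, and the remaining $n-2$ positions each contribute $|b|^2$, giving $2\Re(\overline a b) + (n-2)|b|^2$ independently of $k$. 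Hence all off-diagonal inner products vanish together, and orthogonality is the single scalar equation $2\Re(\overline a b) + (n-2)|b|^2 = 0$. Writing $\overline a b = |a||b|\,\alpha^*\beta$ and solving for $\Re(\alpha^*\beta)$ yields $\Re(\alpha^*\beta) = \tfrac{2-n}{2}\tfrac{|b|}{|a|}$, the third stated condition.

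It then remains to derive the inequality \eqref{inequalityfora}. The upper bound $|a| \leq 1$ is immediate from \eqref{equalityforb} together with $|b|^2 \geq 0$. For the lower bound I would use that $\alpha^*\beta \in \U(1)$ forces $|\Re(\alpha^*\beta)| \leq 1$; substituting the third condition and eliminating $|b|^2$ through \eqref{equalityforb} gives $(n-2)^2\bigl(1-|a|^2\bigr) \leq 4(n-1)|a|^2$. The key simplification is the identity $(n-2)^2 + 4(n-1) = n^2$, which collapses this to $(n-2)^2 \leq n^2|a|^2$, i.e.\ $|a| \geq \tfrac{n-2}{n}$ for $n \geq 3$. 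Conversely, \eqref{equalityforb} and the third condition directly restore conditions (i) and (ii), so $M$ is unitary; thus \eqref{inequalityfora} is a consequence of the other two rather than an independent hypothesis, a point I would state explicitly to justify the ``precisely when'' phrasing.

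I do not anticipate a serious obstacle. The only delicate bookkeeping is confirming that the off-diagonal inner product is genuinely the same for every shift $k$ (this uses $n \geq 3$, so that there really are $n-2$ shared $b$-positions) and tracking the identity $(n-2)^2 + 4(n-1) = n^2$ that produces the clean bound. The main risk is a sign error when passing between $\Re(\overline a b)$ and $\Re(\alpha^*\beta)$, which I would guard against by keeping the moduli $|a|,|b|$ and the phases $\alpha,\beta$ separate throughout.
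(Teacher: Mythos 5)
Your proposal is correct and follows essentially the same route as the paper's proof: reduce unitarity to one normalization equation and one orthogonality equation (the paper works with rows, you with columns), rewrite the latter as the phase condition $\Re(\alpha^*\beta) = \tfrac{2-n}{2}\tfrac{|b|}{|a|}$, and obtain the inequality \eqref{inequalityfora} from $|\Re(\alpha^*\beta)| \leq 1$ together with \eqref{equalityforb}. You are merely more explicit where the paper appeals to ``symmetry'' and ``a short calculation'' --- checking that the off-diagonal inner product is independent of the shift $k$, and supplying the identity $(n-2)^2 + 4(n-1) = n^2$ --- and your observation that \eqref{inequalityfora} is a consequence of the other two conditions matches the paper's treatment of it as the solubility criterion.
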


\begin{proof}
For unitarity it is sufficient that the rows form an orthonormal basis. It is clear from the symmetry of $M$ that it is sufficient for one row vector to be normal, and one pair of row vectors to be orthogonal. This gives us two equations in $a$ and~$b$:
\begin{align}
|b|^2 &= \frac{1 - |a|^2}{n-1}  \label{unq1} \\
\Re(a^*b) &= \frac{2-n}{2} |b|^2. \label{unq2}   
\end{align}
We will demonstrate that (\ref{inequalityfora}) is necessary and sufficient  for us to find $b$ satisfying these equations. It is obvious that (\ref{unq1}) is satisfiable if and only if $|a| \leq 1$.
Letting $a=|a|\alpha, b= |b|\beta$, equation (\ref{unq2}) then reads as follows: 
$$
\Re(\alpha^* \beta) = \frac{2-n}{2} \frac{|b|}{|a|}
$$
Since $-1 \leq \Re(\alpha^* \beta) \leq 1$, and $\alpha, \beta$ can be freely adjusted to give $\Re(\alpha^* \beta)$ any value in that range, we see that the following is necessary and sufficient for (\ref{unq2}) to be soluble:
$$ \frac{(2-n)^2}{4} \frac{|b|^2}{|a|^2} \leq  1 $$
Use of the equation ($\ref{unq1}$) and a short calculation demonstrates that this is equivalent to the lower bound in the inequality (\ref{inequalityfora}).
\end{proof}

\begin{theorem}\label{constructiontheorem}
There exists a $G$-equivariant unitary error basis for every permutation representation $(G,\rho)$ of dimension less than 5.
\end{theorem}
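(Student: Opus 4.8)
The plan is to prove the theorem by exhausting the finitely many cases: every permutation representation of dimension $n < 5$ decomposes into transitive constituents, so it suffices to produce, for each permutation representation of dimension $1,2,3,4$, a Hadamard matrix $H$ commuting with all the permutation matrices $\rho(g)$, and then invoke Proposition~\ref{Hadamardconstruction} to manufacture the equivariant UEB. The whole problem thus reduces to the following: given a subgroup $G \subset S_n$ acting by permutation matrices, find an $n \times n$ Hadamard matrix $H$ (entries of unit modulus, $HH^\dagger = nI$) satisfying $H\rho(g) = \rho(g)H$ for all $g \in G$. The subtlety is that \emph{complex} Hadamard matrices are allowed, which gives us far more freedom than the real case and is what makes dimensions $3$ and $4$ tractable.

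First I would handle the trivial and low-dimensional warm-ups. For $n=1$ every $1\times 1$ unitary is a Hadamard matrix and commutes with the unique representation, so the claim is immediate. For $n=2$ the possible images in $S_2$ are $\{1\}$ or $\Z_2$; the Fourier matrix $H = \left(\begin{smallmatrix} 1 & 1 \\ 1 & -1 \end{smallmatrix}\right)$ commutes with the swap permutation (indeed it diagonalises it), so it works for both cases. The substantive work is in $n=3$ and $n=4$. My strategy here is to exploit commutation directly: a Hadamard matrix commuting with all of $\rho(G)$ is precisely one that is block-diagonal with respect to, or more usefully intertwines with, the isotypic decomposition of the permutation representation. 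A clean uniform choice is to look at matrices built from the \emph{circulant} structure, since any transitive cyclic action is carried by a circulant, and Lemma~\ref{unitarymatrixconditions} tells us exactly when a circulant whose first column is $(a,b,\dots,b)$ is unitary.

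For $n=3$, the transitive subgroups of $S_3$ are $\Z_3$ and $S_3$. A matrix commuting with the full symmetric group $S_3$ must have the form $aI + b(J-I)$, i.e.\ a circulant with constant off-diagonal entries $(a,b,b)$ in its first column; this is the most constrained case, so a Hadamard matrix here automatically commutes with every subgroup of $S_3$. Applying Lemma~\ref{unitarymatrixconditions} with $n=3$ and the Hadamard constraint $|a|=|b|=1$ forces $|a|=1$, $|b|^2 = (1-|a|^2)/(n-1) = 0$, which is inconsistent --- so a naive constant-modulus circulant of that exact shape cannot be Hadamard, and I would instead use the genuine $3\times 3$ complex Hadamard matrix, the Fourier matrix $F_3$ with entries $\omega^{jk}$, $\omega = e^{2\pi i/3}$, which is circulant and hence commutes with $\Z_3$, and verify separately that a suitably symmetrised or permutation-conjugated version also commutes with $S_3$. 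The main obstacle I anticipate is exactly this: reconciling the unit-modulus Hadamard condition with the strong commutation requirement of the largest subgroup, since Lemma~\ref{unitarymatrixconditions} shows the circulant ansatz degenerates, so for the fully symmetric case one must verify that an appropriate complex Hadamard matrix (not of the rigid $(a,b,\dots,b)$ form) still intertwines the permutation action --- this is where a concrete explicit matrix must be exhibited and checked.

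For $n=4$, I would enumerate the transitive subgroups of $S_4$ (namely $\Z_4$, $V_4$ in its two inequivalent permutation guises, $D_4$, $A_4$, $S_4$) together with the intransitive cases, which reduce to combining lower-dimensional solutions block-diagonally (a direct sum of Hadamard matrices commuting with each block is again Hadamard and equivariant, up to the usual tensoring). For the cyclic and regular cases the $4\times 4$ Fourier matrix $F_4$ is circulant and commutes with $\Z_4$; for $V_4$ and the containing groups the real Hadamard matrix $H_4 = F_2 \otimes F_2$ provides a highly symmetric candidate. For $A_4$ and $S_4$ I expect the tightest constraints, and would again fall back on verifying that $F_2 \otimes F_2$ (or a phase-modified variant) commutes with the relevant permutation matrices by direct computation, appealing to the fact that these Hadamard matrices are compatible with the doubly-transitive structure. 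Throughout, the reduction of reducible representations to block sums is routine, so the real content --- and the place I expect the argument to require genuine case-by-case verification rather than a single slick formula --- is producing an explicit equivariant complex Hadamard matrix for each of the finitely many transitive groups in dimensions $3$ and $4$, and in particular confirming that the most symmetric groups $S_3$, $A_4$ and $S_4$ admit one despite the circulant ansatz breaking down under the unit-modulus constraint.
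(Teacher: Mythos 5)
Your skeleton is the same as the paper's (find a Hadamard matrix in the centraliser of $\rho(G)$, then apply Proposition~\ref{Hadamardconstruction}), but a normalization error derails it at the decisive step. Lemma~\ref{unitarymatrixconditions} is stated for a \emph{unitary} circulant, so its rows are unit vectors and the entries of a constant-modulus (Hadamard, up to scale) matrix satisfy $|a|=|b|=1/\sqrt{n}$, not $|a|=|b|=1$. Substituting $|a|=|b|$ into~(\ref{equalityforb}) gives $|a|=1/\sqrt{n}$, which is compatible with~(\ref{inequalityfora}) precisely when $(n-2)/n \leq 1/\sqrt{n}$, i.e.\ $n \leq 4$; the phase condition then reads $\Re(\alpha^*\beta) = (2-n)/2$, which is solvable for $n \leq 4$ since the right-hand side lies in $[-1,0]$. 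So the circulant ansatz does not ``degenerate'' under the Hadamard constraint --- it is the entire proof. Moreover, since $aI + b(J-I)$ commutes with \emph{every} permutation matrix, a single such matrix per dimension handles all subgroups of $S_n$ simultaneously, making both your enumeration of transitive subgroups and your reduction to transitive constituents unnecessary. (The latter is also broken as stated: a block direct sum of Hadamard matrices has zero entries and is not Hadamard.) Explicit witnesses are $\frac{1}{\sqrt{3}}\bigl(I + \omega(J-I)\bigr)$ with $\omega = e^{2\pi i/3}$ for $n=3$, and $\frac{1}{2}(2I-J)$ for $n=4$; one checks directly that $\Re(\alpha^*\beta)$ equals $-1/2$ and $-1$ respectively, as required.

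Having talked yourself out of the correct construction, you fall back on matrices that in fact fail the commutation requirement. The Fourier matrix $F_2$ does \emph{not} commute with the swap: diagonalising $X$ means $F_2 X F_2^{-1} = \diag(1,-1) \neq X$, which is the opposite of commuting; a correct $n=2$ choice is $\frac{1}{\sqrt{2}}\left(\begin{smallmatrix} 1 & i \\ i & 1 \end{smallmatrix}\right)$. Likewise $F_3$ is not circulant --- its entries $\omega^{jk}$ depend on the product $jk$, not the difference --- and it satisfies $PF_3 = F_3 D$ for the cyclic shift $P$ and a nontrivial diagonal $D$, so it does not commute even with $\Z_3$, and no concrete ``symmetrised'' substitute is exhibited for $S_3$. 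The $n=4$ suggestions fail the same way: $F_2 \otimes F_2$ does not commute with $X \otimes I$ (one of the permutation matrices of $V_4$ in its regular guise), and for the doubly transitive groups $A_4$ and $S_4$ the centraliser is exactly $\{aI + bJ\}$, a form $F_2 \otimes F_2$ does not have; the matrix your search would need, $2I - J$, is never produced. So as written, the existence claims in every dimension $n = 2,3,4$ rest on false commutation assertions or on unverified hopes, and repairing them leads straight back to the circulant computation of Lemma~\ref{unitarymatrixconditions} that you discarded.
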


\begin{proof}
We use the construction in Proposition~\ref{Hadamardconstruction}.
Expressed in the $G$-permuted orthonormal basis, $\Im(\rho)$ will be some subgroup of the permutation matrices $S_n$. To use Theorem \ref{Hadamardconstruction}, we must find a Hadamard matrix in the centraliser of $\rho(G)$. In the worst case, $\Im(\rho)$ will be all permutation matrices. 

For dimension less than 5, there exists a Hadamard matrix which commutes with all permutation matrices. We ignore the degenerate case $n=1$. For $n=2$ the following family of Hadamard matrices commutes with $S_2$, where $|a|=|b|=1/\sqrt{2}$ and $Re (a^*b)=0$:
$$
\begin{pmatrix}
a & b \\
b & a
\end{pmatrix}
$$
For $n \geq 3$, the centraliser of $S_n$ is the group of circulant matrices of the type described in Lemma~\ref{unitarymatrixconditions}; the conditions for such a matrix to be unitary were given there.
Setting $|a|=|b|$ in (\ref{equalityforb}), it follows that
$|a| = 1/\sqrt{n}.$ This is compatible with (\ref{inequalityfora}) only for $n \leq 4$.
\end{proof}

\subsubsection{Showing nonexistence}
\label{sec:existence}
In this section we provide a method for proving nonexistence of an equivariant unitary error basis for some representations $(G,\rho)$.

\begin{definition}\label{geqonbdefn}
A representation $\rho: G \to U(n)$ on some $n$-dimensional vector space $V$ is \emph{monomial}~\cite{Curtis1966} if it admits an orthonormal basis of $\mathbb{C}^n$ in which all the matrices $\rho(g), g \in G$ are monomial.
\end{definition}

$G$-equivariant unitary  error bases for $(G,\rho)$ are $G$-equivariant orthonormal bases of $\End(V) \simeq \rho \otimes \rho^*$, all of whose elements are unitary maps. Therefore, if $(G,\rho)$ admits an equivariant UEB, then $\rho \otimes \rho^*$ must be monomial. It is also well known~\cite{Curtis1966} that every monomial representation is a direct sum of representations induced from one-dimensional representations of subgroups. We therefore obtain the following proposition.
\begin{proposition}
If $(G,\rho)$ admits an equivariant UEB, then $\rho \otimes \rho^*$ must split as a direct sum of representations induced from one-dimensional representations of subgroups.
\end{proposition}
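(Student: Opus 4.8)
The plan is to chain together the two facts that the excerpt hands us essentially for free. First I would establish that if $(G,\rho)$ admits an equivariant UEB, then $\rho \otimes \rho^*$ is monomial. The key observation is that an equivariant UEB $\{U_i\}_{i \in I}$ is, by definition, an orthonormal basis of $\End(V) \simeq \rho \otimes \rho^*$ (orthonormal under the normalised trace inner product) whose elements are permuted up to a phase by the conjugation action $g \cdot U = \rho(g)^\dag U \rho(g)$ of $G$. I would make precise that this conjugation action is exactly the action of $G$ on $\rho \otimes \rho^*$, so the $U_i$ form a $G$-equivariant orthonormal basis. Since each $\rho(g)$ sends each basis element $U_i$ to a scalar multiple of some basis element $U_{\tau(i,g)}$, the matrix of $\rho(g) \otimes \rho^*(g)$ in this basis has exactly one nonzero entry in each row and column; that is, it is a monomial matrix. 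Hence $\rho \otimes \rho^*$ is monomial in the sense of Definition~\ref{geqonbdefn}, with the equivariant UEB itself serving as the required orthonormal basis.

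Second, I would invoke the cited structural result from representation theory: every monomial representation decomposes as a direct sum of representations induced from one-dimensional representations of subgroups. This is the classical fact referenced as~\cite{Curtis1966}, which the excerpt explicitly permits us to assume. Combining the two steps immediately yields the proposition: $\rho \otimes \rho^*$ is monomial, and therefore splits as the stated direct sum of induced representations.

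Since the excerpt has already stated both ingredients in the paragraph preceding the proposition, the proof is essentially a one-line deduction, and there is no serious obstacle. The only point requiring a little care is the verification that the conjugation action on $\End(V)$ really coincides with the tensor action on $\rho \otimes \rho^*$ and that ``permuted up to a phase'' is literally the monomial condition; I would spell this out briefly to make the logical link airtight, but it is a routine identification rather than a genuine difficulty. No substantial calculation is needed beyond unwinding these definitions.
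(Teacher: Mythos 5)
Your proposal is correct and follows the paper's own argument exactly: the equivariant UEB is a $G$-equivariant orthonormal basis of $\End(V) \simeq \rho \otimes \rho^*$ whose elements are permuted up to a phase, so the representation is monomial in that basis, and the cited classical result~\cite{Curtis1966} then gives the decomposition into representations induced from one-dimensional representations of subgroups. The only detail you spell out beyond the paper --- identifying the conjugation action with the tensor action on $\rho \otimes \rho^*$ --- is a routine verification the paper leaves implicit, so there is no substantive difference.
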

This condition is straightforward to check using characters in a computer algebra program such as GAP~\cite{GAP2016}. As an example, we exhibit a 3-dimensional representation for which no equivariant UEBs exist.
\begin{example}\label{noa5gequeb}
We show that the 3\-dimensional irreducible representations of the alternating group $A_5$ admit no equivariant unitary error basis. In Table~\ref{chartablea5} are shown the characters of the induced monomial representations of the alternating group $A_5$ of dimension less than or equal to 9.
\begin{table}
\centeringforarxiv{\version}
\caption{Simple monomial representations for $A_5$.}
\label{chartablea5}
\begin{tabular}{ccccc}
\hline
\hline
 $()$ & $(1,2) (3,4)$ & $(1,2,3)$ & $(1,2,3,4,5)$ & $(1,2,3,5,4)$ \\
\hline
 1 & 1 & 1 & 1 & 1    \\
 5 & 1 & -1 & 0 & 0   \\
 5 & 1 & 2 & 0 & 0    \\
 6 & -2 & 0 & 1 & 1   \\
 6 & 2 & 0 & 1 & 1      \\
 \hline
 \hline
\end{tabular}
\end{table}
We see that $\chi_{V_i}(1,2,3,4,5) = (\pm \sqrt{5}+1)/2$; this means that $\chi_{V_i \otimes V_i^*}(1,2,3,4,5)$  has a multiple of $\sqrt{5}$ as a summand for both of $i=1,2$. However, all the monomial characters of $A_5$ of degree less than $9$ have integer values. $\chi_{V_i \otimes V_i^*}$ can therefore not be decomposed as a $\mathbb{Z}_+$-linear combination of monomial characters.
\end{example}
\finitepraacknowledgements{\version}


\biblstyle{\version}
\bibliography{FiniteGroupTeleportation}


\appendix

\section{Existence of $G$-invariant maximally entangled states}
\label{sec:invariantstates}

\noindent Here we prove the result stated in Remark~\ref{rem:entangledstatealignment}.

\begin{definition}
A state $\omega$ of a $G$-representation is \textit{invariant up to a phase} if $g \cdot \omega = \theta(g) \omega$ for some homomorphism $\theta: G \to U(1)$.
\end{definition}

\begin{lemma}
Let $A,B$ be $G$-representations of identical dimension. A maximally entangled pure state $\omega \in A \otimes B$ invariant up to a phase exists iff $A \simeq \theta \otimes B^*$ for some $\theta:G \to U(1)$.
\end{lemma}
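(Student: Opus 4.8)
The plan is to translate the statement into the standard correspondence between bipartite pure states and matrices, under which maximally entangled states become rescaled unitaries and the tensor-product $G$-action becomes an intertwiner condition.

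First I would fix orthonormal bases of $A$ and $B$ and write a general vector as $\omega = \sum_{ij} M_{ij}\ket{i}_A\ket{j}_B$, identifying $\omega$ with the matrix $M$. I would then recall the standard fact that a normalised $\omega$ is maximally entangled precisely when $\sqrt{d}\,M$ is unitary, where $d = \dim A = \dim B$; concretely, the reduced state on $A$ is $M M^\dagger$, so maximal mixing is the condition $M M^\dagger = \tfrac{1}{d}\mathbbm{1}$. Next I would compute the $G$-action in this picture. Writing the action on $A\otimes B$ as $\rho_A(g)\otimes\rho_B(g)$, a short index computation shows that it sends the matrix $M$ to $\rho_A(g)\,M\,\rho_B(g)^{\T}$. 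Using unitarity of $\rho_B$ together with the identity $\rho_B(g)^{\T} = \rho_B^*(g)^{-1}$ for the contragredient representation $\rho_B^*$, the condition that $\omega$ be invariant up to a phase, $(\rho_A(g)\otimes\rho_B(g))\omega = \theta(g)\omega$, becomes
$$
\rho_A(g)\,M = M\,\big(\theta(g)\,\rho_B^*(g)\big) \qquad \forall g \in G.
$$
This says exactly that $M$ is an intertwiner from the representation $\theta\otimes B^*$ (that is, $g\mapsto \theta(g)\rho_B^*(g)$) to $A$. Along the way I would note that $\theta$ is automatically a homomorphism $G\to\U(1)$: applying the group law to $\omega$ forces $\theta(gh)=\theta(g)\theta(h)$, and norm-preservation forces $|\theta(g)|=1$.

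For the forward direction this finishes the argument: if such an $\omega$ exists then $M$ is a nonzero multiple of a unitary, hence an invertible intertwiner, so $A \simeq \theta\otimes B^*$. For the converse, given an isomorphism $A \simeq \theta\otimes B^*$ I would produce a \emph{unitary} intertwiner $M$ and take $\omega$ to be the state associated to $\tfrac{1}{\sqrt d}\cdot$(the unitary); the defining intertwiner relation then yields invariance up to the phase $\theta$, and $\sqrt{d}\,M$ unitary makes $\omega$ maximally entangled.

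The one step requiring genuine care is this last unitarisation: an abstract isomorphism of representations need not be unitary, so I would invoke the standard averaging/polar-decomposition argument, namely that the unitary factor in the polar decomposition of an intertwiner between unitary representations is again an intertwiner. Everything else is bookkeeping with the transpose-versus-dagger conventions hidden in the definition of the dual representation, which is what makes the phase land on $B^*$ exactly as in the statement.
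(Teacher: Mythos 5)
Your proof is correct, and one half of it is exactly the paper's argument: for the direction ``state $\Rightarrow$ isomorphism'', the paper also writes the maximally entangled state as $\sum_i \ket{i}\otimes V\ket{i}$ with $V$ unitary, pushes the action through the transpose to get $\rho_B(g)\,V\,\rho_A(g)^{\T} = \theta(g)\,V$, and reads off $\rho_B(g) = \theta(g)\,V\rho_A(g)^*V^\dagger$; your intertwiner reformulation $\rho_A(g)M = M(\theta(g)\rho_B^*(g))$ is the same computation in different clothing, with the small bonus that you note $\theta$ is forced to be a homomorphism into $\U(1)$ (the paper builds this into its definition of ``invariant up to a phase''). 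Where you genuinely diverge is the converse. The paper disposes of it in one sentence by taking $\omega$ to be the duality unit $\eta:\mathbbm{1}\to\theta^*\otimes A\otimes B$ witnessing $\theta^*\otimes A\simeq B^*$, which silently assumes one can choose that duality so the unit is (after normalisation) a \emph{maximally entangled} state --- i.e.\ that the isomorphism can be taken unitary. Your polar-decomposition step makes precisely this point explicit: an abstract isomorphism of representations need not be unitary, but the unitary factor of an intertwiner between unitary representations is again an intertwiner, so a unitary $M$ exists and $\omega = \tfrac{1}{\sqrt d}\sum_{ij}M_{ij}\ket{i}\ket{j}$ is maximally entangled and invariant up to the phase $\theta$. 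So your version is more elementary and fills in a step the paper leaves implicit, at the cost of the paper's brevity; the categorical phrasing, for its part, generalises more cleanly to settings where one works directly with dualities in a category of representations. No gaps.
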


\begin{proof}
Suppose the representation $A$ is the dual of $B$ up to a character $\theta$. Then let $\omega$ be the unit $\eta: \mathbbm{1} \to \theta^* \otimes A \otimes B$ witnessing the duality $\theta^* \otimes A \simeq B^*$. In the other direction, suppose there exists a state stabilised up to a phase. Any maximally entangled state is of the form $$ \sum_i \ket{i} \otimes X \ket{i} $$ for some orthonormal basis $\{ \ket{i} \}$ and unitary $X$. Working in that basis we have the following, for all $g \in G$, and where $\rho_A(g)^T$ is the transpose in the basis $\{\ket{i}\}$:
\begin{align*}
g \cdot \sum_i \ket{i} \otimes V \ket{i} &= \sum_i \rho_A(g) \ket{i} \otimes \rho_B(g) V \ket{i}  \\
&= \sum_i \ket{i} \otimes \rho_B(g) V \rho_A(g)^T \ket{i}
\end{align*}
It follows that $\rho_B(g) V \rho_A(g)^T = \theta(g) V$, and therefore that $\rho_B(g) = \theta(g) V \rho_A(g)^* V^\dagger$ for all $g$, where $\rho_A(g)^*$ is the complex conjugate matrix. The result follows by definition of the dual representation.
\end{proof}

\end{document}